\title{EF(X) Orientations: A Parameterized Complexity Perspective}
\author[1,2]{Sotiris Kanellopoulos \orcidlink{0009-0006-2999-0580} }
\author[3]{Edouard Nemery \orcidlink{0009-0007-6977-9330} }
\author[1,2]{Christos Pergaminelis \orcidlink{0009-0009-8981-3676} }
\author[2,4]{\\ Minas Marios Sotiriou \orcidlink{0009-0007-3016-980X} }
\author[3]{Manolis Vasilakis \orcidlink{0000-0001-6505-2977} }
\affil[1]{National Technical University of Athens, Greece}
\affil[2]{Archimedes, Athena Research Center, Greece}
\affil[3]{Universit\'{e} Paris-Dauphine, PSL University, CNRS UMR7243, LAMSADE, Paris, France}
\affil[4]{Athens University of Economics and Business}
\affil[ ]{\{s.kanellopoulos, chr.pergaminelis\}@athenarc.gr, edouard.nemery@dauphine.psl.eu, minas\_marios@outlook.com, emmanouil.vasilakis@dauphine.eu}
\date{}
\begin{document}

\maketitle

\begin{abstract}
    The concept of fair orientations in graphs was introduced by Christodoulou, Fiat, Koutsoupias and Sgouritsa~\cite{EFXgraphs}, naturally modeling fair division scenarios in which resources are only contested by neighbors. In this model, vertices represent agents and undirected edges represent goods; edges have to be oriented towards one of their endpoints, i.e., allocated to one of their adjacent agents. Although EFX orientations (\emph{envy-free up to any good}) have been extensively studied in this setting, EF orientations (\emph{envy-free}) remain unexplored. In this work, we initiate their study primarily under the lens of parameterized complexity, presenting various tractable cases, hardness results, and parameterizations. Our results concern both simple graphs and multigraphs. Interestingly, many of our results transfer to EFX orientations, thus complementing and improving upon previous work; notably, we settle an open question regarding the structural parameterized complexity of the latter problem on graphs of polynomially-bounded valuations (Deligkas, Eiben, Goldsmith and Korchemna~\cite{ijcai/DeligkasEGK25}).
We also show that EF orientations are tractable in cases in which EFX orientations are not, particularly for binary valuations. We achieve this through a complete characterization for EF orientations with binary valuations, roughly stating that an EF orientation exists if and only if every connected component of the graph satisfies one among four graph properties. Lastly, we consider \emph{charity} in the orientation setting, establishing algorithms for finding the minimum amount of edges that have to be removed from a graph in order for EF(X) orientations to exist.

\end{abstract}

\clearpage

\section{Introduction}\label{sec:introduction}




    The fair division of indivisible goods is a fundamental problem in computer science, with numerous applications, such as distribution of inheritance, scheduling jobs, and allocating computational resources to machine learning models. The concept of fair division of \emph{divisible} resources was introduced by Steihaus~\cite{h__steihaus_1948}; in recent years, interest has shifted towards \emph{indivisible} goods, due both to the aforementioned applications and to several intriguing open questions surrounding the study of \emph{envy}. Since envy-free (EF) allocations are not guaranteed to exist, research has focused on \emph{envy minimization} (e.g.,~\cite{ISAAC_envy_ratio,LiptonEtAl,Nguyen_Rothe}),
    and \emph{relaxations} of EF, such as EF1 \cite{budish} and EFX \cite{EFXCara,Laurent_EFX}. The existence of EFX allocations remains a major open question in fair division (cf.~\cite{FD_survey}),
    with some results for two~\cite{PlautRough} and three agents~\cite{akrami2022efxallocationssimplificationsimprovements,CGM24}, and further work limiting agents' valuations~\cite{TwoValuedInstanses,babaioff2020fairtruthfulmechanismsdichotomous,DBLP:conf/aaai/HosseiniSVX21,efxexistsfor3typesofadditiveagents}.

    A restricted version of the problem was introduced by Christodoulou, Fiat, Koutsoupias and Sgouritsa~\cite{EFXgraphs}, with a graph model where the vertices represent agents and the edges represent goods; it was proven in the same paper that EFX allocations always exist in this restricted version for simple graphs. This model has been extensively studied since, with the existence of EFX allocations also being studied in multigraphs \cite{AfshinmehretTriangleFree,AfshinmehrDKMR25,bhaskar_et_al:LIPIcs.FSTTCS.2025.15_efxallocationsmultigraphclasses,kaviani2024envyfreeallocationindivisiblegoods,OntheExistenceofEFXAllocationsinMultigraphs}. An interesting concept in this model is that of \emph{EFX orientations}, i.e., EFX allocations where edges have to be allocated to one of their endpoints, naturally modeling practical applications where resources are contested only by neighbors. EFX orientations have also received significant attention in recent years, with their parameterized complexity in particular being studied in \cite{AfshinmehrDKMR25,sagt/BlazejGRS25,ijcai/DeligkasEGK25}.

    In spite of this, EF orientations remain largely unexplored.
    In this work we study them primarily under the lens of parameterized complexity, and present a plethora of tractable special cases, parameterized algorithms, and hardness results, including a complete characterization for the case of binary valuations (\cref{theorem:connected_component_props}). Many of our results transfer to EFX orientations, thus complementing and improving upon results from previous work on EFX orientations~\cite{AfshinmehrDKMR25,sagt/BlazejGRS25,ijcai/DeligkasEGK25}; see \cref{sec:contributions} for details.

    Another interesting relaxation of EF(X) that we consider in this work concerns allocations in which not every item is allocated to an agent, i.e., EF(X) with \emph{charity} \cite{CGH19}.
    To the best of our knowledge, we are the first to consider charity in the graph setting of \cite{EFXgraphs}, establishing algorithms for finding the minimum number of edges that have to remain unoriented (i.e., donated to charity) in order for EF(X) orientations to exist.

\subsection{Related Work}

\paragraph{Allocations.} Regarding EF allocations, NP-completeness and inapproximability results are shown by Lipton, Markakis, Mossel and Saberi~\cite{LiptonEtAl}.
Recent results for the existence of EFX allocations include the following: Plaut and Roughgarden~\cite{PlautRough} show existence for two agents with general monotone valuations; Chaudhury, Garg and Mehlhorn~\cite{CGM24} for three agents with additive valuations; Akrami et al.~\cite{akrami2022efxallocationssimplificationsimprovements} improve this to a slightly more general class; Prakash, Ghosal, Nimbhorkar and Varma~\cite{efxexistsfor3typesofadditiveagents} show existence when agents have one of three additive valuations.

\paragraph{Orientations.}
The existence of EFX orientations is known to be NP-complete even for symmetric valuations by Christodoulou et al.~\cite{EFXgraphs}.
Deligkas, Eiben, Goldsmith and Korchemna~\cite{ijcai/DeligkasEGK25} prove that the problem remains NP-complete even for graphs with vertex cover size $8$ or multigraphs with $10$ vertices, while also identifying one of the few known fixed-parameter tractable cases through the parameterization by \emph{slim tree-cut width} (defined in \cite{algorithmica/GanianK24}).
Blazej, Gupta, Ramanujan and Strulo~\cite{sagt/BlazejGRS25} study the complexity of the problem in relation to how close the graph is to being bipartite,
proving that it remains NP-complete for (a subclass of) binary instances,
while also showing a linear-time algorithm for symmetric binary instances of constant treewidth.
Afshinmehr, Danaei, Kazemi, Mehlhorn and Rathi~\cite{AfshinmehrDKMR25} and Hsu~\cite{ecai/Hsu25} further study EFX orientations in (bipartite) multigraphs. 
Zeng and Mehta~\cite{ifaamas/ZengM25} relate the existence of EFX orientations to the graph's chromatic number.
EFX orientations have also been studied in the setting of chores, i.e., negatively valued items, and mixed manna, i.e., goods and chores \cite{hsu2025polynomialtimealgorithmefxorientationsChores,ijcai/ZWL024}.
Despite all the aforementioned work for EFX orientations, to the best of our knowledge,
the only work relevant to EF orientations appears in \cite{aldt/MisraS24},
where the authors study EF \emph{allocations} in simple graphs,
with a polynomial-time algorithm for binary valuations also transferring to \emph{orientations}.

\paragraph{Charity.}
The concept of charity was first introduced by Caragiannis, Gravin, and Huang~\cite{CGH19}, in the setting of EFX allocations.
Chaudhury, Kavitha, Mehlhorn and Sgouritsa~\cite{CKMS21} proved that the number of unallocated items that guarantees existence of EFX for general monotone valuations is upper bounded by $n-1$, where $n$ is the number of agents. This bound was later reduced to $n-2$ for additive valuations \cite{AlmostFullEFXforFourAgents}, and extended to general monotone valuations~\cite{mahara2021extensionadditivevaluationsgeneral}. For 4 agents, this number is 1 \cite{AlmostFullEFXforFourAgents}.
To the best of our knowledge, charity has never been studied in the context of fair orientations prior to this work, although one could obtain a simple charity bound of $|E|/2$ for EFX orientations in simple graphs by removing edges to obtain a bipartite graph~\cite{erd/Erdos65},
which is known to always admit an EFX orientation \cite{ifaamas/ZengM25}.


\subsection{Our Contributions}\label{sec:contributions}

We start by considering in \cref{subsec:EFo:linear} the {\EFo} problem%
\footnote{For formal definitions of all the problems in question we refer to \cref{sec:preliminaries}.} in multigraphs with binary valuation functions.
We provide a complete characterization of EF orientations in this case (\cref{theorem:connected_component_props}), roughly stating that an EF orientation exists if and only if every connected component of the graph satisfies at least one among four graph properties. Our characterization implies a linear-time algorithm for {\EFo} in the case of binary valuations; in fact, the algorithm can be extended to the more general {\EFmc} problem (i.e., finding the minimum amount of edges that have to be removed for an EF orientation to exist).
This result contrasts the NP-hardness of {\EFXo} in the analogous setting,
which persists even for symmetric instances on simple graphs that are two edge-deletions away from being bipartite~\cite{sagt/BlazejGRS25}. It also generalizes to multigraphs the algorithm of Misra and Sethia~\cite{aldt/MisraS24} for \EFo\ in simple graphs.

In \cref{subsec:EF:hardness} we present various hardness results for {\EFo}.
First we show that the problem remains NP-hard even on $3$-regular simple graphs with weights in $\{1,2\}$ and symmetric valuations, through a reduction from an appropriate variant of SAT.
Given this hardness for very restrictive valuations,
we next ask whether restricting the \emph{structure} of the input graph allows for more efficient algorithms.
Via a simple reduction from \textsc{Partition} we show that, for arbitrary weights, 
{\EFo} remains NP-hard even for simple graphs of vertex cover number~$2$ and multigraphs with~$2$ vertices.
Therefore, we move on to consider instances with polynomially-bounded weights, in which case
we prove W[1]-, XNLP-, and XALP-hardness for the parameterization by vertex cover, pathwidth, and treewidth respectively (\cref{thm:EF:w1h}) through a reduction from the {\TOO} problem~\cite{wg/BodlaenderCW22,dam/BodlaenderS26}. We thus argue that the problem remains highly intractable even in structured inputs.
As a byproduct of our reduction in \cref{thm:EF:w1h} and in conjunction with a recent result on the fine-grained complexity of \UBP~\cite{bringmann2026tightsethbasedlowerbounds}, we obtain a $n^{o(\vc)}$ ETH-based lower bound,
where $\vc$ denotes the vertex cover number of the input graph, that holds even for simple graphs.

Although the hardness results of \cref{subsec:EF:hardness} seem very discouraging,
we next identify restricted cases on which we are able to obtain fast algorithms.
First, we consider instances on simple graphs with at most $k$ \emph{heavy} edges, that is, edges that have valuation at least $2$ for an endpoint;
this naturally generalizes the binary-valuation regime considered in \cref{subsec:EFo:linear}, albeit restricted to simple graphs.
Here we develop a $2^k n^{\bO(1)}$ algorithm, which we complement with a tight SETH-based lower bound (\cref{thm:seth-ef,thm:EFo:Heavy}).
We view this as a first step towards generalizing the algorithm of \cref{thm:EFo:linear_algo} to an FPT algorithm parameterized by the
number of heavy edges. 
Second, using dynamic programming (DP), we show that {\EFo} can be solved in time $(n+m)^{\bO(\tw)}$ for polynomially-bounded weights,
where $\tw$ denotes the treewidth of the graph and $m$ the number of edges.
Our algorithm works for the more general {\EFmc} problem in multigraphs with non-symmetric valuations,
and matches the lower bound of \cref{thm:EF:w1h}.

{\EFo} is arguably a very natural, understudied problem and our results, apart from being interesting in their own right,
additionally pave the way to advance the state-of-the-art for the closely related, and better understood, {\EFXo} problem.
In particular, in \cref{subsec:EF->EFX} we present a reduction from {\EFo} to {\EFXo} that allows us to transfer most hardness results of \cref{subsec:EF:hardness}
to the latter. Consequently, we obtain NP-hardness for simple graphs
of vertex cover number~$4$ and multigraphs of~$4$ vertices, improving upon the hardness results of Deligkas et al.~\cite{ijcai/DeligkasEGK25} for constants~$8$ and~$10$ respectively, as well as a hardness result by Afshinmehr et al.~\cite{AfshinmehrDKMR25} for multigraphs with~$8$ vertices.
More importantly, we obtain a $n^{o(\vc)}$ ETH-based lower bound for simple symmetric instances with polynomially-bounded weights.
As a matter of fact, in \cref{subsec:EFXo:tw} we yet again employ DP and obtain an $(n+m)^{\bO(\tw)}$ algorithm,
this time for the {\EFXmc} problem; once again, our algorithm works for multigraphs and non-symmetric valuations.
Since this matches the aforementioned lower bound, we settle the open question of Deligkas et al.~\cite{ijcai/DeligkasEGK25} on the complexity of {\EFXo} for instances with
polynomially-bounded weights and small vertex cover, and paint a complete picture
on the parameterized complexity of {\EFXo} parameterized by treewidth, arguably the most well-studied structural parameter.

We refer to \cref{table:algo,table:hardness} for a summary of our algorithms and hardness results respectively.

\newcommand{\lightrule}{\arrayrulecolor{black!30}\hline\arrayrulecolor{black}}
{
\begin{table}[ht]
\centering
\begin{tabular}{|c|c!{\color{black!30}\vrule}c|}
\hline
& \textbf{Running time} & \textbf{Conditions} \\
\hline
 EF & $\bO(n+m)$ & Binary valuations \\
\lightrule
\multirow{2}{*}{EF} & \multirow{2}{*}{$2^k n^{\bO(1)}$} & At most $k$ heavy edges \\  &  & Simple graph \\
\lightrule

\multirow{2}{*}{EF(X)} & \multirow{2}{*}{$W^{\bO(\tw)}(n+m)^{\bO(1)}$}
 & $W$: \emph{max. shared weight}\\ & &  $\tw$: treewidth\\
\lightrule

\multirow{2}{*}{EF(X)} & \multirow{2}{*}{$(n+m)^{\bO(\tw)}$}
 & Poly weights \\ &  & $\tw$: treewidth \\

\hline
\end{tabular}
\caption{Our algorithms for EF and EFX \textsc{Orientation} in (multi)graphs with $n$ vertices and $m$ edges (\cref{subsec:EFo:linear,subsec:EFo_heavy,subsec:EFo_shared_weight,subsec:EFXo:tw}). All algorithms except the one in the second row also extend to the respective minimum-charity problem.}
\label{table:algo}
\end{table}
}

\begin{table}[ht]
\centering
\begin{tabular}{|c|c!{\color{black!30}\vrule}c|}
\hline
& \textbf{Hardness/} & \multirow{2}{*}{\textbf{Conditions}} \\
& \textbf{Lower Bound} & \\
\hline
 \multirow{2}{*}{EF} & no $(2-\varepsilon)^{k}n^{\bO(1)}$ time algorithm & At most $k$ heavy edges \\ & (under SETH) & Polynomial weights \\
\lightrule
\multirow{2}{*}{EF} & \multirow{2}{*}{NP-complete} & 3-regular graphs \\ & & Weights in \{1,2\}    \\
\lightrule
\multirow{2}{*}{EF} & NP-complete & Vertex cover 2 \textbf{or}\\
& (weakly) & Multigraph with 2 vertices    \\
\lightrule
\multirow{2}{*}{EFX} & NP-complete & Vertex cover 4 \textbf{or}\\
& (weakly) & Multigraph with 4 vertices    \\
\lightrule
\multirow{4}{*}{EF(X)} & W[1]-hard & Parameter: vertex cover number \\
\arrayrulecolor{black!30}\cline{2-3}\arrayrulecolor{black}
 & XNLP-hard & Parameter: pathwidth   \\
\arrayrulecolor{black!30}\cline{2-3}\arrayrulecolor{black}
 & XALP-hard & Parameter: treewidth \\ \arrayrulecolor{black!30}\cline{2-3}\arrayrulecolor{black} & & \textbf{All three:} Polynomial weights   \\
\lightrule
\multirow{2}{*}{EF(X)} & no $n^{o(\vc)}$ time algorithm & $\vc$: vertex cover number \\ & (under ETH) & Polynomial weights \\
\hline
\end{tabular}
\caption{Our hardness results for EF and EFX \textsc{Orientation} (\cref{subsec:EF:hardness,subsec:EF->EFX} respectively). All of these results hold even for \textbf{simple symmetric} instances, except the ``multigraph'' results in the third/fourth rows, which are restricted just to symmetric instances.}
\label{table:hardness}
\end{table}

\section{Preliminaries}\label{sec:preliminaries}

Throughout the paper we use standard graph notation \cite{books/Diestel17}
and we assume familiarity with the basic notions of parameterized complexity (cf. \cref{parameterized_complexity}). 
For a graph $G$ and $S \subseteq V(G)$ a subset of its vertices,
$G[S]$ denotes the subgraph induced by $S$ while $G - S$ denotes $G[V \setminus S]$.
For $x, y \in \mathbb{Z}$, let $[x, y] = \setdef{z \in \mathbb{Z}}{x \leq z \leq y}$ while $[x] = [1,x]$.
Proofs of statements marked with $(\appsymb)$ are deferred to the appendix.

Graphs in the context of this paper are \emph{multigraphs}, i.e., parallel edges between two vertices may exist, unless we explicitly state that a graph is \emph{simple}. All graphs considered are undirected and without loops.
We denote the set of edges between vertices $u,v$ of a multigraph as $E_{uv}$, and the set of edges incident to $u$ as $E_u$.

\subsection{Model and Problem Definitions}

We are given a set of agents $V=\{1,\ldots,n\}$ and a set of items $E$ with $|E|=m$. Each agent $i \in [n]$ has a valuation function $v_i \colon 2^E \to \mathbb{N}$. An \emph{allocation} is a partition of $E$ into bundles $X_1,\ldots,X_n$, where agent $i$ receives the bundle $X_i$. A \emph{partial allocation} is an allocation of some subset of $E$. 

In the context of this work, we only consider \emph{additive} valuation functions, as is standard for most of the fair orientation literature:
each $v_i$ is defined over $E$, and can be extended to $2^E$ with $v_i(X):=\sum_{e\in X}v_i(e)$, $\forall X\subseteq E$. As is standard, we assume $v_i(\varnothing)=0$ for all $i\in V$.

\begin{definition}[Envy]\label{def:envy}
    Given an allocation $X_1,\ldots,X_n$, we say that agent $i$ \emph{envies} $j$ if
    $v_i(X_i) < v_i(X_j)$. We say that $i$ \emph{strongly envies} $j$ if $v_i(X_i) < v_i(X_j \setminus e)$ for some $e \in X_j$.
\end{definition}

\begin{definition}[Fair allocations]\label{def:fair_alloc}
    We say that an allocation is \emph{envy-free (EF)} if no agent envies another agent. We say that an allocation is \emph{envy-free up to any good (EFX)} if no agent strongly envies another agent. 
\end{definition}

In this paper, we consider the model introduced by \cite{EFXgraphs}: we are given sets of agents $V$ and items $E$ that form a multigraph $G=(V,E)$, where for all edges $e\in E_{ij}$ $(i,j \in V)$ it holds that $v_u(e)=0$, $\forall u\in V\setminus\{i,j\}$.\footnote{For simplicity, we use the notation $v_u(e)$ instead of $v_u(\{e\})$ to denote the valuation of a single edge $e$ for vertex $u$.} Throughout the paper we may call the agents \textit{vertices} and the goods \textit{edges}. We denote $n=|V|$ and $m=|E|$.

An \emph{orientation} is an allocation $X_1,\ldots,X_n$, where for all $e\in E_{ij}$ $(i,j \in V)$ it holds that $e\in X_i$ or $e\in X_j$. A \emph{partial orientation} is an orientation of some subset of~$E$. Given an orientation, we say that $e\in E_{ij}$ $(i,j \in V)$ is oriented towards~$i$ if $e\in X_i$, i.e., if $i$ receives~$e$. An EF(X) orientation is an EF(X) allocation that is an orientation.

We now define the problems which we study in this work.

\defproblem{\EFo}
{A multigraph $G=(V,E)$ and values $v_i(e), v_j(e)$ for each $e\in E_{ij}$ $(i,j \in V)$.}
{Determine whether $G$ admits an EF orientation.}\label{def:EFo}

\defproblem{\EFmc}
{A multigraph $G=(V,E)$ and values $v_i(e), v_j(e)$ for each $e\in E_{ij}$ $(i,j \in V)$.}
{Find the smallest integer $k\geq 0$ s.t.~$G$ admits a partial EF orientation of size $|E|-k$.} \label{def:mc_EFo}

Intuitively, the \EFmc\ problem asks for the minimum amount of edges that must be removed from a graph for an EF orientation to exist. Note that any algorithm for \EFmc\ also answers \EFo, hence, throughout the paper we may omit an \EFo\ algorithm and directly state the (more general) \EFmc\ algorithm.

The \EFXo\ and \EFXmc\ problems are defined similarly.

We call an instance of any of these problems
\begin{itemize}
    \item \emph{simple}, if for all $i,j \in V$: $|E_{ij}|\leq 1$,
    \item \emph{symmetric}, if for all $i,j \in V$ and all $e\in E_{ij}$: $v_i(e)=v_j(e)$. In that case, it suffices to store one valuation function $v \colon E \to \mathbb{N}$, where for all $i,j \in V$ and all $e\in E_{ij}$: $v(e)=v_i(e)=v_j(e)$.
\end{itemize}

Throughout the paper, we may refer to the valuation functions in the context of graphs as \emph{weights}: each edge is represented by two weights in the general case (one for each of its endpoints), however one weight per edge suffices if the instance is symmetric.

\begin{toappendix}
    \section{Parameterized Complexity}\label{parameterized_complexity}

In this section we briefly review some basic notions of parameterized complexity
which we use throughout the paper;
for a more thorough exposition we refer to standard textbooks, e.g.,~\cite{books/CyganFKLMPPS15,books/DowneyF13}.

An instance of a \emph{parameterized problem} is a pair
$(x,k)$, where $x$ is the main input (with size $n := |x|$) and
$k \in \mathbb{N}$ is the parameter.
We say that a parameterized problem is \emph{fixed-parameter tractable} (FPT)
and belongs to the class FPT if it admits an algorithm of running time
$f(k) \cdot n^{\bO(1)}$ for some computable function $f$.

Let $\Pi_1$ and $\Pi_2$ be two parameterized problems.
A \emph{parameterized reduction} (also known as an \emph{FPT reduction}) from $\Pi_1$ to $\Pi_2$ is an algorithm that, given an instance $(x,k)$ of $\Pi_1$,
produces an instance $(x',k')$ of $\Pi_2$ such that:
\begin{itemize}
    \item $(x,k)$ is a yes-instance of $\Pi_1$ if and only if $(x',k')$ is a yes-instance of $\Pi_2$,
    \item $k' \leq g(k)$ for some computable function $g$,
    \item the running time of the algorithm is $f(k) \cdot {|x|}^{\bO(1)}$ for some computable function $f$.
\end{itemize}
Notice that if $\Pi_2$ is fixed-parameter tractable (FPT), then so is $\Pi_1$.

We will use parameterized reductions to show hardness results for parameterized problems.
In particular, we define the class W[1] as the set of parameterized problems that are
FPT-reducible to \textsc{$k$-Clique} parameterized by $k$.%
\footnote{The \textsc{$k$-Clique} problem asks,
given a graph $G$ and an integer $k$, whether $G$ contains a clique of size at least $k$.}
It holds that FPT $\subseteq$ W[1], and it is widely believed that FPT $\neq$ W[1],
thus showing that a problem is W[1]-hard is strong evidence against fixed-parameter tractability~\cite{books/CyganFKLMPPS15,books/DowneyF13}.

Another complexity class considered in this paper is XNLP.
This class was recently introduced by~\cite{iandc/BodlaenderGNS24},
and it consists of parameterized problems decidable by a nondeterministic algorithm that,
on inputs of size $n$ with parameter $k$, uses time $f(k) \cdot n^{\bO(1)}$ and space
$f(k) \cdot \log n$, for some computable function $f$.
It holds that W[1] $\subseteq$ XNLP,
and in fact XNLP-hardness implies W[$t$]-hardness for all $t \in \mathbb{N}$.
In order to show XNLP-hardness results, we need to introduce the notion of
\emph{parameterized logspace reductions}, which are simply
parameterized reductions that use $\bO(g(k) + \log |x|)$ space,
where $g$ is a computable function.
In that case, if $\Pi_1$ is XNLP-hard and there is a parameterized logspace reduction from $\Pi_1$ to $\Pi_2$,
then $\Pi_2$ is also XNLP-hard~\cite{iandc/BodlaenderGNS24}.
Another class we use is XALP, which can be seen as the ``tree variant'' of XNLP,
where we additionally allow access to an auxiliary stack (with only top element lookup allowed)~\cite{iwpec/BodlaenderGJPP22}.
This class contains XNLP, and XALP-hardness can be shown via parameterized logspace reductions as well.
XNLP and XALP have been shown to be the ``natural home'' of many problems parameterized by pathwidth and treewidth respectively,
as many problems tend to be complete for each class respectively.

The \emph{treewidth} of a graph is a measure of how close the graph is to being a tree, and is arguably the most well-studied structural parameter.
A \emph{tree-decomposition} $\mathcal{T}$ of a graph $G$ is a pair $(T,\setdef{B_t}{t \in V(T)})$ where
$T$ is a tree and each node $t$ of $T$ is assigned a \emph{bag} $B_t \subseteq V(G)$,
such that: (i) every vertex $v \in V(G)$ is contained in some bag $B_t$; (ii) for every edge $\{u,v\} \in E(G)$ there exists $t \in V(T)$ with
$\{u,v\} \subseteq B_t$; and
(iii) for every $v \in V(G)$, the set $\setdef{t \in V(T)}{v \in B_t}$ induces a connected subtree of $T$.
The \emph{width} is $\max_{t \in V(T)} |B_t|-1$;
the \emph{treewidth} $\tw(G)$ is the minimum width over all tree-decompositions of $G$.
A tree-decomposition is \emph{nice} if each node of $T$ has one of the following types:
(i) \emph{Leaf} nodes $t$ where $B_t = \varnothing$;
(ii) \emph{Introduce} nodes $t$ with a unique child $t'$ such that $B_t = B_{t'} \cup \{v\}$ for some $v \in V(G)$;
(iii) \emph{Forget} nodes $t$ with a unique child $t'$ such that $B_{t'} = B_t \cup \{v\}$ for some $v \in V(G)$;
(iv) \emph{Join} nodes $t$ with exactly two children $t_1,t_2$ such that $B_t = B_{t_1} = B_{t_2}$.
It is known that given any tree
decomposition one can in linear time construct a nice tree decomposition of the same width.
Computing a tree-decomposition of minimum width is in FPT when parameterized by the treewidth~\cite{siamcomp/Bodlaender96,books/Kloks94,stoc/KorhonenL23},
and even more efficient algorithms exist for obtaining near-optimal tree-decompositions~\cite{focs/Korhonen21}.

We conclude this section by recalling two well-known hypotheses from the area of
fine-grained complexity~\cite{jcss/ImpagliazzoP01,jcss/ImpagliazzoPZ01}.
The \emph{Exponential Time Hypothesis} (ETH) posits that there exists a
constant $c>0$ such that no algorithm solves 3-SAT on $n$ variables and $m$
clauses in time $2^{c (n+m)} \cdot (n + m)^{\bO(1)}$.
The \emph{Strong Exponential Time Hypothesis} (SETH) states that for every
$\varepsilon>0$ there exists a constant $k \ge 3$ (depending only on $\varepsilon$)
such that no algorithm solves $k$-SAT on
$n$ variables in time $(2 - \varepsilon)^{n} \cdot n^{\bO(1)}$.
Both hypotheses constitute standard tools to obtain oftentimes tight (conditional) lower bounds.
With respect to parameterized complexity, they have been used to pinpoint the best parametric dependence possible for various parameterized algorithms~\cite{talg/LokshtanovMS18,siamcomp/LokshtanovMS18}.

\end{toappendix}

\section{EF Orientations}\label{section:EFo}

In this section we present our results concerning EF orientations. The following observation will be useful throughout the section.

\begin{observation}\label{obs:zero_value_goods}
    Edges with zero value for both endpoints can be removed from the input without affecting the existence of EF orientations.
    Similarly, edges with zero value for one endpoint and non-zero value for the other can always be oriented towards the latter without affecting the existence of an EF orientation.\footnote{Note that neither of these properties hold for EFX orientations.}
\end{observation}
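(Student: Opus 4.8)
The plan is to establish both claims by a single elementary mechanism: relocating or deleting an edge that is worthless to at least one of its endpoints alters the relevant bundle valuations only in a monotone, envy-preserving manner. The argument rests entirely on additivity of the valuation functions together with the model assumption from \cref{sec:preliminaries} that every edge $e \in E_{ij}$ satisfies $v_u(e)=0$ for all $u \notin \{i,j\}$, so that moving $e$ around affects only the two valuations $v_i(\cdot)$ and $v_j(\cdot)$, and only through $e$ itself.

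For the deletion claim I would fix an edge $e \in E_{ij}$ with $v_i(e)=v_j(e)=0$; by the model, in fact $v_u(e)=0$ for every $u \in V$. I would then prove the equivalence ``$G$ admits an EF orientation iff $G-e$ does'' in both directions. Given an EF orientation of $G$, delete $e$ from whichever bundle contains it; conversely, given an EF orientation of $G-e$, insert $e$ into an arbitrary endpoint's bundle. In either case additivity gives $v_u(X_w) = v_u(X_w \setminus \{e\})$ for all $u,w$, since $v_u(e)=0$. Hence every quantity $v_u(X_u)$ and $v_u(X_w)$ appearing in the envy condition of \cref{def:envy} is unchanged, so the (non-)envy relation between every ordered pair of agents is preserved, and EF transfers.

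For the reorientation claim I would fix $e \in E_{ij}$ with $v_i(e)=0$ and $v_j(e)=c>0$, and start from an arbitrary EF orientation $X_1,\dots,X_n$ in which $e$ points to $i$ (if it already points to $j$ there is nothing to prove). Reorienting $e$ toward $j$ yields $X_i' = X_i \setminus \{e\}$, $X_j' = X_j \cup \{e\}$, and $X_w'=X_w$ otherwise. The crux is to track exactly which valuations move. By additivity and $v_i(e)=0$ we get $v_i(X_i')=v_i(X_i)$ and $v_i(X_j')=v_i(X_j)$, so agent $i$'s view of every bundle is unchanged; for agent $j$ the only changes are $v_j(X_j')=v_j(X_j)+c$ (an increase) and $v_j(X_i')=v_j(X_i)-c$ (a decrease); and every other agent $w$ has $v_w(e)=0$, so none of its valuations change at all. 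Since nobody envied anybody in the original orientation, agent $i$ and each agent $w$ still envy no one (all their valuations are fixed), and agent $j$ envies no one because its own bundle value only rose while its value for the single affected competitor $i$ only dropped. Thus the reoriented allocation is again EF, which proves that an EF orientation with $e$ oriented toward $j$ exists whenever any EF orientation does.

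I expect no genuine obstacle; this is a bookkeeping statement. The only point demanding care is to check \emph{all} ordered pairs in \cref{def:envy} rather than just the two endpoints' own bundles---in particular the questions ``does $j$ now envy $i$?'' and ``does any third agent's valuation shift?''---each of which the additivity argument settles cleanly. The footnote's caveat that neither property survives for EFX is consistent with this: the strong-envy test $v_i(X_i)<v_i(X_j\setminus\{e'\})$ can be \emph{triggered} by removing a good, so a zero-value edge padding a bundle is no longer inert, and the monotonicity used above breaks down.
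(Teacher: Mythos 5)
Your proposal is correct: the paper states this as an observation with no accompanying proof, treating it as self-evident, and your exchange-and-bookkeeping argument (deleting/inserting a universally worthless edge, and reorienting a one-sided zero edge while checking all ordered pairs of agents via additivity) is precisely the implicit reasoning the paper relies on. The one point worth noting is that you correctly invoke the model assumption that $v_u(e)=0$ for $u \notin \{i,j\}$ to handle third agents, which is indeed the only place the graph structure of valuations is needed.
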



\subsection{Algorithms for Binary Valuations}\label{subsec:EFo:linear}

In this subsection we restrict the \EFo\ problem to binary valuations, i.e., we assume that $v_i(e),\ v_j(e) \in \{0,1\}$ for all $e\in E_{ij}$, $i,j\in V$.
By \cref{obs:zero_value_goods}, we remove from the input all edges of weight $0$ for both endpoints and orient all edges of weight $0$ for one endpoint and weight~$1$ for the other endpoint towards the latter.

\cref{obs:happy_vertex} is the key to proving (part of) our main theorem in this subsection (\cref{theorem:connected_component_props}), which is essentially a complete characterization of EF orientations with binary valuations.

\begin{observation}\label{obs:happy_vertex}
    For an \EFo\ instance with binary valuations, define $E'\subseteq E$ to be the subset of the edges with weight $1$ for both their endpoints. Suppose we have a partial EF orientation where some vertex $i$ has received an edge $e\in E_{ij}$ s.t. $v_i(e)=1$. Then, all edges $e\in E'_{ik}$ for $k\neq j$ can be oriented as follows while preserving envy-freeness. Let $p=|E'_{ik}|$.
    \begin{itemize}
        \item If $p$ is even, orient $p/2$ edges towards $i$ and $p/2$ edges towards $k$. It is clear that neither $i$ nor~$k$ envy each other.
        \item If $p$ is odd, orient $(p-1)/2$ edges towards $i$ and $(p-1)/2 +1$ edges towards $k$. It is clear that $k$ does not envy~$i$; however, it also holds that $i$ does not envy $k$ because $i$ has received an edge $e\in E_{ij}$ with $v_i(e)=1$ and $j\neq k$ by assumption.
    \end{itemize}
\end{observation}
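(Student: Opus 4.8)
The plan is to verify that extending the partial orientation by the prescribed split of the $p$ edges of $E'_{ik}$ keeps it envy-free, and the first step is to localise which envy relations can possibly change. Every edge in $E'_{ik}$ lies in $E_{ik}$ and hence has value $0$ for every agent $\ell\notin\{i,k\}$. Writing $a$ for the number of these edges sent to $i$ and $b=p-a$ for the number sent to $k$, the new allocation differs from the old only in that $X_i$ grows by $a$ edges and $X_k$ by $b$ edges, each worthless to every third party. I would first argue that no agent $\ell\notin\{i,k\}$ changes its envy status: its own bundle is untouched, and the newly oriented edges contribute $0$ to $v_\ell(X_i)$ and $v_\ell(X_k)$, so all of $\ell$'s comparisons are exactly as before. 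I would then note that $i$ cannot begin to envy any $\ell\notin\{i,k\}$ either, since $v_i(X_i)$ only increases while $v_i(X_\ell)$ is unchanged, and symmetrically for $k$. Thus the single relation left to check is the mutual envy between $i$ and $k$.

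For that, the key bookkeeping point is that, after the preprocessing of \cref{obs:zero_value_goods}, the only edges that can contribute to the cross-valuations $v_i(X_k)$ and $v_k(X_i)$ are the weight-$1$-for-both edges of $E'_{ik}$: any $i$--$k$ edge that was pre-oriented had value $0$ for one endpoint and was sent to the other, so it adds nothing across. Since the operation orients all the (hitherto unoriented) edges of $E'_{ik}$ in one batch, before the step $v_i(X_k)=v_k(X_i)=0$, and afterwards $v_i(X_k)=b$ and $v_k(X_i)=a$, each such edge being worth exactly $1$ to both $i$ and $k$. The hypothesis $j\neq k$ enters precisely here: the fixed edge $e$ is not an $i$--$k$ edge, so it contributes $1$ to $v_i(X_i)$ but $0$ to $v_i(X_k)$, giving the permanent buffer $v_i(X_i)\ge v_i(e)=1$.

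It then remains to plug the two cases into $v_i(X_i)\ge v_i(X_k)$ and $v_k(X_k)\ge v_k(X_i)$. In the even case $a=b=p/2$, both inequalities reduce to $v_i(X_i)\ge 0$ and $v_k(X_k)\ge 0$, which hold trivially. In the odd case $a=(p-1)/2$ and $b=a+1$: the inequality for $k$ reads $v_k(X_k)+b\ge a$ and is immediate since $b>a$, whereas the inequality for $i$ becomes $v_i(X_i)+a\ge b=a+1$, i.e.\ $v_i(X_i)\ge 1$, which is exactly the buffer supplied by $e$.

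I expect the only genuinely delicate point to be this odd case, since handing $k$ the one tie-breaking edge is safe solely because $e$ gives $i$ a unit of value that never shows up in $v_i(X_k)$ — which is why the statement insists $e$ lead to some $j\neq k$ and, implicitly, that no $E'_{ik}$ edge has already been oriented toward $k$ (so that $v_i(X_k)=0$ before the step); without either condition the odd split could create envy. The localisation argument and the even case are routine by comparison.
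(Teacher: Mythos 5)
Your proof is correct and follows essentially the same route as the paper's own inline justification: localise the only affected envy relation to the pair $\{i,k\}$, then verify the two arithmetic cases, with the edge $e\in E_{ij}$, $j\neq k$, supplying the unit buffer that makes the odd split safe for $i$. The one substantive remark concerns the point you yourself flag at the end. Your middle paragraph asserts $v_i(X_k)=v_k(X_i)=0$ before the step as a consequence of the preprocessing of \cref{obs:zero_value_goods}; that inference is not valid on its own, since a partial EF orientation may already contain oriented weight-$1$-for-both edges of $E_{ik}$ (this actually occurs when the observation is invoked in the proof of \cref{theorem:connected_component_props}, Property~1, where circuit edges between two vertices of $V_b$ are oriented before the remaining parallel edges are handled). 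Your closing paragraph then correctly re-frames this as a needed hypothesis rather than a derived fact, and you are right that it is genuinely necessary: take a triangle on $i,k,\ell$ oriented cyclically plus one unoriented edge in $E'_{ik}$; then $v_i(X_i)=v_i(X_k)=1$, the buffer edge $f\in E_{i\ell}$ qualifies, yet the prescribed odd split (extra edge to $k$) makes $i$ envious. So the observation, read literally, requires the implicit condition $v_i(X_i)\ge v_i(X_k)+1$ (guaranteed when $v_i(X_k)=0$) that the paper's justification also silently uses. Your attempt therefore matches the paper's argument and, to its credit, pinpoints the exact boundary of its validity; it would be cleaner to state that condition as an explicit hypothesis from the start rather than first asserting it and later demoting it to an assumption.
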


\begin{theoremrep}[Characterization \appsymb]\label{theorem:connected_component_props}
    Let $\mathcal{I}=(G=(V,E),\{v_i\}_{i\in V})$ be an instance of \EFo\ with binary valuations and define $G'=(V,E')$ as the subgraph of $G$ s.t.~$E'\subseteq E$ consists only of the edges in $E$ with weight $1$ for both their endpoints. $\mathcal{I}$ has an EF orientation if and only if for every connected component $C$ of $G'$ at least one of the following properties holds.
    \begin{enumerate}
        \item $C$ contains a circuit\footnote{In graph theory, a circuit is defined as a closed trail where a sequence of vertices starts and ends at the same vertex without repeating any edges. Informally, it is a cycle that may pass through each vertex more than once, while only using each edge at most once.} consisting of at least $3$ vertices.\label{prop1}
        \item $C$ contains a vertex $i$ incident with some edge $e\in E_{ij}$ s.t. $v_i(e)=1$ and $v_j(e)=0$.\label{prop2}
        \item $C$ contains two vertices $i,j$ s.t. $|E'_{ij}|$ is even and non-zero.\label{prop3}
        \item $C$ consists of only one vertex.\label{prop4}
    \end{enumerate}
\end{theoremrep}

\begin{proof}
    We will prove the forward direction of the theorem by constructing an EF orientation given each of the four properties.
    As already explained, we orient all edges $e\in E_{ij}$ s.t. $v_i(e)=1$ and $v_j(e)=0$ towards $i$ (for all $i,j\in V$), and remove all edges with weight $0$ for both endpoints. From now on we may assume that all unoriented edges have weight $1$ for both of their endpoints (i.e., all unoriented edges are in the set $E'$), hence $G$ has an EF orientation if and only if every connected component of $G'=(V,E')$ has an EF orientation, granted the aforementioned orientation for all edges in $E \setminus E'$.
    We will prove that if any of the four properties hold for some connected component $C$ of $G'$, then $C$ has an \EFo, by repeatedly applying \cref{obs:happy_vertex}.

    \textbf{Property 1.} Let $C=(V_c,E_c)$ be a connected component of $G'$ satisfying Property~\ref{prop1}, i.e., $C$ contains a circuit $B=(V_b,E_b)$ with $|V_b|\geq 3$. By the definition of circuit, there exists a cyclic route that visits all edges in $E_b$ exactly once, starting and ending at the same vertex in $V_b$. Pick such a route and orient all edges in $E_b$ along the route; this way, every vertex in $V_b$ receives at least one edge (see \cref{fig:circuit}).
    \begin{figure}[ht]
        \centering
        \includegraphics[width=0.3\linewidth]{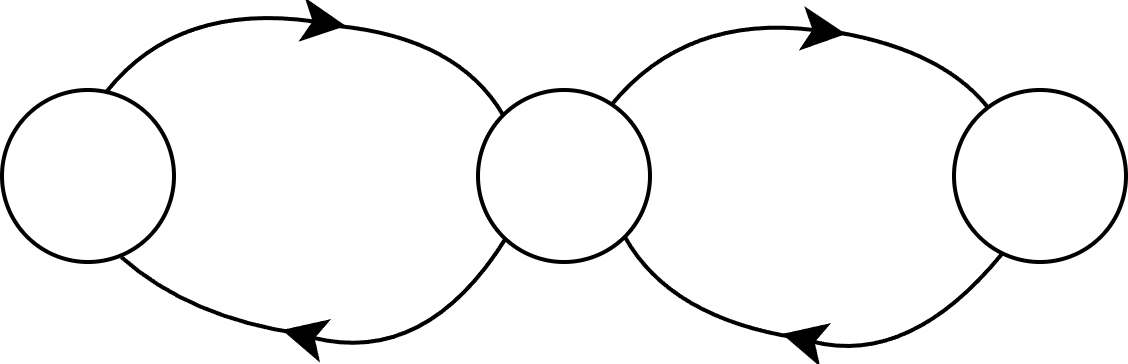}
        \caption{An example of a circuit consisting of $3$ vertices, without containing a cycle of length greater than $2$. Observe that no vertex is envious of another if the edges are oriented along a cyclic route as shown (assuming all edges have weight $1$ for both endpoints).}
        \label{fig:circuit}
    \end{figure}
    Let $i,j$ be any two vertices in $V_b$. By the above, it holds that either $i$ or $j$ has now received some edge $e \notin E_{ij}$ with weight $1$ (if both $i$ and $j$ had only received edges in $E_{ij}$, that would imply that $|V_b|=2$, contradiction). Hence, we can apply \cref{obs:happy_vertex} to orient all unoriented edges in $E'_{ij}$ while preserving envy-freeness (for all $i,j \in V_b$). Since every vertex in $V_b$ has received an edge in $E_b$, we can also apply \cref{obs:happy_vertex} to orient all edges between vertices in $V_b$ and their neighbors that are not in $V_b$, which forces the neighbors of the vertices in $V_b$ to receive at least one edge of weight $1$.
    This renders \cref{obs:happy_vertex} applicable to orient the edges between those vertices and their own neighbors, and so on. This orients all $e\in E_c$ while preserving envy-freeness.

    \textbf{Property 2.} Let $C=(V_c,E_c)$ be a connected component of $G'$ satisfying Property~\ref{prop2}, i.e., there is a vertex $i\in V_c$ that has received some edge $e\in E_{ij}$ s.t. $v_i(e)=1$ and $v_j(e)=0$, for some $j \in V$ (see \cref{fig:external_edge_on_multitree}). This renders \cref{obs:happy_vertex} applicable to orient all edges incident with $i$, and then we can apply it iteratively to orient the edges incident with the neighbors of $i$, and so on, as we did for Property~\ref{prop1}, thus orienting all $e\in E_c$ while preserving envy-freeness.

    \begin{figure}[ht]
        \centering
        \includegraphics[width=0.5\linewidth]{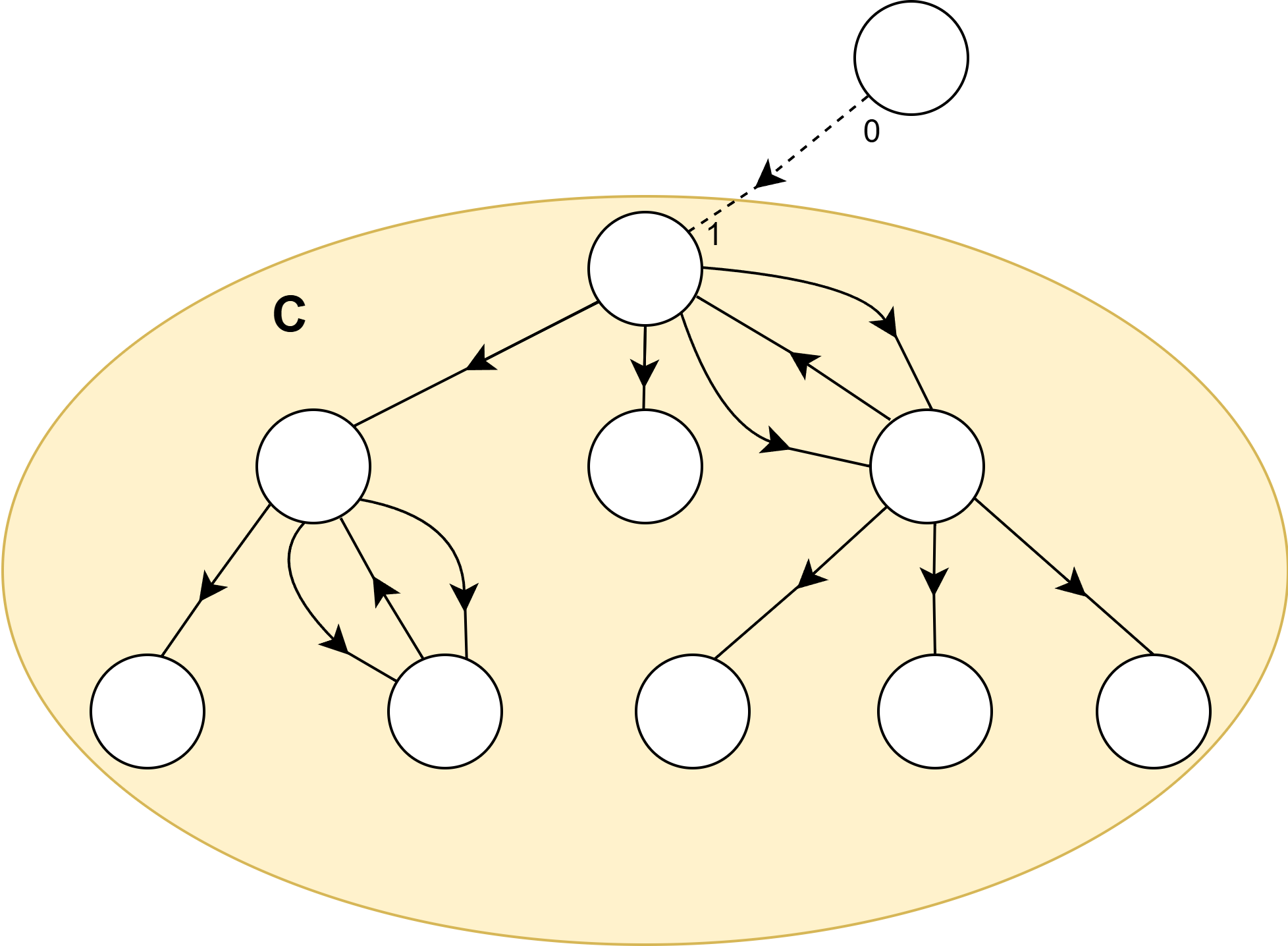}
        \caption{An example of a connected component $C$ of $G'$ that satisfies Property~\ref{prop2} of \cref{theorem:connected_component_props}, i.e., some vertex receives an edge of weight~$1$ for itself and weight $0$ for the other endpoint. This allows us to repeatedly apply \cref{obs:happy_vertex} starting from that vertex, orienting all edges of $C$ to obtain an EF orientation as demonstrated.}
        \label{fig:external_edge_on_multitree}
    \end{figure}

    \textbf{Property 3.} Let $C=(V_c,E_c)$ be a connected component of $G'$ satisfying Property~\ref{prop3}, i.e., there are vertices $i,j \in V_c$ s.t. $|E'_{ij}|>0$ is even. We orient half of these edges towards $i$ and the other half towards $j$, thus ensuring that $i$ and $j$ do not envy each other and each of them has received an edge of weight $1$. Now we can apply \cref{obs:happy_vertex} to orient the unoriented edges between $i$ and its neighbors, and then their own neighbors, and so on (same for $j$ and its neighbors), as we did for Property~\ref{prop1}. This orients all $e\in E_c$ while preserving envy-freeness.

    \textbf{Property 4.} In this case, the empty set is a trivial EF orientation for $C$.

    This concludes the forward direction of the proof. It remains to prove the converse, i.e., that if for some connected component $C=(V_c,E_c)$ of $G'$ none of the four properties hold, then $C$ has no EF orientation.

    \textbf{Converse.} Let $C=(V_c,E_c)$ be a (multi)graph that violates all four properties. Then:
    \begin{itemize}
        \item $C$ is a tree that allows an odd\footnote{Non-zero even numbers of parallel edges are not allowed, due to the negation of Property~\ref{prop3}.} number of parallel edges between neighbors (instead of just one).
        \item For every $i \in V_c$ there is at most one\footnote{If some $v$ was connected with two other vertices through a number of edges greater than one, that would induce a circuit consisting of $3$ vertices, satisfying Property~\ref{prop1}. See \cref{fig:circuit} for an example.} vertex $j\in V_c$ s.t. $|E_{ij}|>1$.
        \item For every $i \in V_c$ and every $j \in V$ (note that $j$ does not have to be in $V_c$) there is no $e\in E_{ij}$ s.t. $v_i(e)=1$ and $v_j(e)=0$. Thus, every edge in $E$ incident to some vertex in~$V_c$ has weight~$1$ for both endpoints.
        \item $|E_c| \geq 1$.
    \end{itemize}

    Fix an arbitrary root $r$ for the tree $C$ and let $L_1 =\{r\}$ be the first \emph{level} of the tree, let $L_2$ be the second level consisting of the neighbors of $r$, and so on, up to level $L_{\max}$. For $v\in V_c$ we define $s(v)$ as the number of edges connecting $v$ to the previous level of the tree. Note that, by the assumptions for $C$, each $v\in V$ (excluding the root $r$) is connected to exactly one vertex of the previous level, with an odd amount of edges. We will prove the following claim through induction on the levels $L_i$.

    \begin{claim}\label{claim:induction_tree}
        Every $v\in L_{i}$ must receive at least $(s(v)-1)/2 +1$ of the $s(v)$ edges connecting it to a vertex of level $L_{i-1}$, for all $i=\{2,\ldots,\max\}$.
    \end{claim}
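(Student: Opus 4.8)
The plan is to prove \cref{claim:induction_tree} by induction on the level index, proceeding from the deepest level $L_{\max}$ upwards towards $L_2$, i.e.\ by \emph{decreasing} $i$. I fix an arbitrary EF orientation of $C$ (one exists under the standing assumption) and, for a non-root vertex $v\in L_i$ with parent $p\in L_{i-1}$, write $a$ for the number of the $s(v)$ parent-edges oriented towards $v$, and for each child $c$ of $v$ (lying in $L_{i+1}$) write $b_c$ for the number of the $|E'_{vc}|$ edges oriented towards $v$. Since every edge incident to $v$ has weight $1$ for both endpoints, the value $v$ assigns to its own bundle is $a+\sum_c b_c$, the value it assigns to $p$'s bundle is exactly $s(v)-a$ (only the $v$--$p$ edges are incident to $v$), and the value it assigns to a child $c$'s bundle is exactly $|E'_{vc}|-b_c$. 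Thus envy-freeness of $v$ towards $p$ gives $2a+\sum_c b_c\ge s(v)$, and envy-freeness of $v$ towards a child $c$ gives $a+\sum_{c'} b_{c'}\ge |E'_{vc}|-b_c$.

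The base case is any leaf $v$ (in particular every vertex of $L_{\max}$): here there are no children, so the first inequality reads $2a\ge s(v)$, and since $s(v)$ is odd this forces $a\ge (s(v)-1)/2+1$, as required.

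For the inductive step the crucial ingredient is the structural fact, already recorded in the bullet points of the converse, that a vertex of a component violating all four properties is incident to \emph{at most one} bundle of parallel edges of multiplicity exceeding one (two such bundles would create a circuit on three vertices, contradicting the negation of Property~\ref{prop1}). I would therefore split on where this heavy bundle lies. If $s(v)>1$, the heavy bundle of $v$ points to its parent, so every child-bundle is a single edge; the induction hypothesis forces each such child to keep its unique edge, i.e.\ $b_c=0$ for all $c$, whereupon $2a+\sum_c b_c\ge s(v)$ collapses to $2a\ge s(v)$ and parity yields $a\ge (s(v)-1)/2+1$. If instead $s(v)=1$ it suffices to show $a\ge 1$: when $v$ has no heavy child all $b_c=0$ and $2a\ge 1$ already gives $a\ge 1$, while if $v$ has a (necessarily unique) heavy child $c_1$ then all other $b_c=0$, the induction hypothesis gives $b_{c_1}\le (|E'_{vc_1}|-1)/2$, and substituting into the child-envy inequality $a+b_{c_1}\ge |E'_{vc_1}|-b_{c_1}$ yields $a\ge |E'_{vc_1}|-2b_{c_1}\ge 1$.

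I expect the main obstacle to be precisely this case distinction: a naive bottom-up induction relying only on envy-freeness towards the parent fails, because a vertex could in principle offset a small share of its parent-edges with edges grabbed from its children, so the inequality $2a+\sum_c b_c\ge s(v)$ alone does not force a strict majority. The resolution hinges on the ``at most one heavy bundle'' structure, which guarantees that when the parent-edge is heavy the induction hypothesis makes every child-contribution vanish, and when the parent-edge is light the required majority is merely $1$ and can be extracted from the child-envy constraint of the single heavy child. Once \cref{claim:induction_tree} is in hand, I would finish the converse by applying it at the root: each of its children retains a strict majority of the connecting edges, leaving the root with too few of its own incident edges to avoid envying some child, which is the contradiction establishing that $C$ admits no EF orientation.
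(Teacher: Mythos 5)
Your proof is correct and follows essentially the same route as the paper's: a bottom-up induction over the levels with a case split on where the (at most one) heavy parallel bundle at $v$ lies, using the induction hypothesis to force each child to retain its share and then envy-freeness toward the parent (resp.\ toward the heavy child) to force the parent edges toward $v$. The only cosmetic difference is that in the case $s(v)=1$ with no heavy child you invoke envy toward the parent (via $2a\ge 1$) where the paper invokes envy toward the children; both are valid.
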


    \begin{nestedproof}

    \emph{Induction basis.} Let $i\in L_{\max}$ be a leaf and $j\in L_{\max-1}$ be its (only) neighbor. If $|E_{ij}|=1$, then it is clear that the edge between $i$ and $j$ must be oriented towards $i$ in any EF orientation (otherwise, $i$ would envy $j$). Similarly, if $|E_{ij}|>1$, then $|E_{ij}|$ must be odd and at least $(|E_{ij}|-1)/2 +1$ edges must be oriented towards $i$.

    \emph{Induction hypothesis.} Suppose that every $v\in L_{k+1}$ ($k\geq 2$) must receive at least $(s(v)-1)/2 +1$ of the $s(v)$ edges connecting it to a vertex of level $L_k$.

    \emph{Inductive step.} We now prove that every $i\in L_k$ of the tree must receive at least $(s(i)-1)/2 +1$ of the $s(i)$ edges connecting it to a vertex of level $L_{k-1}$. Consider three cases:
    \begin{itemize}
        \item Case 1: $|E_{ij}|\leq 1$ for all $j\in V_c$. By induction hypothesis, $i$ receives none of the edges connecting it to level $L_{k+1}$, hence it must receive the (only) edge connecting it to level $L_{k-1}$, as it would otherwise envy its neighbors in $L_{k+1}$.
        \item Case 2: $|E_{ij}|>1$ for some $j\in L_{k+1}$. By induction hypothesis, $j$ must receive at least $(|E_{ij}|-1)/2 +1$ of the edges in $E_{ij}$, hence $i$ may receive (at most) $(|E_{ij}|-1)/2$ of these edges. By the assumptions for $C$, it holds that $|E_{iv}| \leq 1$, $\forall v\neq j,\ v\in V_c$, hence $i$ receives no other edges between itself and level $L_{k+1}$, and is connected to level $L_{k-1}$ by a single edge. This edge must be oriented towards $i$, as otherwise $i$ would envy $j$.
        \item Case 3: $|E_{ij}|>1$ for some $j\in L_{k-1}$. By the assumptions for $C$, it holds that $|E_{iv}| \leq 1$, $\forall v\neq j,\ v\in V_c$. Thus, by induction hypothesis, $i$ receives none of the edges between itself and level $L_{k+1}$. In order for $i$ to not envy $j$, at least $(|E_{ij}|-1)/2 +1$ of the edges in $E_{ij}$ have to be oriented towards $i$.
    \end{itemize}
    This concludes the proof of the claim.
    \end{nestedproof}
    We are now ready to prove the converse direction of the theorem. By the assumptions for~$C$, it holds that the root $r$ has at most one neighbor $j\in L_{2}$ s.t. $|E_{rj}|>1$. First, suppose such a neighbor $j$ exists. By \cref{claim:induction_tree}, at least $(|E_{rj}|-1)/2 +1$ of the edges in $E_{rj}$ must be oriented towards $j$ and all other edges incident to $r$ must be oriented towards the respective vertices in $L_{2}$. This causes $r$ to envy $j$. Now, suppose $j$ did not exist, i.e., $|E_{ri}|\leq 1,\ \forall i\in V_c$. For this case, \cref{claim:induction_tree} implies that $r$ cannot receive any edges at all, causing it to envy all of its neighbors in $L_2$. In both cases, $r$ is envious of some other vertex, hence there is no EF orientation for $C$ (and, thus, neither for~$G$).
\end{proof}

\begin{figure}[ht]
        \centering
        \includegraphics[width=0.5\linewidth]{Images/tree_ef1.png}
        \caption{An example of a connected component $C$ of $G'$ that satisfies Property~\ref{prop2} of \cref{theorem:connected_component_props}, i.e., some vertex receives an edge of weight $1$ for itself and weight $0$ for the other endpoint. This allows us to repeatedly apply \cref{obs:happy_vertex} starting from that vertex, orienting all edges of $C$ to obtain an EF orientation as demonstrated.}
        \label{fig:external_edge_on_multitree_main}
    \end{figure}

\begin{figure}[ht]
        \centering
        \includegraphics[width=0.3\linewidth]{Images/gadget_3vertices.png}
        \caption{An example of a circuit consisting of $3$ vertices, without containing a cycle of length greater than $2$. Observe that no vertex is envious of another if the edges are oriented along a cyclic route as shown (assuming all edges have weight $1$ for both endpoints).}
        \label{fig:circuit_main}
    \end{figure}

We refer to \cref{fig:circuit_main,fig:external_edge_on_multitree_main} for some intuition for the proof of \cref{theorem:connected_component_props}.

Using \cref{theorem:connected_component_props}, we can obtain the following.

\begin{theoremrep}[\appsymb]\label{thm:EFo:linear_algo}
    There is an algorithm running in time $\bO(n+m)$ for \EFo\ with binary valuations.
\end{theoremrep}

\begin{proof}
    Define $G'=(V,E')$ as the subgraph of $G$ s.t. $E'\subseteq E$ consists only of the edges in $E$ with weight $1$ for both their endpoints. By \cref{theorem:connected_component_props}, it suffices to check whether every connected component of $G'$ satisfies at least one of the four properties mentioned in that theorem. Properties~\ref{prop2},~\ref{prop3},~\ref{prop4} can be trivially checked in $\bO(n+m)$ time for all connected components. For Property~\ref{prop1}, we observe that if a multigraph contains a circuit consisting of at least $3$ vertices and does not contain a cycle of length at least $3$, then there must exist vertices $i, j, k$ s.t. $|E'_{ij}|\geq 2$ and $|E'_{jk}|\geq 2$. Hence, for Property~\ref{prop1} it suffices to run a simple cycle-detection algorithm like depth-first search, and additionally check whether there exists a vertex $j$ with neighbors $i, k$ s.t. $|E'_{ij}|\geq 2$ and $|E'_{jk}|\geq 2$.
\end{proof}

\cref{thm:EFo:linear_algo} contrasts the complexity of \EFXo, which is known to be NP-hard even for binary valuations by Blazej et al.~\cite{sagt/BlazejGRS25}.
Additionally, it generalizes to multigraphs the algorithm of Misra and Sethia~\cite{aldt/MisraS24}, which works only for simple graphs.
We next extend this linear-time algorithm to the \EFmc\ problem.

\begin{theoremrep}[\appsymb]\label{thm:EFomc:linear_algo}
    There is an algorithm running in time $\bO(n+m)$ for \EFmc\ with binary valuations.
\end{theoremrep}

\begin{proof}
    Define $G'=(V,E')$ as the subgraph of $G$ s.t. $E'\subseteq E$ consists only of the edges in $E$ with weight $1$ for both their endpoints. By \cref{theorem:connected_component_props}, an EF orientation for $G$ exists if and only if every connected component of $G'$ satisfies at least one of the properties mentioned in that theorem. It follows that the answer to \EFmc\ is the sum over the minimum amount of edges that must be removed for each connected component to satisfy one of the four properties.

    By \cref{thm:EFo:linear_algo}, we can check whether a connected component satisfies some of the four properties in $\bO(n+m)$ time (in which case, the answer for that component is $0$). We now assume that some component violates all four properties. If it contains vertices $i,j$ with $E'_{ij}>1$, then we can remove one edge to satisfy Property~\ref{prop3}. It is clear that Properties~\ref{prop1} and~\ref{prop2} can never be satisfied by removing edges from the input; hence, if a connected component violates all four properties and contains no vertices $i,j$ with $|E'_{ij}|>1$, then our only option is to remove all edges in that component to satisfy Property~\ref{prop4}. All of the aforementioned operations can be done in time $\bO(n+m)$.
\end{proof}

\begin{remark}
    The algorithms of \cref{thm:EFo:linear_algo,thm:EFomc:linear_algo} can also find the respective orientation (if it exists) in time $\bO(n+m)$, because the proof of \cref{theorem:connected_component_props} is constructive, i.e., it orients the whole graph by repeatedly applying \cref{obs:happy_vertex}.
\end{remark}

\subsection{Hardness Results}\label{subsec:EF:hardness}

Here we present various hardness results concerning the {\EFo} problem.
As our first result, we present a lower bound based on the Strong Exponential Time Hypothesis (SETH),
which states that SAT does not admit any $(2-\varepsilon)^n n^{\bO(1)}$ algorithm, where $n$ is the number of variables.
In particular, this shows the optimality (up to polynomial factors) of the algorithm presented in \cref{thm:EFo:Heavy}.

\begin{figure}[ht]
    \centering
    \includegraphics[width=0.55\linewidth]{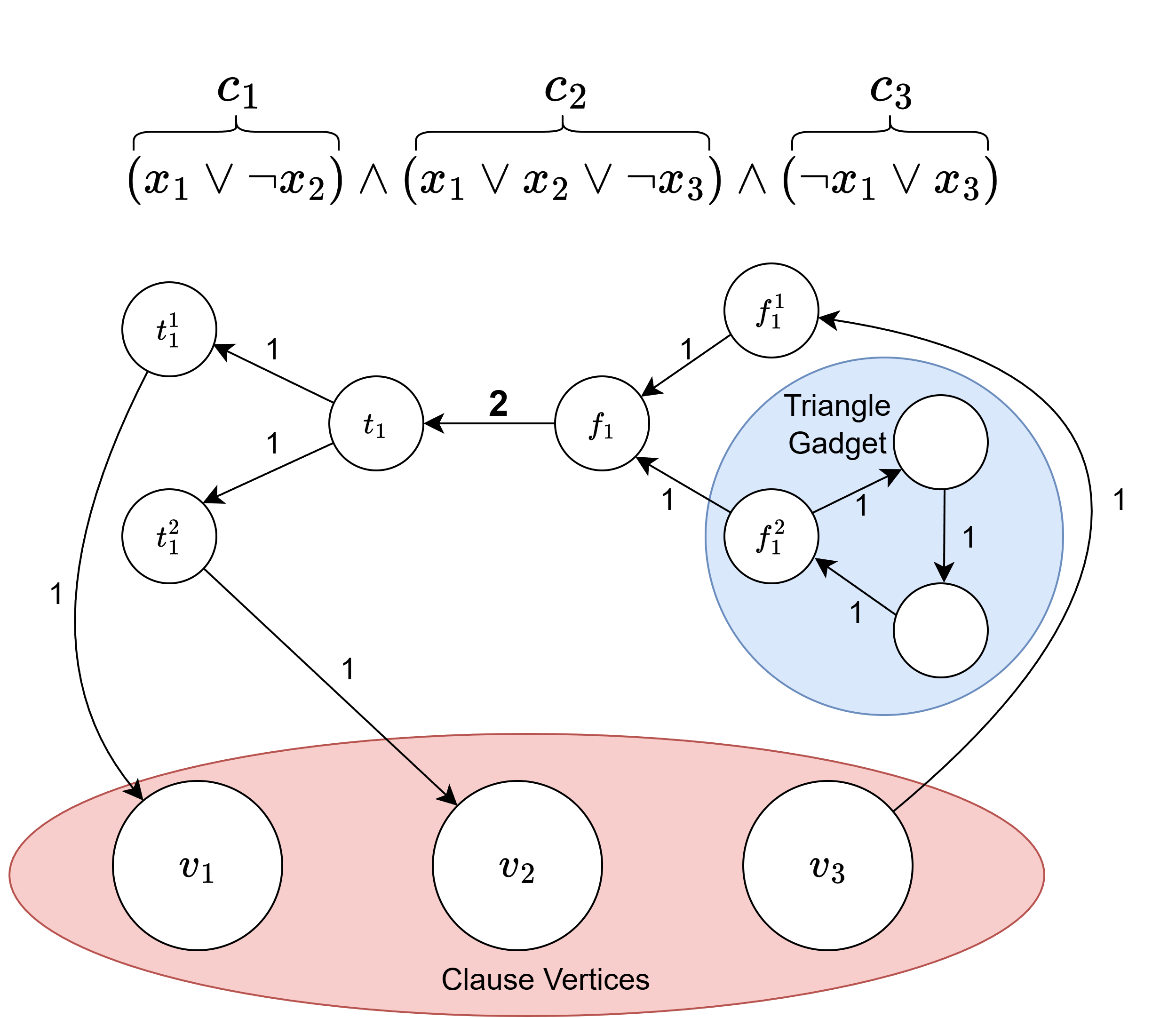}
    \caption{A sketch of our reduction from SAT to \EFo\ in \cref{thm:seth-ef}. For simplicity, we only show the part of the graph corresponding to variable $x_1$. The weighted edge $\{t_1,f_1\}$ is oriented towards $t_1$ if $x_1$ is set to $\true$, or towards $f_1$ if $x_1$ is set to $\false$. We prove that clause vertex~$v_i$ is non-envious of all its neighbors iff clause $c_i$ is satisfied. Observe that $v_1$ and $v_2$ are non-envious when $\{t_1,f_1\}$ is oriented towards $t_1$ (equivalently, $c_1$, $c_2$ are satisfied by setting $x_1$ to $\true$). For more details, see Appendix B.}
    \label{fig:sat_reduction_main}
\end{figure}

\begin{theoremrep}[\appsymb]\label{thm:seth-ef}
    There is no $(2-\varepsilon)^{k}n^{\bO(1)}$ algorithm for $\EFo$ under SETH,
    where $k$ is the number of edges of value at least $2$, even when restricted to simple symmetric instances with polynomially-bounded weights.
\end{theoremrep}

\begin{proof}
    We will reduce SAT to \EFo.
    Let $\phi$ be an instance of SAT, where $X=\{x_1,\dots,x_n\}$ denotes its variables and $C=\{c_1,\dots,c_m\}$ its clauses.
    We will construct an equivalent simple symmetric instance $\mathcal{I} = (G, v)$ of \EFo. Since the instance is symmetric, we will only keep track of one weight per edge.

    Let $w_i$ be equal to the maximum between the number of clauses of $C$ that contain $x_i$ and the number of clauses of $C$ that contain~$\neg x_i$;
    formally, $w_i = \max \{ |\setdef{c_j \in C}{x_i \in c_j}|,|\setdef{c_j \in C}{\neg x_i \in c_j}| \}$.
    For $i \in [n]$ we introduce vertices $t_i$ and~$f_i$
    and an edge $\{t_i,f_i\}$ with weight $w_i$.
    Informally, the orientation of $\{t_i,f_i\}$ towards $t_i$ corresponds to a truthful assignment of~$x_i$.
    Furthermore, for all $\ell \in [w_i]$, we introduce vertices $t_i^\ell,f_i^\ell$ and edges $\{t_i^\ell,t_i\}$ and $\{f_i^\ell,f_i\}$, all of weight $1$.

    \begin{figure}[ht]
        \centering
        \includegraphics[width=0.55\linewidth]{Images/ef_theorem9.png}
        \caption{An example of our construction in \cref{thm:seth-ef}. For simplicity, we only show the part of the graph corresponding to $x_1$, for which $w_1=2$. Each of the two copies of $t_1$, $f_1$ is connected with some clause vertex whose respective clause contains a positive/negative occurrence of $x_1$ respectively. $f_1^2$ is \emph{leftover}, and hence attached to a triangle gadget. The orientation shown corresponds to $\alpha(x_1)=\true$. Observe that all non-clause vertices are not envious of their neighbors; additionally, all clause vertices corresponding to clauses satisfied by $\alpha(x_1)=\true$ are not envious of their neighbors, i.e., $v_3$ is the only envious vertex in the figure. Observe that, if the same construction is repeated for $x_3$ and $\alpha(x_3)=\true$, then $v_3$ will also be rendered non-envious.}
        \label{fig:sat_reduction}
    \end{figure}

    For each $c_j$ we create a vertex $v_j$ corresponding to that clause; we connect these \emph{clause} vertices with the vertices $t_i^\ell,f_i^\ell$ as follows. For $i\in [n]$, we connect $t_i^\ell$ with the vertex corresponding to the $\ell$-th clause that contains a positive occurrence of~$x_i$, with an edge of weight $1$. Note that, since the amount of clauses containing a positive occurrence of $x_i$ is at most $w_i$ (by construction), some vertices $t_i^\ell$ may remain not adjacent to any clause vertex $v_j$ by the aforementioned procedure.
    For every such \emph{leftover} vertex $u$ that is adjacent to no clause vertex, we create the following \emph{triangle} gadget. We introduce two new vertices $a_{i,\ell}^u$ and  $b_{i,\ell}^u$ and the edges $\{u,a_{i,\ell}^u\}$, $\{a_{i,\ell}^u,b_{i,\ell}^u\}$ and $\{b_{i,\ell}^u,u\}$, all of weight $1$, thus creating a triangle between $u$, $a_{i,\ell}^u$ and $b_{i,\ell}^u$.
    We apply exactly the same construction to vertices $f_i^\ell$ for \emph{negative} occurrences of $x_i$.

    This concludes the construction of $\mathcal{I}$.
    For an illustration of the construction we refer to \cref{fig:sat_reduction}. We now prove that $\phi$ has a satisfying assignment iff $\mathcal{I}$ has an EF orientation.

    ($\Rightarrow$) Let $\alpha \colon X \to \{\true, \false\}$ be a satisfying assignment for $\phi$. We construct an orientation $O$ of $G$ as follows.
    For $i \in [n]$ such that $\alpha(x_i)=\true$, we orient the edge
    $e=\{t_i,f_i\}$ towards $t_i$. Then, we orient all edges incident to $f_i$ (except~$e$) towards~$f_i$ and all edges incident to $t_i$ (except~$e$) away from $t_i$.
    Furthermore, any edge $(t_i^\ell,v_j)$ is oriented towards $v_j$, and any edge $(f_i^\ell,v_j)$ is oriented away from $v_j$.
    For $i \in [n]$ such that $\alpha(x_i)=\false$, we orient all of the aforementioned edges in the opposite manner.
    For each triangle gadget on vertices  $\{u,a_{i,\ell}^u,b_{i,\ell}^u\}$, orient its edges as a directed $3$-cycle (e.g. $u \to a_{i,\ell}^u \to b_{i,\ell}^u \to u$). Thus, each of the three vertices receives some edge of weight $1$.

    We now prove that $O$ is EF. Let $i \in [n]$ such that $\alpha(x_i)=\true$. Then, $t_i$ has received an edge of weight $w_i$ and $f_i$ has received $w_i$ edges of weight $1$ (by construction), hence $t_i, f_i$ do not envy each other. Symmetrically, the same holds for $\alpha(x_i)=\false$.
    For all vertices in the graph other than $t_i, f_i$ vertices it holds that they are only incident to edges of weight $1$. Since the graph is simple, if any vertex (other than $t_i, f_i$ vertices) receives at least one edge, then it is clear that it does not envy any of its neighbors. Observe that every vertex of the graph receives at least one edge; in particular:

    \begin{itemize}
        \item every clause vertex $v_j$ receives at least one edge because $\alpha$ is a satisfying assignment for $\phi$ (meaning that every clause $c_j$ has at least one satisfied literal, i.e., $v_j$ receives some edge from a vertex $t_i^\ell$ if the satisfied literal is positive or a vertex $f_i^\ell$ if the satisfied literal is negative).

        \item every vertex $t_i^\ell, f_i^\ell$ that is connected to some clause vertex receives either the edge connecting it to that clause vertex, or the edge connecting it to $t_i$ or $f_i$, due to the construction of $O$.

        \item every vertex $t_i^\ell, f_i^\ell$ not connected to any clause vertex $v_j$ is part of some triangle gadget, and thus receives at least one edge.
    \end{itemize}

    Thus, we have proven that no vertex envies another, meaning that $O$ is EF.

    \medskip

    ($\Leftarrow$)
    Let $O$ be an EF orientation of $G$. We define an assignment $\alpha$ by setting $\alpha(x_i)=\mathsf{true}$ iff the edge $\{t_i,f_i\}$ is oriented towards $t_i$. We will prove that $\alpha$ is a satisfying assignment.

    Let $c_j$ be a clause that is \emph{not} satisfied by $\alpha$, and let $v_j$ be its corresponding clause vertex. We will prove that $v_j$ receives no edge in $O$.
    Towards contradiction, let $e$ be an edge oriented towards $v_j$. Without loss of generality, we assume that $e = \{v_j,t^\ell_i\}$, for some $i\in [n], \ell \in [w_i]$; symmetrical arguments hold if $e = \{v_j,f^\ell_i\}$. Then, $\{t^\ell_i,t_i\}$ must be oriented towards $t^\ell_i$, otherwise $t^\ell_i$ would envy $v_j$ (recall that $t^\ell_i$ has only two incident edges, both of weight 1). Now, if $\{t_i,f_i\}$ is not oriented towards $t_i$, then $t_i$ would envy $f_i$, since only $w_i-1$ edges of weight $1$ remain available to be oriented towards $t_i$. Hence, $\{t_i,f_i\}$ must be oriented towards $t_i$; equivalently, $\alpha(x_i)=\mathsf{true}$. However, by construction, $x_i$ occurs positively in clause $c_j$ (because $v_j$ is connected with $t_i^\ell$), which implies $c_j$ is satisfied, leading to a contradiction. We infer that, for an unsatisfied clause $c_j$, its corresponding clause vertex $v_j$ must not receive any edge in $O$.

    Since $O$ is EF and there are no edges of weight $0$ in $G$, every vertex must receive at least one edge in $O$ (otherwise it would envy all of its neighbors).
    Combining this with the above, no unsatisfied clause $c_j$ exists, hence $\alpha$ is a satisfying assignment.

    \medskip

    This concludes the correctness of the reduction. Now, observe that the number of edges in $G$ of weight at least $2$ is at most $n$.
    Let $k \le n$ denote said number; by the reduction we described, a $(2-\varepsilon)^k n^{\bO(1)}$ algorithm for \EFo\ would result in a
    $(2-\varepsilon)^n n^{\bO(1)}$ algorithm for SAT, which would contradict the SETH.
\end{proof}

We refer to \cref{fig:sat_reduction_main} for a sketch of our reduction from SAT in \cref{thm:seth-ef}.
Next, by adapting essentially the same reduction from a more appropriate variant of SAT, we obtain NP-hardness for {\EFo}
even for very restricted valuation functions.

\begin{theoremrep}[\appsymb]\label{thm:npc-ef}
    $\EFo$ is NP-complete even when restricted to simple symmetric instances on $3$-regular graphs and weights in $\{1,2\}$.
\end{theoremrep}

\begin{proof}
    It is clear that \EFo\ is contained in NP, thus we only argue about its NP-hardness.
    Our reduction is roughly the same as the one presented in \cref{thm:seth-ef},
    albeit slightly simplified due to reducing from a more appropriate SAT variant.

    The problem 2P2N-3SAT is a restricted version of 3-SAT:
    an instance of 2P2N-3SAT consists of a set $X$ of $n$ variables and a set $C$ of $m$ clauses such that
    each clause has exactly three literals corresponding to three different variables
    and each variable appears exactly twice positively and exactly twice negatively.
    It is known that 2P2N-3SAT is NP-complete~\cite{eccc/ECCC-TR03-049,dam/DarmannD21}.

    For the construction, we closely follow the one of \cref{thm:seth-ef}.
    For every variable $x_i$ we introduce vertices $t_i$ and $f_i$ connected by an edge of value $2$.
    For every clause $c_j$ we introduce a clause vertex $v_j$.
    Then, if $x_i \in c_j$ (resp.~$\neg x_i \in c_j)$
    we connect $v_j$ with $t_i$ (resp.~$f_i$) and set the value of said edge to $1$.
    This completes the construction, and it is easy to see that the graph is simple.
    Since every literal appears exactly twice and every clause contains exactly $3$ literals,
    it follows that all vertices in the constructed graph are of degree $3$.

    For the correctness, we proceed as in \cref{thm:seth-ef} and we omit further details due to the
    similarity of the arguments.
\end{proof}

Moving on, we show that {\EFo} remains highly intractable even on very restricted classes of (multi)graphs.
To show this we first present a simple reduction from \textsc{Partition},
and then one from the {\TOO} problem~\cite{wg/BodlaenderCW22,dam/BodlaenderS26}.

\begin{theoremrep}[\appsymb]\label{thm:ef:weak_np_hardness}
    {\EFo} is weakly NP-complete even restricted to
    \begin{itemize}
        \item simple symmetric instances of vertex cover number $2$,
        \item symmetric instances with $2$ vertices.
    \end{itemize}
\end{theoremrep}

\begin{proof}
    It is straightforward to verify inclusion in NP.
    Our reduction follows along the lines of~\cite[Theorem 3]{ijcai/DeligkasEGK25},
    reducing from the well-known weakly NP-hard \textsc{Partition} problem which asks,
    given a multiset of positive integers, whether we can partition them into two subsets of equal subset sum.

    Let $S = \{ s_1, \ldots, s_n \}$ be a multiset of $n$ positive integers,
    such that $\sum_{i=1}^n s_i = 2B$ for some $B \in \mathbb{N}$.
    We construct a symmetric instance $\mathcal{J} = (G,v)$ of {\EFo} (equivalently,
    we construct an edge-weighted graph $G$).
    First, we introduce vertices $u_1, u_2, \ell_1, \ell_2$, and the edges $\{ u_1, \ell_1 \}$ and $\{ u_2 , \ell_2 \}$, both of weight $B$.
    Next, for all $i \in [n]$ we introduce a vertex $v_i$ and connect it with both $u_1$ and $u_2$, with both edges being of weight $s_i$.
    This concludes the construction, and it is easy to see that $G - \{u_1,u_2\}$ is an independent set, while $G$ is a simple graph.

    For the forward direction, let $S_1, S_2$ denote a partition of $S$ with $\sum_{s \in S_1} s = \sum_{s \in S_2} s = B$.
    If $v_i \in S_1$, then we orient $\{ v_i, u_1 \}$ towards $u_1$ and $\{ v_i , u_2 \}$ towards $v_i$.
    Symmetrically, if $v_i \in S_2$, then we orient $\{ v_i, u_1 \}$ towards $v_i$ and $\{ v_i , u_2 \}$ towards $u_2$.
    Lastly, for $j \in \{1,2\}$ we orient $\{ u_j, \ell_j \}$ towards $\ell_j$.
    It is straightforward to verify that this is an envy-free orientation.

    For the converse direction, let $O = \{X_v\}_{v \in V(G)}$ be an envy-free orientation of $\mathcal{J}$.
    Let $j \in \{1,2\}$ and notice that since $\ell_j$ is a leaf, the edge $\{u_j,\ell_j\}$ is oriented towards $\ell_j$ in $O$.
    Since this edge is of weight $B$ and $O$ is envy-free, it follows that edges of total weight at least $B$ are oriented towards $u_j$ in $O$.

    We argue that for all $i \in [n]$ it holds that $|X_{v_i}| = 1$.
    If $X_{v_i} = \varnothing$, $v_i$ is envious of both $u_1$ and $u_2$, a contradiction.
    Thus, for all $i \in [n]$, $|X_{v_i}| \ge 1$.
    Now assume there exists $i \in [n]$ with $|X_{v_i}| = 2$.
    In that case, it follows that the sum of the values of the edges that are oriented
    towards either $u_1$ or $u_2$ in $O$ is at most $2B - s_i$, contradicting the fact that each of $u_1$ and $u_2$ receives weight at least $B$.

    Let $S_j \subseteq S$ such that for all $v_i \in S_j$, the edge $\{ v_i, u_j \}$ is oriented towards $u_j$ in $O$.
    Due to the previous paragraph, the sets $S_1,S_2$ partition $S$, while each of $u_1,u_2$ receives goods of total value at least $B$,
    while both of them receive goods of total value exactly $2B$.
    It follows that they both receive goods of value exactly $B$, and $\sum_{s \in S_1} = \sum_{s \in S_2} = B$.

    One can easily modify the previous construction to obtain a multigraph consisting of $2$ vertices:
    we only introduce the vertices $u_1,u_2$, and for all $i \in [n]$, we introduce an edge of weight $s_i$ connecting them.
    Correctness is straightforward and we omit the details.
\end{proof}

\begin{theoremrep}[\appsymb]\label{thm:EF:w1h}
    \EFo\ is W[1]-hard parameterized by vertex cover number,
    XNLP-hard parameterized by pathwidth,
    and XALP-hard parameterized by treewidth,
    even restricted to simple symmetric instances with polynomially-bounded weights.
    Furthermore, for such instances, {\EFo} does not admit a $n^{o(\vc)}$ algorithm under the ETH,
    where $\vc$ denotes the vertex cover number of the input graph.
\end{theoremrep}

\begin{proof}
    The {\TOO} problem asks,
    given a simple graph $G=(V,E)$ with a valuation function $v \colon E \to \mathbb{N}$
    and a capacity function $c \colon V \to \mathbb{N}$,
    whether there exists an orientation $O = \{X_i\}_{i \in V}$ of $E$ such that
    for all $u \in V$ it holds that $v(X_u)=c$.

    It is known that {\TOO} is W[1]-hard parameterized by vertex cover,
    XNLP-hard by pathwidth,
    and XALP-hard by treewidth, even when all values are given in unary~\cite{wg/BodlaenderCW22,dam/BodlaenderS26}.
    Furthermore, in~\cite[Theorem~6.4]{arxiv/BodlaenderCW22} (the full version of~\cite{wg/BodlaenderCW22}),
    the authors give a reduction from {\UBP} to {\TOO} such that,
    in the produced instance, the vertex cover number equals the number of bins,
    and all weights are encoded in unary.
    It was recently shown that {\UBP} admits no algorithm running in time $n^{o(k)}$ under the ETH,
    where $k$ denotes the number of bins~\cite{bringmann2026tightsethbasedlowerbounds},
    improving upon a previous lower bound of $n^{o(k/\log k)}$~\cite{jcss/JansenKMS13};
    consequently, this results in a $n^{o(\vc)}$ ETH lower bound for {\TOO} with unary weights,
    where $\vc$ denotes the vertex cover number of the resulting graph.

    We present a reduction from {\TOO}.
    Let $\mathcal{I} =  (G,v,c)$ denote an instance of said problem.
    Assume without loss of generality that for all $\{u,v\} \in E$ it holds that $v(\{u,v\}) \le \min\{c(u),c(v)\}$,
    as otherwise either this is trivially a No instance, or the orientation of said edge would be forced and we could remove it
    by updating the capacity of its endpoints and obtain an equivalent instance.
    Furthermore, we assume that
    \begin{equation}\label{eq:w1h-ef}
        \sum_{e \in E} v(e) = \sum_{u \in V} c(u),
    \end{equation}
    as otherwise $\mathcal{I}$ is trivially a No instance.

    We construct a symmetric instance $\mathcal{J} = (G'=(V',E'),v')$ of \EFo\ such that $G$ is a simple graph.
    To obtain $G'$, we extend the graph $G$ by first adding a universal%
    \footnote{A vertex is called \emph{universal} if it is adjacent to all other vertices of the graph.}
    vertex $a$, and then attaching a leaf $b$ to $a$.
    As for the valuation function, we set $v'(e) = v(e)$ for all $e \in E$,
    $v'(\{a,u\})=c(u)$ for all $u \in V$, and $v'(\{a,b\}) = \sum_{u \in V} c(u)$.
    This completes the construction of $\mathcal{J}$,
    and it is easy to see that deleting $a$ from $G'$ removes any edges added by the construction (thus $\vc(G') \le \vc(G) + 1$),
    while the reduction requires logarithmic space.
    Furthermore, all edges in the constructed instance are of symmetric valuation,
    while the weight of any introduced edge is polynomially-bounded.

    \begin{claim}
        If $\mathcal{I}$ is a Yes instance of {\TOO},
        then $\mathcal{J}$ is a Yes instance of \EFo.
    \end{claim}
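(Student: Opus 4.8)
The plan is to extend the given {\TOO} orientation of $G$ into an EF orientation of $G'$ by fixing the directions of only the newly added edges. Concretely, let $O=\{X_u\}_{u\in V}$ be an orientation of $G$ witnessing that $\mathcal{I}$ is a Yes instance, so that $v(X_u)=c(u)$ for every $u\in V$. I would keep $O$ on all original edges of $E$, orient every edge $\{a,u\}$ (for $u\in V$) towards $a$, and orient the leaf edge $\{a,b\}$ towards $b$. Write $T=\sum_{u\in V}c(u)$; by the assumption $\sum_{e\in E}v(e)=\sum_{u\in V}c(u)$ we have $v'(\{a,b\})=T$, and since $\{a,u\}$ is oriented towards $a$ for all $u$, also $v'(X_a)=\sum_{u\in V}c(u)=T$.

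The bulk of the argument is then a routine envy-freeness check, organised by vertex type. The leaf $b$ receives the unique heavy edge $\{a,b\}$, so $v'(X_b)=T$ is the largest value $b$ can attribute to any bundle, and hence $b$ envies no one. For $a$ we have $v'(X_a)=T$; since $a$ values $X_b$ at exactly $v'(\{a,b\})=T$ and values every $X_u$ at $0$ (no original edge is incident to $a$), vertex $a$ is non-envious, the comparison with $b$ holding with equality. Finally, for a vertex $u\in V$ I would use that $v'(X_u)=v(X_u)=c(u)$ by validity of $O$. Vertex $u$ attributes value only to edges incident to it, so it values $X_a$ at exactly $v'(\{a,u\})=c(u)$ (matching its own bundle), and it values the bundle $X_{u'}$ of any neighbour $u'\in V$ at most $v(\{u,u'\})$, which by the normalisation $v(\{u,u'\})\le\min\{c(u),c(u')\}$ and the simplicity of $G$ is at most $c(u)=v'(X_u)$; bundles of non-neighbours are valued at $0$. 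Hence no vertex is envious and $O'$ is an EF orientation.

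I do not expect a genuine obstacle here, as the construction is designed so that the orientation essentially writes itself; the only points that require care are the two equality cases ($a$ versus $b$, and each $u$ versus $a$). These rely precisely on the assumption $\sum_{e\in E}v(e)=\sum_{u\in V}c(u)$, which makes $a$'s total received weight match $v'(\{a,b\})$ exactly, and on the bound $v_u(X_{u'})\le c(u)$, which crucially uses both the normalisation $v(\{u,v\})\le\min\{c(u),c(v)\}$ and the fact that $G$ is simple, so that at most one edge runs between $u$ and $u'$.
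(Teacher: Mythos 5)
Your proof is correct and follows essentially the same route as the paper: the identical extension of $O$ (all edges $\{a,u\}$ oriented towards $a$, the edge $\{a,b\}$ towards $b$) followed by the same case-by-case envy-freeness check, using $\sum_{e\in E}v(e)=\sum_{u\in V}c(u)$ for the $a$-versus-$b$ comparison and the normalisation $v(\{u,u'\})\le\min\{c(u),c(u')\}$ for comparisons within $V$. No differences worth noting.
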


    \begin{nestedproof}
        Let $O=\{X_u\}_{u \in V}$ be an orientation of $E$ such that for all $u \in V$, $v(X_u) = c(u)$.
        We extend $O$ by orienting the edges in $E' \setminus E$ as follows to obtain an orientation~$O'$ of $E'$.
        For all $u \in V$, we orient the edge $\{a,u\}$ towards $a$, while we orient the edge $\{a,b\}$ towards $b$.

        We now argue that $O'$ is envy-free.
        Vertex $b$ receives its unique incident edge.
        Vertex $a$ receives all edges connecting it with vertices in $V$ so it is not envious of any such vertex,
        while the sum of the values of the edges it receives is exactly $\sum_{u \in V} c(u) = v'(\{a,b\})$, thus it is not envious of $b$.
        Finally, each vertex $u \in V$ receives edges of total value $c(u) = v'(\{a,u\})$ so it is not envious of $a$,
        and for all $u' \in N_G (u)$ it holds that $v'(\{u,u'\}) \le c(u)$ thus it not envious of any vertex in $V$.
    \end{nestedproof}

    \begin{claim}
        If $\mathcal{J}$ is a Yes instance of \EFo,
        then $\mathcal{I}$ is a Yes instance of {\TOO}.
    \end{claim}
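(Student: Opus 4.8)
My plan is to show that the restriction of any envy-free orientation $O'$ of $\mathcal{J}$ to the original edge set $E$ is already a valid \TOO\ orientation of $\mathcal{I}$. The argument proceeds by a short chain of forcing observations driven by the gadget built around the universal vertex $a$ and its pendant leaf $b$, followed by a global counting step. Let me write $T = \sum_{u \in V} c(u)$, and recall from \eqref{eq:w1h-ef} that $\sum_{e \in E} v(e) = T$ as well.

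First I would pin down the orientation near $a$ and $b$. Since $b$ is a leaf whose unique incident edge $\{a,b\}$ has value $T > 0$, if that edge were oriented towards $a$ then $b$ would receive nothing while $v'_b(X_a) = T$, so $b$ would envy $a$; hence $\{a,b\}$ must be oriented towards $b$. Next, for $a$ not to envy $b$ we need $v'_a(X_a) \ge v'_a(X_b) = T$. But the only edges still available to $a$ are the edges $\{a,u\}$ for $u \in V$, whose values sum to exactly $\sum_{u \in V} c(u) = T$; therefore $a$ must receive all of them, i.e.\ every edge $\{a,u\}$ is oriented towards $a$.

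With these orientations forced, I would turn to the vertices of $V$. Fix $u \in V$: in the graph model $u$ only values the single edge $\{a,u\}$ (of weight $c(u)$) among all edges held by $a$, so $v'_u(X_a) = c(u)$, and envy-freeness gives $v'_u(X_u) \ge c(u)$. Since $u$ does not receive $\{a,u\}$, its bundle $X_u$ consists solely of original edges of $E$ incident to $u$, so $v(X_u \cap E) \ge c(u)$ for every $u \in V$. Summing over all $u$ and using that each edge of $E$ is oriented to exactly one endpoint yields $\sum_{u \in V} v(X_u \cap E) = \sum_{e \in E} v(e) = T = \sum_{u \in V} c(u)$. As a sum of terms each at least $c(u)$ equals $\sum_u c(u)$, every inequality must be tight, so $v(X_u \cap E) = c(u)$ for all $u$, and the restriction of $O'$ to $E$ witnesses that $\mathcal{I}$ is a Yes instance.

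The step I expect to carry the real weight is this final transition from the \emph{local} envy-freeness inequalities $v(X_u \cap E) \ge c(u)$ to the \emph{exact} target equalities: it is precisely the normalization assumption \eqref{eq:w1h-ef} that makes the lower bounds add up to the total available edge weight, leaving no slack and forcing each constraint to hold with equality. The preliminary forcing of the $a$- and $b$-incident edges is routine, but it is essential, since it is what guarantees $v'_u(X_a) = c(u)$ and thereby produces exactly the right family of lower bounds to feed into the counting argument.
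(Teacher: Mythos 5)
Your proof is correct and follows essentially the same route as the paper's: force $\{a,b\}$ towards $b$, force all edges $\{a,u\}$ towards $a$ by comparing $a$'s total available value with $v'(\{a,b\})$, deduce $v(X_u) \ge c(u)$ from $u$ not envying $a$, and then use the normalization \eqref{eq:w1h-ef} to turn all these lower bounds into equalities. The only difference is presentational (you make the final counting step explicit as a tightness-of-inequalities argument, which the paper leaves implicit).
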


    \begin{nestedproof}
        Let $O'$ be an envy-free orientation of $G'$.
        Since $b$ is a leaf, it follows that $O'$ orients the edge $\{a,b\}$ towards $b$.
        Furthermore, notice that
        \[
            \sum_{u \in V} v'(\{a,u\}) = v'(\{a,b\}),
        \]
        thus it follows that every edge incident to $a$ apart from $\{a,b\}$ is oriented towards $a$.
        Now consider $u \in V$ and notice that since
        (i) the edge $\{a,u\}$ of value $c(u)$ is oriented towards $a$,
        and (ii) $u$ is not envious of $a$,
        it follows that the sum of the values of the received edges of $u$ is at least $c(u)$,
        with all such edges belonging to $E$.
        In that case, since this holds over all $u \in V$,
        \Cref{eq:w1h-ef} implies that for every $u \in V$ the sum of the values of its received edges is \emph{exactly} $c(u)$,
        and all such edges belong to $E$.
        Consequently, it follows that the restriction of $O'$ to $E$ is an orientation $\{X_u\}_{u \in V}$ of $E$ such that for all $u \in V$, $v(X_u) = c(u)$.
    \end{nestedproof}
    This completes the proof of the theorem.
\end{proof}

\subsection{Parameterization by Number of Heavy Edges}\label{subsec:EFo_heavy}

Here we consider {\EFo} on simple graphs.
Observe that most of the hardness results of \cref{subsec:EF:hardness} already apply even under this restriction.
Our main result is an algorithm with running time $2^k n^{\bO(1)}$,
where $k$ denotes the number of edges that have value at least $2$ for at least one endpoint;
by \cref{thm:seth-ef}, this is optimal (up to polynomial factors) under the SETH.
We remark that our algorithm works even for binary-encoded weights on the edges.

The rough idea of our algorithm is the following.
As a first step we fix an orientation of all ``heavy'' edges, that is, edges of value at least $2$ for some endpoint.
This leaves a residual instance in which every unoriented edge has value~$1$ for both endpoints.
To appropriately orient these remaining edges, we reduce to an equivalent instance of
\textsc{Upper Degree-Constrained Graph Orientation}, for which a polynomial-time algorithm is known~\cite{FrankGyarfas1978}.
This allows us, for each fixed orientation of the heavy edges, to decide in polynomial time
whether it can be extended to an envy-free orientation.
Finally, iterating over all possible orientations of the heavy edges (of which there are at most $2^k$)
suffices to determine whether the instance admits an envy-free orientation.

%

\begin{theoremrep}[\appsymb]\label{thm:EFo:Heavy}
    There is an algorithm
    for \EFo\ on simple graphs running in time $2^k n^{\bO(1)}$, where $k$ is the number of edges with value at least $2$ for at least one endpoint.
\end{theoremrep}

\begin{proof}
    Let $\mathcal{I} = (G=(V,E),\{v_i\}_{i \in V})$ be an instance of \EFo, where $G$ is a simple graph.
    Given a partial orientation $O$ of $G$ and a vertex $i \in V$, we denote by
    \begin{align*}
        \indeg_O(i) &= \setdef{ e \in E }{ e \text{ is oriented towards } i \text{ in } O },\\
        \outdeg_O(i) &= \setdef{ e \in E }{ e \text{ is oriented outwards from } i \text{ in } O },\\
        \unorienteddeg_O(i) &= \setdef{ e \in E }{ e \text{ is incident with } i \text{ and is not oriented in } O }.
    \end{align*}

    By \cref{obs:zero_value_goods}, we can first preprocess the instance $\mathcal{I}$ as follows.
    We delete all edges of value~$0$ for both endpoints, while,
    if an edge has value~$0$ for one endpoint and positive value for the other,
    we orient it towards the latter.
    Finally, we branch over all the (at most) $2^k$ orientations of the unoriented edges that have value at least $2$ for at least one endpoint.
    After fixing one such branch, we obtain a partial orientation~$O$ in which
    every edge $e = \{i,j\}$ that is still unoriented satisfies $v_i(e) = v_j(e) = 1$.
    We define the \emph{revenue} of $i$ under $O$ as
    \[
        r_O(i) = \sum_{e \in \indeg_O(i)} v_i(e),
    \]
    and its \emph{demand} under $O$ as
    \[
        d_O(i) =
        \begin{cases}
            \max\bigl( \{0\} \cup \setdef{ v_i(e) }{ e \in \outdeg_O(i) } \bigr)
                & \text{if } \unorienteddeg_O(i) = \varnothing,\\[2mm]

            \max\bigl( \{1\} \cup \setdef{ v_i(e) }{ e \in \outdeg_O(i) } \bigr)
                & \text{otherwise.}
        \end{cases}
    \]
    Intuitively, $r_O(i)$ is the value allocated to agent $i$ by $O$,
    while, as \cref{claim:ef:heavy_edges} shows, $d_O(i)$ is the minimum value that agent $i$ must receive in any orientation $O^\star$ of $G$ that extends $O$
    in order not to envy any of the other agents.

    \begin{claim}\label{claim:ef:heavy_edges}
        Let $O^\star$ be an orientation of $G$ that extends $O$, and let $i \in V$.
        Agent $i$ does not envy any of its neighbors in $O^\star$ if and only if $r_{O^\star}(i) \ge d_{O}(i)$.
    \end{claim}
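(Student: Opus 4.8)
The plan is to reduce the envy-freeness condition at $i$ to a single threshold inequality comparing $r_{O^\star}(i)$ with the largest value of an edge oriented away from $i$, and then to verify that the auxiliary quantity $d_O(i)$ captures exactly this threshold, including the contribution of the edges that $O$ leaves unoriented.

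First I would exploit the graph structure. Since $G$ is simple and $v_i(e)=0$ for every edge $e$ not incident to $i$, for each neighbor $j$ we have $v_i(X_j)=v_i(\{i,j\})$ whenever $\{i,j\}$ is oriented towards $j$ in $O^\star$, and $v_i(X_j)=0$ otherwise. Hence $i$ does not envy any neighbor in $O^\star$ if and only if $r_{O^\star}(i)\ge v_i(e)$ for every $e\in\outdeg_{O^\star}(i)$, that is, if and only if
\[
    r_{O^\star}(i)\ \ge\ \max\bigl(\{0\}\cup\setdef{v_i(e)}{e\in\outdeg_{O^\star}(i)}\bigr).
\]
So it remains to show that the right-hand side can be replaced by $d_O(i)$.

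The heart of the argument is comparing $\outdeg_{O^\star}(i)$ with $\outdeg_O(i)$. Because $O^\star$ extends $O$, these two sets differ only by edges of $\unorienteddeg_O(i)$ that $O^\star$ orients away from $i$, and after preprocessing and branching every such edge has value exactly $1$ for $i$. I would then split into cases. If $\unorienteddeg_O(i)=\varnothing$, then $\outdeg_{O^\star}(i)=\outdeg_O(i)$ and the displayed maximum equals $\max(\{0\}\cup\setdef{v_i(e)}{e\in\outdeg_O(i)})=d_O(i)$ by definition, so the equivalence is immediate. If $\unorienteddeg_O(i)\neq\varnothing$ and $O^\star$ orients at least one of these edges away from $i$, that edge contributes a value of $1$ to $\outdeg_{O^\star}(i)$, so the displayed maximum equals $\max(\{1\}\cup\setdef{v_i(e)}{e\in\outdeg_O(i)})=d_O(i)$, again matching.

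The main obstacle is the remaining case, where $\unorienteddeg_O(i)\neq\varnothing$ but $O^\star$ orients every such edge towards $i$; here $\outdeg_{O^\star}(i)=\outdeg_O(i)$, yet $d_O(i)$ uses $\{1\}$ rather than $\{0\}$, so the displayed threshold is a priori strictly below $d_O(i)$ and I must argue the two inequalities remain equivalent. If some edge in $\outdeg_O(i)$ has value at least $1$ for $i$, the two maxima coincide and there is nothing to prove. Otherwise the displayed threshold is $0$, so the no-envy condition holds vacuously; but at the same time $O^\star$ routes the whole nonempty set $\unorienteddeg_O(i)$ of value-$1$ edges towards $i$, yielding $r_{O^\star}(i)\ge\lvert\unorienteddeg_O(i)\rvert\ge 1=d_O(i)$, so $r_{O^\star}(i)\ge d_O(i)$ holds as well and both sides of the equivalence are true. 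This case is precisely why the demand is defined with a floor of $1$ when unoriented edges remain at $i$: it pre-charges the unit of value that $i$ is guaranteed to collect from its retained edges.
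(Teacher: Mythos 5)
Your proof is correct and takes essentially the same approach as the paper's: reduce no-envy at $i$ (using simplicity of $G$ and the fact that $i$ only values incident edges) to the threshold inequality $r_{O^\star}(i) \ge \max\bigl(\{0\}\cup\setdef{v_i(e)}{e\in\outdeg_{O^\star}(i)}\bigr)$, then case-split on whether $O^\star$ orients some edge of $\unorienteddeg_O(i)$ away from $i$. The only cosmetic difference is in the final case ($\unorienteddeg_O(i)\neq\varnothing$ but every such edge oriented towards $i$): the paper establishes $r_{O^\star}(i)\ge 1$ by contradiction from envy-freeness, while you observe directly that $i$ collects all the value-$1$ unoriented edges so both sides of the equivalence hold; both arguments are valid.
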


    \begin{nestedproof}
        Observe that $\outdeg_O(i) \subseteq \outdeg_{O^\star}(i) \subseteq \outdeg_O(i) \cup \unorienteddeg_O(i)$,
        and that for every $e \in \unorienteddeg_O(i)$ we have $v_i(e) = 1$.
        Since $G$ is a simple graph, each neighbor of $i$ shares with $i$ exactly one edge.
        Moreover, $i$ only values edges incident to it, so $i$ does not envy any other agent in $O^\star$
        if and only if $r_{O^\star}(i) \ge v_i(e)$ for all $e \in \outdeg_{O^\star}(i)$.

        If $\unorienteddeg_O(i) = \varnothing$ the statement follows straightforwardly.
        Now assume otherwise, that is, $\unorienteddeg_O(i) \neq \varnothing$.
        If $\outdeg_{O^\star}(i) \supsetneq \outdeg_O(i)$, then once again the statement follows
        straightforwardly, since there exists $e \in \outdeg_{O^\star}(i) \cap \unorienteddeg_O(i)$ such that $v_i(e) = 1$.

        It remains to consider the case where $\outdeg_{O^\star}(i) = \outdeg_O(i)$ and $\unorienteddeg_O(i) \neq \varnothing$.
        It suffices to argue that for all envy-free orientations $O^\star$, $r_{O^\star}(i) \ge 1$.
            Assume otherwise, and let $e = \{ i, j \} \in \unorienteddeg_O(i)$.
        It holds that $e$ is oriented towards $j$ in $O^\star$, while $r_{O^\star}(i) < 1 = v_i(e)$,
        thus $i$ is envious of $j$, a contradiction.
        This concludes the proof.
    \end{nestedproof}
    Given \cref{claim:ef:heavy_edges},
    our goal is to extend $O$ to an orientation $O^\star$ of $G$ such that $r_{O^\star}(i) \ge d_{O}(i)$ for every $i \in V$,
    or to verify that such an extension does not exist.
    Observe that for all extensions $O^\star$ of $O$ and all agents $i$ it holds that $r_{O^\star}(i) - r_O(i) \le |\unorienteddeg_O(i)|$,
    since $i$ has exactly $|\unorienteddeg_O(i)|$ incident unoriented edges in $O$, each of value~$1$.
    Consequently, if there exists $i \in V$ with $d_O(i) - r_O(i) > |\unorienteddeg_O(i)|$,
    we can immediately conclude that $O$ cannot be extended to an envy-free orientation and proceed with the next branch.
    In the following, assume that for all $i \in V$ it holds that $d_O(i) - r_O(i) \le |\unorienteddeg_O(i)| \le n$;
    the second inequality follows from the fact that $G$ is a simple graph.

    \begin{claim}\label{claim:ef:heavy_edges:UDCGO}
        There is an algorithm that determines in time $n^{\bO(1)}$ whether there exists an envy-free orientation $O^\star$ of $G$
        that extends $O$. If such an orientation exists, the algorithm returns it.
    \end{claim}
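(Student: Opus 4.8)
The plan is to reduce the extension problem to an instance of \textsc{Upper Degree-Constrained Graph Orientation} (UDCGO), which asks, given a graph and an upper bound $b(i)$ on the in-degree (or more generally on the total weight of received edges) of each vertex, whether an orientation respecting all these bounds exists; this problem is polynomial-time solvable via a flow/matching argument~\cite{FrankGyarfas1978}. The key observation enabling this reduction is that by \cref{claim:ef:heavy_edges}, extending $O$ to an envy-free orientation $O^\star$ amounts precisely to orienting the remaining unit-weight edges so that each agent $i$ receives at least $d_O(i) - r_O(i)$ of them. Since every unoriented edge has value exactly $1$ for both endpoints, ``received value'' and ``number of received edges'' coincide on the residual graph, so the requirement becomes a purely combinatorial in-degree constraint.

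First I would form the residual graph $H$ consisting of exactly the edges in $\unorienteddeg_O(\cdot)$, i.e., the unoriented unit-weight edges. For each vertex $i$, set a \emph{lower} demand $\delta(i) := \max(0, d_O(i) - r_O(i))$; we have already ensured $\delta(i) \le |\unorienteddeg_O(i)|$, the degree of $i$ in $H$, so each demand is locally feasible. The task is to orient all edges of $H$ so that each $i$ receives \emph{at least} $\delta(i)$ of its incident edges. To cast this as an UDCGO (upper-bound) instance, I would complement the constraints: orienting $H$ so that $i$ receives at least $\delta(i)$ edges is equivalent to orienting $H$ so that $i$ sends out at most $\deg_H(i) - \delta(i)$ edges, which in turn is an orientation of $H$ with the upper out-degree bound $\deg_H(i) - \delta(i)$ at each vertex (or symmetrically, reversing all edges, an upper in-degree bound). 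Applying the known polynomial algorithm then decides feasibility and returns a valid orientation when one exists.

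Finally I would assemble $O^\star$ by combining the fixed orientation $O$ of the heavy and zero-value edges with the orientation of $H$ returned by the UDCGO solver. Correctness in both directions is immediate from \cref{claim:ef:heavy_edges}: any envy-free extension restricts to an orientation of $H$ satisfying the lower demands (hence the UDCGO solver succeeds), and conversely any orientation of $H$ meeting the demands yields $r_{O^\star}(i) \ge \delta(i) + r_O(i) \ge d_O(i)$ for all $i$, so no agent envies a neighbor. The running time is $n^{\bO(1)}$ since the UDCGO subroutine is polynomial and the reduction is linear in the size of $H$.

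The main obstacle I anticipate is stating the reduction to UDCGO cleanly, since the cited result is typically phrased in terms of \emph{upper} degree bounds whereas our natural requirement is a \emph{lower} bound on received edges. The fix is the complementation step above (swap in/out via edge reversal), but one must verify that orienting \emph{every} edge of $H$ is forced and that the arithmetic $\delta(i) \le \deg_H(i)$ keeps the bounds non-negative and meaningful. A secondary subtlety is confirming that the UDCGO formulation accommodates exactly these per-vertex bounds (it does, as it allows arbitrary integer upper bounds), so no scaling by edge weights is needed here precisely because the residual edges are all of unit weight.
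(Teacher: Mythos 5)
Your proposal is correct and takes essentially the same route as the paper: the paper also reduces to UDCGO on the residual graph of unoriented unit-weight edges, using exactly your complementation trick (a lower bound of $d_O(i)-r_O(i)$ on received edges becomes an upper out-degree bound of $|\unorienteddeg_O(i)| - (d_O(i)-r_O(i))$, and just $|\unorienteddeg_O(i)|$ when $r_O(i) \ge d_O(i)$), then invokes the Frank--Gy\'{a}rf\'{a}s polynomial-time algorithm and appeals to \cref{claim:ef:heavy_edges} for correctness in both directions. Your anticipated obstacle is a non-issue, since the UDCGO formulation used in the paper is already stated with per-vertex upper out-degree bounds.
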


    \begin{nestedproof}
        Our overarching strategy is to cast our problem as an instance of \textsc{Upper Degree-Constrained Graph Orientation} (UDCGO)
        (see, e.g.,~\cite{approx/CyganK15,FrankGyarfas1978,soda/Gabow06} for this and related problems).
        In UDCGO we are given an undirected graph $G' = (V',E')$ and a function $u \colon V' \to \mathbb{N}$,
        and we are asked to determine whether there exists
        an orientation $O'$ of $G'$ such that for all $i \in V'$, $|\outdeg_{O'}(i)| \le u(i)$.

        Let $\mathcal{J}$ denote the instance of UDCGO obtained by setting $G' = (V, E')$,
        where $E' \subseteq E$ is the set of edges of $G$ that are not oriented in $O$,
        and by defining $u \colon V \to \mathbb{N}$ as follows, for all $i \in V$,
        \[
            u(i) =
            \begin{cases}
                |\unorienteddeg_O(i)|
                    & \text{if } r_O(i) \ge d_O(i),\\[2mm]
                |\unorienteddeg_O(i)| - (d_O(i) - r_O(i))
                    & \text{otherwise.}
            \end{cases}
        \]
        Intuitively, we search for an orientation $O'$ of $G'$ such that for every agent $i$ with $r_O(i) < d_O(i)$,
        \emph{at least} $d_O(i) - r_O(i)$ edges out of those in $\unorienteddeg_O(i)$ are oriented towards them.
        Equivalently, this is enforced by bounding the number of edges in $\outdeg_{O^\star}(i) \cap \unorienteddeg_O(i)$ by $u(i)$.
        Since any such edge has value~$1$, this ensures that
        $r_{O^\star}(i) \ge d_O(i)$ for all $i \in V$, where $O^\star = O \cup O'$.
        Therefore, by \cref{claim:ef:heavy_edges},
        $O$ can be extended to an envy-free orientation $O^\star$ of $G$ if and only if $\mathcal{J}$ is a yes-instance of UDCGO.
        Moreover, if $O'$ is an orientation of $G'$ witnessing that $\mathcal{J}$ is a yes-instance,
        then $O^\star = O \cup O'$ is the corresponding envy-free orientation of $G$.

        Lastly, we invoke the polynomial-time algorithm of~\cite[Theorem~1]{FrankGyarfas1978} to decide $\mathcal{J}$,
        which, in the yes-case, also returns the desired orientation $O'$.
    \end{nestedproof}
    Armed with \cref{claim:ef:heavy_edges:UDCGO} we simply iterate over all the at most $2^k$ partial orientations $O$ of the heavy edges,
    and for each such $O$ check in polynomial time whether it can be extended to an envy-free orientation of $G$.
    This yields a $2^k n^{\bO(1)}$-time algorithm, as claimed.
\end{proof}

\subsection{Parameterization by Treewidth plus Maximum Shared Weight}\label{subsec:EFo_shared_weight}

At this point, we have shown that {\EFo} remains computationally hard even when restricted to simple graphs of small vertex cover and symmetric valuations:
\Cref{thm:ef:weak_np_hardness} shows weak NP-hardness for graphs of vertex cover~$2$,
while for polynomially-bounded weights \cref{thm:EF:w1h} gives a $n^{o(\vc)}$ lower bound under the ETH.
A natural question would be to consider whether we can match the latter lower bound with an algorithm.
We answer this positively: using standard DP techniques, one can get an algorithm for {\EFo} running in time $W^{\bO(\tw)} n^{\bO(1)}$,
where $W$ denotes the maximum weight of the graph and $\tw$ its treewidth.
For polynomially-bounded weights, this algorithm matches the aforementioned lower bound (see \cref{remark:EF:tw}).

Pushing this idea further, we observe that in fact similar techniques apply even for multigraphs with non-symmetric valuations,
under the caveat that in this setting $W$ is a bound on the
sum of the weights of a set of parallel edges for one of its endpoints; observe that this is a straightforward generalization of the maximum weight considered in the previous setting.
We define this to be the \emph{maximum shared weight} of an instance.
As a matter of fact, our main result in this section is to show that we can solve the (more general) {\EFmc} problem in essentially the same running time.


\begin{theoremrep}[\appsymb]\label{thm:ef:fpt_tw}
    Given an instance $\mathcal{I}$ of \EFmc\ along with a nice tree decomposition of the input multigraph of width $\tw$,
    one can decide $\mathcal{I}$ in time $W^{\bO(\tw)} (n+m)^{\bO(1)}$,
    where $W$ denotes the maximum shared weight of $\mathcal{I}$ and $n$ the number of vertices.
\end{theoremrep}

\begin{proof}
    Let $\mathcal{I} = (G,\{v_i\}_{i\in V})$ denote the instance of {\EFmc}
    and $(T,\{B_t\}_{t\in V(T)})$ the given nice tree decomposition of $G$ of width $\tw$.

    \textbf{Notations and Definitions.}
    Before describing our algorithm we first fix some notations.
    For $t \in V(T)$, we denote by $V_t$ the set of vertices of $V$ appearing in the bags of the subtree of $T$ rooted at $t$
    excluding those of $B_t$;
    more formally, $V_t := \bigcup\setdef{B_{t'}}{t'\text{ is a descendant of $t$ in $T$}} \setminus B_t$.
    Furthermore, let $E_t:=\bigcup_{i\in V_t}E_i$
    (note that by property of a tree decomposition, the edges of $E_t$ are exactly the ones of $G[V_t]$ plus the ones between $V_t$ and $B_t$).
    Finally, let $G_t$ be the graph defined by $G_t:=(V_t\cup B_t,E_t)$. We note that $B_t$ is an independent set in $G_t$.

    Let $t\in V(T)$.

    We define $A_t$ to be the set of all possible partial orientations of $G_t$, and $A^{ef}_t$ the set of all possible partial orientations of $G_t$ where all vertices of $V_t$ are non-envious. Intuitively, those are the partial orientations of $G_t$ that can be completed into an envy-free orientation of $G$.

    \textbf{Signature definition}. Let $O=\{X_i\}_{B_t\cup V_t}\in A_t$. We define the \emph{signature} of $O$ with respect to $t$
    as a triple $\sigma=(r,d,k)$ with $r\colon B_t \to [ 0,W ]$, $d\colon B_t \to [ 0,W ]$ and $k\in [0,|E|]$ defined by:
    \begin{itemize}
        \item for all $i\in B_t$, $r(i)= \min\{ v_{i} (X_{i}),W \}$,
        \item for all $i\in B_t$, $d(i)=\max_{j\in V_t}\{v_i(X_j)\}$,
        \item $k=|E_t|-\sum_{i\in B_t\cup V_t}|X_i|$
    \end{itemize}

    Let $\Sigma_t$ denote the set of all possible signatures for node $t$, and notice that $|\Sigma_t| \le |E|(W+1)^{2(\tw+1)}$.
    We define the \emph{signature function} $\sgn_t \colon A_t \to \Sigma_t$ that attributes to each partial orientation of $G_t$ its signature with respect to $t$.

    Take $O\in A_t^{ef}$. Intuitively, $\sgn_t(O)=(r,d,k)$ has two types of information: $r$ and $d$ are the minimum amount of information needed to complete $O$ into an partial envy-free orientation of $G$: $r(i)$ represents what $i$ already \emph{received} from $O$, up to $W$ as $W$ is enough to never be envious of any other vertex, while $d(i)$ represents what $i$ \emph{demands}, or what it needs to receive to not be envious of any vertex of $V_t$. This value cannot be bigger than the maximum shared weight $W$. $k$ is the number of edges from $E_t$ missing in $O$, it is the amount of edges given to charity in $O$.

    \textbf{Signature valuation definition} We define the \emph{valuation function} $val_t \colon \Sigma_t \to \{0,1\}$ defined by: for all $\sigma\in \Sigma_t$, $val_t(\sigma)=1$ if and only if there is $O\in A_t^{ef}$ such that $\sgn_t(O)=\sigma$.
    \medskip

    Take $t$ to be the root of $T$; by definition of a nice tree decomposition, $B_t$ is an empty set, $V_t=V$, $E_t=E$ and $G_t=G$. Therefore, $A_t^{ef}$ is exactly the set of all partial envy-free orientations of $G$. So, take $k$ such that $k$ is minimal and there are $r,d$ such that $val_t((r,d,k))=1$, it is equal to the minimal charity of a partial envy-free orientation of $G$. So, if we have an efficient way to compute the values of $val_t$, we can efficiently solve \EFmc.

    \textbf{In the following, we describe the algorithm.} It follows the structure of a classical dynamic programming algorithm over a tree decomposition; for all $t\in V(T)$, it computes the valuation of its signatures from the valuation of the signatures of its children in $T$.

    Take $t\in V(T)$:

    \textbf{Leaf node}. If $t$ is a leaf node, then $B_t=\varnothing$, $V_t=\varnothing$ and $E_t=\varnothing$. Therefore, $A_t=A_t^{ef}=\{\varnothing\}$. Take $epy$ the only function from $\varnothing$ to $[O,W]$, $val_t((epy,epy,0))=1$ and for all $k\in [1,m]$, $val_t((epy,epy,k))=0$.

    This can be done in linear time trivially.
    \medskip

    \textbf{Introduce node}. If $t$ is an introduce node with a child $t'$ such that $B_t=B_{t'}\cup \{i\}$ for some $i\notin B_{t'}$, the following lemma allows us to compute the values of $val_t$ over $\Sigma_t$ dynamically:
    \begin{lemma}
    For all $\sigma=(r,d,k)\in \Sigma_t$,

    \[val_t(\sigma)=
    \begin{cases}
        0 & if\ r(i)\neq 0\ or\ d(i)\neq 0 \\
        val_{t'}((r|_{B_{t'}},d|_{B_{t'}},k)) & otherwise
    \end{cases}\]

    \end{lemma}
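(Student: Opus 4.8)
The plan is to show that introducing the (isolated-in-$G_t$) vertex $i$ leaves the entire combinatorial structure below $t$ unchanged, so that the table entry at $t$ is merely a relabelling of the entry at $t'$, guarded by the constraint that $i$ can neither receive nor demand anything.

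\emph{Structural observations.} First I would establish that $V_t = V_{t'}$, that $E_t = E_{t'}$, and that $i$ is isolated in $G_t$. Since $i$ is introduced at $t$, it appears in no bag of the subtree rooted at $t'$; hence $\bigcup\{B_{t''} : t'' \text{ descendant of } t\} = \{i\} \cup B_{t'} \cup V_{t'}$, and subtracting $B_t = B_{t'}\cup\{i\}$ yields $V_t = V_{t'}$. Consequently $E_t = \bigcup_{j \in V_t} E_j = E_{t'}$ and $G_t = G_{t'}$ together with the extra vertex $i$. Moreover, by the edge-coverage axiom of tree decompositions every edge incident to $i$ lies in a bag containing $i$; as $i$ belongs only to bags on and above $t$, all neighbours of $i$ in $G_t$ lie in $B_t$, and since $B_t$ is independent in $G_t$ no edge of $E_t$ is incident to $i$. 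Thus $i$ is isolated in $G_t$, so every $O = \{X_j\}\in A_t$ satisfies $X_i = \varnothing$.

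\emph{The forced case.} Because $X_i = \varnothing$ for every partial orientation of $G_t$, we get $r(i) = \min\{v_i(\varnothing), W\} = 0$. Likewise, for each $j' \in V_t$ there is no edge between $i$ and $j'$, so $v_i(X_{j'}) = 0$ and hence $d(i) = 0$. Therefore no signature with $r(i)\neq 0$ or $d(i)\neq 0$ is attained by any orientation in $A_t$, giving $val_t(\sigma)=0$ in that case.

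\emph{The bijection.} In the remaining case $r(i)=d(i)=0$ I would exhibit the obvious bijection between $A_t$ and $A_{t'}$: since $E_t = E_{t'}$, a partial orientation of $G_t$ is literally a partial orientation of the same edge set, now viewed over $G_{t'}$ (forgetting the isolated vertex $i$). This bijection preserves membership in the envy-free sets, since the constraints defining $A_t^{ef}$ and $A_{t'}^{ef}$ range over $V_t = V_{t'}$ and no vertex of $V_t$ is incident to $i$, so the (empty) set of edges around $i$ cannot affect the envy of any $V_t$-vertex. It also preserves the signature coordinates: for $j \in B_{t'}$ the value $r(j)$ depends only on $X_j$, while $d(j) = \max_{j' \in V_t} v_j(X_{j'})$ is unchanged as $V_t = V_{t'}$; finally $k = |E_t| - \sum_{j \in B_t \cup V_t}|X_j| = |E_{t'}| - \sum_{j \in B_{t'}\cup V_{t'}}|X_j|$ because $|X_i| = 0$. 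Hence $\sgn_t(O) = (r,d,k)$ with $r(i)=d(i)=0$ if and only if its image $O'$ satisfies $\sgn_{t'}(O') = (r|_{B_{t'}}, d|_{B_{t'}}, k)$, with $O \in A_t^{ef}$ exactly when $O' \in A_{t'}^{ef}$; taking $val$ of both sides yields the stated recurrence. The only delicate point is the structural claim that $i$ is isolated in $G_t$, which rests on combining the edge-coverage axiom with the independence of $B_t$ in $G_t$; once that is secured, the identities $V_t = V_{t'}$, $E_t = E_{t'}$ and the invariance of the three signature coordinates follow by routine bookkeeping.
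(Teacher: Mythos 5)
Your proof is correct and follows essentially the same route as the paper's: establish $V_t = V_{t'}$, $E_t = E_{t'}$, and that $i$ is isolated in $G_t$, conclude that any attainable signature must have $r(i)=d(i)=0$, and then use the trivial bijection between $A_t$ and $A_{t'}$ (dropping the empty bundle $X_i$) to show it preserves signatures and envy-freeness over $V_t = V_{t'}$. In fact, your write-up justifies the structural claims (via the connectivity and edge-coverage axioms of tree decompositions) more carefully than the paper, which simply asserts them.
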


    Intuitively, we extend $val_{t'}$ to $B_t$ to define $val_t$.

    \begin{nestedproof}
    We note that $V_t=V_{t'}$. Furthermore, by definition of a tree decomposition, $i$ has no edge in common with $V_t$, and so no edge incident with $i$ is included in $E_t$: $E_t=E_{t'}$ and $i$ is an isolated vertex in $G_t$.

    Therefore, let $\sigma=(r,d,k)\in \Sigma_t$. If $d(i)\neq 0$ or $r(i)\neq 0$, then there is no $O\in A_t$ where $\sgn_t(O)=\sigma$.

    Otherwise, $d(i)=r(i)=0$, and as $E_t=E_{t'}$, let $O=\{X_j\}_{j\in B_t\cup V_t}\in A_t$, $X_i=\varnothing$. Let $O'=O\setminus\{X_i\}$, $O'\in A_{t'}$, and $\sgn_t(O)=\sigma$ if and only if $\sgn_{t'}(O')=(r|_{B_{t'}},d|_{B_{t'}},k)$.

    Finally, $O\in A_t^{ef}$ if and only if it makes all the vertices of $V_t$ non-envious, if and only if $O'$ does too, but as $V_t=V_{t'}$, it holds if and only if $O'\in A_t^{ef}$. Therefore, the lemma holds.
    \end{nestedproof}

    Concerning the complexity, supposing the values of $val_{t'}$ have already been computed, the calculation of each value of $val_t$ can be done in constant time; therefore the calculation of all the values can be done in time $|\Sigma_t|\leq m(W+1)^{2(\tw +1)}$.
    \medskip

    \textbf{Forget node}.

    Let $i,j \in V$.
    By \cref{obs:zero_value_goods} there is no edge $e \in E_{ij}$ with $v_i(e)=v_j(e)=0$,
    implying that $|E_{i,j}| \leq 2W$.
    We define the function $M_{ij} \colon [ 0,v_i(E_{ij})]^2 \times [0,v_j(E_{ij})]^2\times[0,2W] \to \{0,1\}$
    so that $M_{ij}((a_1,a_2),(b_1,b_2),k)=1$ if there is a partial orientation $O=\{X_i,X_j\}$ of the subgraph $(\{i,j\},E_{ij})$ such that $v_i(X_i)=a_1$, $v_i(X_j)=a_2$, $v_j(X_j)=b_1$, $v_j(X_i)=b_2$, and $|E_{ij}\setminus(X_i\cup X_j)|=k$. We can compute the values of $M_{ij}$ via simple dynamic programming. 

    If $t$ is a forget node with a child $t'$ such that $B_{t'}=B_t\cup \{i\}$ for some $i\notin B_t$, the following lemma allows us to compute the values of $val_t$ over $\Sigma_t$ dynamically:

    \begin{lemma}

    For all $\sigma=(r,d,k)\in\Sigma_t$:
    $val_t((r,d,k))=1$ if and only if there are $(r',d',k')\in \Sigma_{t'}$, $\{(a_j^1,a_j^2,b_j^1,b_j^2,k_j)\}_{j\in B_t}\in ([O,W]^4\times[0,2W])^{B_t}$  verifying:  \begin{itemize}
        \item $val_{t'}((r',d',k'))=1$
        \item for all $j\in B_t$, $M_{ij}((a_j^1,a_j^2),(b_j^1,b_j^2),k_j)=1$
        \item for all $j\in B_t$, $r(j)=\min\{r'(j)+b_j^1,W\}$
        \item for all $j\in B_t$, $d(j)=\max\{d'(j),b_j^2\}$
        \item $d'(i)\leq\sum_{j\in B_t}a_j^1+r'(i)$
        \item for all $j\in B_t$, $a_j^2\leq r'(i)+\sum_{j\in B_t}a_i^1$
        \item $k=k'+\sum_{j\in B_t}k_j$
    \end{itemize}

    \end{lemma}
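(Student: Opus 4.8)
The plan is to prove the equivalence by setting up a correspondence between partial orientations of $G_t$ and pairs consisting of a partial orientation of $G_{t'}$ together with an independent choice of orientation for each new edge bundle $E_{ij}$, $j\in B_t$. The first step is a structural observation specific to a forget node: from $B_{t'}=B_t\cup\{i\}$ one gets $V_t=V_{t'}\cup\{i\}$, and since $t'$ is the topmost node whose bag contains $i$, the connectivity property of tree decompositions forces every edge incident to $i$ to have its other endpoint inside the subtree rooted at $t'$. Consequently such an endpoint lies in $V_{t'}$ (in which case the edge is already in $E_{t'}$) or in $B_{t'}\setminus\{i\}=B_t$. This yields the disjoint decomposition $E_t=E_{t'}\sqcup\bigcup_{j\in B_t}E_{ij}$, so that the only goods present in $G_t$ but not in $G_{t'}$ are exactly the bundles $E_{ij}$; moreover inside $G_t$ the vertices $i$ and $j\in B_t$ share only the edges of $E_{ij}$, hence $v_i(X_j)$ and $v_j(X_i)$ are determined solely by the orientation of $E_{ij}$.

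For the forward direction I would take $O=\{X_v\}\in A_t^{ef}$ with $\sgn_t(O)=(r,d,k)$, let $O'$ be its restriction to $E_{t'}$, and for each $j\in B_t$ read off the tuple recording $v_i(X_i\cap E_{ij})$, $v_i(X_j\cap E_{ij})$, $v_j(X_j\cap E_{ij})$, $v_j(X_i\cap E_{ij})$ and the number of unoriented edges of $E_{ij}$, which satisfies $M_{ij}=1$ by definition. Writing $(r',d',k')=\sgn_{t'}(O')$, the revenue, demand, and charity updates (the third, fourth, and seventh conditions) follow by direct computation across the disjoint union: the revenue of each $j\in B_t$ is its old revenue increased by $b_j^1$ and then re-capped at $W$, its demand is the maximum of its old demand $d'(j)$ and its only new contribution $v_j(X_i)=b_j^2$ (its only new neighbour in $V_t$ being $i$), and donated edges add up. For the inequalities (fifth and sixth conditions) I use that $i\in V_t$ is non-envious in $O$, and that its neighbours in $G_t$ are exactly the $V_{t'}$-vertices and the $B_t$-vertices seen above; non-envy towards $V_{t'}$ reads $v_i(X_i)\ge d'(i)$ and non-envy towards $j\in B_t$ reads $v_i(X_i)\ge v_i(X_j)=a_j^2$. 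I also need to verify that $O'\in A^{ef}_{t'}$, which holds because each $v\in V_{t'}$ retains its bundle and values the new edges at $0$.

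The backward direction reverses the construction: from a signature with $val_{t'}=1$ I pick a witness $O'\in A^{ef}_{t'}$, from each $M_{ij}=1$ I pick an orientation of $E_{ij}$ realizing the prescribed tuple, and I glue everything into an orientation $O$ of $G_t$. The verification that $O\in A^{ef}_t$ splits into two parts. Every $v\in V_{t'}$ keeps exactly its old bundle and is blind to the new goods (each new edge is incident only to $i$ or to some $j\in B_t$, hence valued $0$ by $v$), so its non-envy is inherited from $O'$. The newly forgotten vertex $i$ is non-envious precisely because the fifth and sixth conditions guarantee $v_i(X_i)\ge d'(i)$ and $v_i(X_i)\ge a_j^2=v_i(X_j)$ for all $j\in B_t$, covering all of $i$'s neighbours. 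A final routine computation, identical to the forward one, confirms $\sgn_t(O)=(r,d,k)$.

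The step I expect to be the main obstacle is the bookkeeping around the truncation at $W$ in the revenue coordinates. Both inequalities are phrased using the capped quantity $r'(i)$ rather than $i$'s true accumulated value $v_i(X_i)=v_i(X_i\cap E_{t'})+\sum_{j\in B_t}a_j^1$, so I must check in each direction that this truncation loses nothing, i.e.\ that $r'(i)+\sum_j a_j^1$ certifies the non-envy of $i$ if and only if the true revenue does. This reduces to a short case analysis on whether $v_i(X_i\cap E_{t'})$ has already reached the cap, and crucially uses that the quantities being dominated, namely $d'(i)$ and each $a_j^2\le v_i(E_{ij})$, are themselves bounded by $W$, which is precisely the defining property of the maximum shared weight. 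The same truncation argument underlies the correctness of the re-capped revenue update in the third condition.
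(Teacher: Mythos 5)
Your proposal is correct and follows essentially the same route as the paper's proof: the disjoint decomposition $E_t = E_{t'} \cup \bigcup_{j\in B_t} E_{ij}$ at a forget node, restricting/gluing orientations to relate $\sgn_t$ to $\sgn_{t'}$ plus the local $M_{ij}$ data, inheriting non-envy for $V_{t'}$ via zero valuations, and reducing $i$'s non-envy to the two stated inequalities. You also correctly isolate the one genuinely delicate point --- that replacing $v_i(X_i\cap E_{t'})$ by the capped $r'(i)$ is harmless because $d'(i)$ and each $a_j^2$ are bounded by $W$ --- which is exactly the argument the paper uses.
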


    Intuitively, a partial orientation of $G_t$ is composed of a partial orientation of $G_{t'}$ plus a partial orientation of the edges between $i$ and $B_t$, which allows us to use the values of $val_{t'}$ to calculate the values of $val_t$.

    \begin{nestedproof}

    We note that $V_t=V_t'\cup \{i\}$, and $E_t=E_{t'}\cup \bigcup_{j\in B_t} E_{ij}$. Also, $V_t\cup B_t=V_{t'}\cup B_{t'}$.

    In that case, a partial orientation $O\in A_t$ can be seen as a partial orientation $O'\in A_{t'}$ together with a partial orientation $O''$ of $(B_t',\bigcup_{j\in B_t} E_{i,j})$. In the following, we show that given a partial orientation $O\in A_t$, its signature $\sgn_t(O)$ can be computed using only the following information: the signature of $O'$ the restriction of $O$ to $G_{t'}$, and the orientation of $E_{i,j}$ for all $j\in V_t$ of $O$.

    Let $O=\{X_j\}_{j\in V_t\cup B_t}\in A_t$.

    For all $j\in B_t$, let $\alpha_j:=E_{ij}\cap X_i$, $\beta_j:=E_{ij}\cap X_j$, and $X'_j:=X_j\setminus E_{ij}$: $\alpha_j$ is the set edges of $E_{ij}$ received by $i$ while $\beta_j$ is the set of the ones received by $j$. $X'_j$ is the set of edges received by $j$ from $G_{t'}$.

    Let $X'_i:=X_i\setminus (\bigcup_{j\in B_t}E_{ij})$ and for all $\ell\in V_{t'}$, let $X'_\ell:=X_\ell$.  $O':= \{X'_j\}_{j\in B_{t'}\cup V_{t'}}$ is a partial orientation of $G_{t'}$, the restriction of $O$ to $G_{t'}$, so $O'\in A_{t'}$. Let $\sigma=(r,d,k)=\sgn_t(O)$ and $\sigma'=(r',d',k')=\sgn_{t'}(O')$.

    We note that $k=k'+\sum_{j\in B_t}|E_{ij}\setminus(\alpha_j\cup\beta_j)|$.

    For all $j\in B_t$, let $a^j_1:= v_i(\alpha_j)$, $a^j_2:= v_i(\beta_j)$, $b^j_1:= v_j(\alpha_j)$ and $b^j_2:= v_j(\beta_j)$. By linearity of the value functions, we have: $v_i(X_i)=v_i(X'_i)+\sum_{j\in B_t}a_j^1$, and for all $j\in B_t$, $v_j(X_j)=v_j(X'_j)+b^j_1$, $v_i(X_j)=v_i(X'_j)+a_2^j=a_2^j$ and $v_j(X_i)=v_j(X'_i)+b_2^j=b_2^j$.

    By definition:\begin{itemize}
        \item for all $j\in B_t$, $r(j)=\min\{v_{j}(X_{j}),W\}$
        \item for all $j\in B_t$, $d(j)=\max_{l\in V_t}\{v_j(X_l)\}$
    \end{itemize}

    In the following, we will reformulate those two previous equalities:

    First, for all $j\in B_t$, $r(j)=\min\{v_{j}(X_{j}),W\}=\min\{v_{j}(X'_{j})+b_j^1,W\}=\min\{\min\{v_j(X'_j),W\}+b_j^1,W\}=\min\{r'(j)+b_j^1,W\}$.

    Second, for all $j\in B_t$, $\max_{l\in V_t}\{v_j(X_l)\}=\max\{\max_{\ell\in V_t'}\{v_j(X'_l)\},b_2^j\}=\max\{d'(j),b_2^j\}$.

    We obtain the two following formulas: for all $j\in B_t$, $r(j)=\min\{r'(j)+b_j^1,W\}$ and $d(j)=\max\{d'(j),b_2^j\}$.

    Furthermore, $O\in A_t^{ef}$ if , by definition, $O$ makes the vertices of $V_t$ non-envious, if and only if it does on $V_t'$ and on $\{i\}$.

    First, for all $\ell\in V_t'$, $\ell'\in V_t\cup B_t$, $E_{\ell\ell'}\subseteq E_{t'}$. So, as $O'$ is the restriction of $O$ on $E_{t'}$, $O$ makes the vertices of $V_t'$ non-envious if and only if $O'$ does, and therefore if and only if $O'\in A_{t'}^{ef}$.

    Second, $i$ is non-envious if it is non-envious of the vertices of $B_t$ and $V_{t'}$. But, on one side, for all $j\in B_t$, $v_i(X_i)=v_i(X'_i)+\sum_{j\in B_t}a_i^1$ and $v_i(X_j)=a_2^j\leq W$, so $v_i(X_i)\geq v_i(X_j)$ if and only if $r'(i)+\sum_{j\in B_t}a_i^1\geq a_2^j$,
    and on the other side, $\max_{\ell\in V_{t'}}\{v_i(X_\ell)\}=\max_{\ell\in V_{t'}}\{v_i(X'_\ell)\}=d'(i)$ and $v_i(X_i)= v_i(X'_i)+\sum_{j\in B_t}a_j^1$, so as $d'(i)\leq W$,  $v_i(X_i)\geq \max_{\ell\in V_{t'}}\{v_i(X_\ell)\}$ if and only if $r'(i)+\sum_{j\in B_t}a_j^1\geq d'(i)$.

    So, $O$ is envy-free on $V_t$ if and only if $O'$ is envy-free on $V_t'$, $d'(i)\leq r'(i)+\sum_{j\in B_t}a_j^1$ and for all $j\in B_t$, $r'(i)+\sum_{j\in B_t}a_i^1\geq a_2^j$.

    We can deduce the following property: let $\sigma=(r,d,k)$ be a signature of $t$. Then $val_t(\sigma)=1$ if and only if there is $\sigma'=(r',d',k')$ a signature of $t'$, $O''=\{X''_j\}_{j\in B_t'}$ an orientation of $\{B_t',\bigcup_{j\in B_t}E_{ij}\}$ respecting:
    \begin{itemize}
        \item $val_{t'}(\sigma')=1$
        \item for all $j\in B_t$, $r(j)=min\{W,r'(j)+v_j(X''_j)\}$
        \item for all $j\in B_t$, $d(j)= \max\{d'(j),v_j(X''_i)\}$
        \item $d'(i)\leq v_i(X''_i)+v_i(X'_i)$
        \item for all $j\in B_t$, $r'(i)+v_i(X''_i)\geq v_j(X''_i)$
        \item $k=k'+ |E_{ij}\setminus (\bigcup_{j\in B_{t'}}X''_j)|$
    \end{itemize}

    Therefore, we can deduce the following recursive formula:

    $val_t((r,d,k))=1$ if and only if there are $(r',d',k')\in \Sigma_{t'}$, $\{(a_j^1,a_j^2,b_j^1,b_j^2,k_j)\}_{j\in B_t}\in ([O,W]^4\times[0,2W])^{B_t}$  verifying: \begin{itemize}
        \item $val_{t'}((r',d',k'))=1$
        \item for all $j\in B_t$, $M_{ij}((a_j^1,a_j^2),(b_j^1,b_j^2),k_j)=1$
        \item for all $j\in B_t$, $r(j)=\min\{r'(j)+b_j^1,W\}$
        \item for all $j\in B_t$, $d(j)=\max\{d'(j),b_j^2\}$
        \item $d'(i)\leq\sum_{j\in B_t}a_j^1+r'(i)$
        \item for all $j\in B_t$, $a_j^2\leq(r'(i)+\sum_{j\in B_t}a_i^1)$
        \item $k=k'+\sum_{j\in B_t}k_j$
    \end{itemize}

    \end{nestedproof}

    Concerning the complexity, supposing the values of $val_{t'}$ have already been computed, there are at most $(W^5)^{\bO(tw)}$ different set quadruplets of integers and $m\times W^{\bO(\tw)}$ different signatures of $t'$ to consider. Therefore, this step can be done in time $m\times W^{\bO(tw)}\times W^{\bO(tw)}=m\times W^{\bO(tw)}$.
    \medskip

    \textbf{Join node}. If $t$ is a join node with two children $t_1,t_2$ such that $B_t=B_{t_1}=B_{t_2}$, the following lemma allows us to compute the values of $val_t$ over $\Sigma_t$ dynamically:

    \begin{lemma}
    For all $\sigma=(r,d,k)\in\Sigma_t$:
    $val_t((r,d,k))=1$ if and only if there are
     $\sigma_1=(r_1,d_1,k_1)\in \Sigma_{t_1}$ and $\sigma_2=(r_2,d_2,k_2)\in \Sigma_{t_2}$ such that:
    \begin{itemize}
        \item $val_{t_1}(\sigma_1)=1$ and $val_{t_2}(\sigma_2)=1$
        \item for all $j\in B_t$, $r(j)=\min\{r_1(j)+r_2(j),W\}$
        \item for all $j\in B_t$, $d(j)=\max\{d_1(j),d_2(j)\}$
        \item $k=k_1+k_2$
    \end{itemize}

    \end{lemma}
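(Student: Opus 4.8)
The plan is to exploit the structure that a join node imposes on the tree decomposition. Since $t$ is a join node with children $t_1,t_2$ and $B_t=B_{t_1}=B_{t_2}$, the connectivity property of the decomposition gives $V_t=V_{t_1}\sqcup V_{t_2}$ with \emph{no} edge between $V_{t_1}$ and $V_{t_2}$. Because $B_t$ is independent in $G_t$, every edge of $E_t$ is incident to exactly one of $V_{t_1},V_{t_2}$, so $E_t=E_{t_1}\sqcup E_{t_2}$. This disjointness is the crux: it lets any partial orientation of $G_t$ be split cleanly along the two sides, and conversely lets two partial orientations be glued without contending for the same edge.

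For the forward direction I would take $O=\{X_i\}_{i\in B_t\cup V_t}\in A_t^{ef}$ with $\sgn_t(O)=\sigma$, and restrict it: for $i\in B_t$ set $X_i^\ell=X_i\cap E_{t_\ell}$, and for $i\in V_{t_\ell}$ keep $X_i$ unchanged (all its incident edges lie in $E_{t_\ell}$). This yields $O_1\in A_{t_1}$ and $O_2\in A_{t_2}$. The first key observation is that a vertex $l\in V_{t_\ell}$ has all its neighbours in $V_{t_\ell}\cup B_t$ and values only edges of $E_{t_\ell}$, so its envy status in $O$ coincides with its status in $O_\ell$; hence $O\in A_t^{ef}$ forces $O_1\in A_{t_1}^{ef}$ and $O_2\in A_{t_2}^{ef}$, and thus $val_{t_1}(\sigma_1)=val_{t_2}(\sigma_2)=1$ for $\sigma_\ell=(r_\ell,d_\ell,k_\ell)=\sgn_{t_\ell}(O_\ell)$. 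Additivity of the valuations then gives $v_j(X_j)=v_j(X_j^1)+v_j(X_j^2)$ for $j\in B_t$, while $d(j)=\max_{l\in V_t}v_j(X_l)=\max\{d_1(j),d_2(j)\}$ since $V_t=V_{t_1}\sqcup V_{t_2}$, and a direct count using $|E_t|=|E_{t_1}|+|E_{t_2}|$ yields $k=k_1+k_2$. For the revenue $r$ I would invoke the elementary identity $\min\{a+b,W\}=\min\{\min\{a,W\}+\min\{b,W\},W\}$ valid for $a,b\ge 0$, which gives $r(j)=\min\{r_1(j)+r_2(j),W\}$ even though the exact value $v_j(X_j^\ell)$ may be truncated at $W$ inside $\sigma_\ell$. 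Hence $\sigma$ satisfies all four stated relations.

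For the converse, given $\sigma_1,\sigma_2$ meeting the four relations with $val_{t_\ell}(\sigma_\ell)=1$, I would pick witnesses $O_\ell\in A_{t_\ell}^{ef}$ with $\sgn_{t_\ell}(O_\ell)=\sigma_\ell$ and glue them: orient each edge of $E_{t_\ell}$ exactly as in $O_\ell$ (leaving it unoriented if it is unoriented there), and for $j\in B_t$ let $X_j$ be the union of the edges it receives from the two sides. Since $E_t=E_{t_1}\sqcup E_{t_2}$, the two sides never assign the same edge, so $O$ is a well-defined partial orientation of $G_t$ whose restriction to $G_{t_\ell}$ is precisely $O_\ell$. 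By the envy-equivalence established above, $O\in A_t^{ef}$, and by the same three computations its signature is $(\min\{r_1+r_2,W\},\max\{d_1,d_2\},k_1+k_2)=\sigma$, so $val_t(\sigma)=1$.

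I expect the main obstacle to be the careful bookkeeping of this correspondence rather than any deep difficulty: specifically, justifying that the truncation in the definition of $r$ is harmless (the $\min$-identity) and that envy-freeness genuinely decomposes across the two sides because no vertex of $V_{t_1}$ values an edge of $E_{t_2}$. The complexity bound follows by evaluating the recurrence over all pairs $(\sigma_1,\sigma_2)\in\Sigma_{t_1}\times\Sigma_{t_2}$, which costs $|\Sigma_{t_1}|\cdot|\Sigma_{t_2}|=W^{\bO(\tw)}(n+m)^{\bO(1)}$ per join node, matching the claimed running time.
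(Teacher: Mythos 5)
Your proposal is correct and takes essentially the same approach as the paper: both split a partial orientation of $G_t$ along the edge-disjoint union $E_t=E_{t_1}\sqcup E_{t_2}$, observe that envy-freeness of vertices in $V_{t_\ell}$ depends only on the restriction to $G_{t_\ell}$, and verify the signature arithmetic (with the truncation identity $\min\{a+b,W\}=\min\{\min\{a,W\}+\min\{b,W\},W\}$, which the paper uses implicitly). If anything, you are more explicit than the paper about the gluing in the converse direction and about why the truncation at $W$ is harmless.
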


    Intuitively, a partial orientation of $G_t$ is a combination of a partial orientation of $G_{t_1}$ and $G_{t_2}$; therefore, we can combine any possible valid partial orientation of $G_{t_1}$ and $G_{t_2}$ into a valid partial orientation of $G_t$.
    \begin{nestedproof}
    Note that $V_t=V_{t_1}\cup V_{t_2}$, with $V_{t_1}\cap V_{t_2}=\varnothing$. Similarly, $E_t=E_{t_1}\cup E_{t_2}$, with $E_{t_1}\cap E_{t_2}=\varnothing$.

    Let $O=\{X_j\}_{j\in\{B_t\cup V_t\}}\in A_t$; for all $j\in B_{t_1}\cup V_{t_1}$, let $X_j^1:=X_j\cap E_{t_1}$ and
    for all $j\in B_{t_2}\cup V_{t_2}$, let $X_j^2:=X_j\cap E_{t_2}$.

    Let $O_1:=\{X_j\}_{j\in\{B_{t_1}\cup V_{t_1}\}} $ and $O_2:=\{X_j\}_{j\in\{B_{t_2}\cup V_{t_2}\}}$. They are the restrictions of $O$ to $G_{t_1}$ and $G_{t_2}$ respectively: $O_1\in A_{t_1}$ and $O_2\in A_{t_2}$.

    Let $\sigma=(r,d,k)=\sgn_t(O)$, $\sigma_1=(r_1,d_1,k_1)=\sgn_{t_1}(O_1)$ and $\sigma_2=(r_2,d_2,k_2)=\sgn_{t_2}(O_2)$.

    for all $j\in B_t$, $r(j)=\min\{v_j(X_j),W\}$ and $d(j)=\max_{l\in V_t}\{v_j(X_l)\}$ by definition. In the following, we will rewrite those two equations:
    \begin{align*}
        r(j) & = \min\{v_j(X_j),W\}\\ &=\min \{v_j(X_j^1)+v_j(X_j^2),W\}\\ &=\min\{r_1(j)+r_2(j),W\}
    \end{align*}
    \begin{align*}
        d(j)&=\max_{l\in V_t}\{v_j(X_l)\}\\&=\max\{\max_{l\in V_{t_1}}\{v_j(X_l)\},\max_{l\in V_{t_2}}\{v_j(X_l)\}\}\\&=\max\{d_1(j),d_2(j)\}
    \end{align*}

    In the end, we obtain $r(j)=\min\{r_1(j)+r_2(j),W\}$ and $d(j)=\max\{d_1(j),d_2(j)\}$.

    Furthermore, $O$ is envy-free on $V_t$ if and only if it is envy-free on $V_{t_1}$ and $V_{t_2}$, if and only if $O_1$ is envy-free on $V_{t_1}$ and $O_2$ is envy-free on $V_{t_2}$. So, $O\in A_t^{ef}$ if and only if $O_1\in A_{t_1}^{ef}$ and $A_{t_2}^{ef}$.

    We obtain the following recursive formula:

    $val_t((r,d,k))=1$ if and only if there are
     $\sigma_1=(r_1,d_1,k_1)\in \Sigma_{t_1}$ and $\sigma_2=(r_2,d_2,k_2)\in \Sigma_{t_2}$ such that:
    \begin{itemize}
        \item $val_{t_1}(\sigma_1)=1$ and $val_{t_2}(\sigma_2)=1$
        \item for all $j\in B_t$, $r(j)=\min(r_1(j)+r_2(j),W)$
        \item for all $j\in B_t$, $d(j)=\max\{d_1(j),d_2(j)\}$
        \item $k=k_1+k_2$
    \end{itemize}

    \end{nestedproof}

    Concerning the complexity, supposing the values of $val_{t_1}$ and $val_{t_2}$ have already been computed, there are $m\times W^{\bO(tw)}$ signatures of $t_1$ and $t_2$ to consider. Therefore, this step can be done in time $m W^{\bO(tw)}\times mW^{\bO(tw)}=m^2 W^{\bO(tw)}$.
    \medskip

    This concludes the description and correctness of our algorithm. We now argue about its complexity: as the nice tree decomposition can be supposed to be of polynomial size and each step can be done in time at most $m^2 W^{\bO(tw)}$, the algorithm has complexity $W^{\bO(tw)}(nm)^{\bO(1)}$.
\end{proof}

\begin{proofsketch}
    Our algorithm is a standard DP over a nice tree decomposition of the input graph.
    For a node $t$ of the given nice tree decomposition, let $G_t$ denote the corresponding subgraph,
    while $B_t$ denotes the set of vertices appearing in the bag of node $t$.
    For every node $t$ of the tree decomposition, we keep track of whether there exists a partial orientation of $G_t$
    (excluding the edges among vertices of $B_t$)
    that guarantees that the vertices of $V(G_t) \setminus B_t$ do not envy any vertex of $V(G_t)$,
    while also keeping track of the number of unoriented edges of the partial orientation
    as well as the following information for each of the at most $\tw+1$ vertices in $B_t$:
    \begin{itemize}
        \item its \emph{revenue} due to edges shared with vertices in $V(G_t) \setminus B_t$,
        \item its \emph{demand} so that it does not envy any vertex of $V(G_t) \setminus B_t$.
    \end{itemize}
    Observe that both the revenue and the demand of a vertex in $B_t$ can be upper-bounded by~$W$,
    as a vertex of revenue $W$ cannot envy any other vertex, while a vertex may envy another only up to a value of at most $W$.
    Lastly, by \cref{obs:zero_value_goods} we can assume that there are no edges of valuation~$0$ for both endpoints,
    thus there are at most $2W$ edges shared between any two vertices, implying that
    the number of unoriented edges is at most $2W \cdot \binom{n}{2}$.
    Consequently, the total running time is $W^{\bO(\tw)} (n+m)^{\bO(1)}$.
\end{proofsketch}

We obtain the following corollary for polynomially bounded weights, in which case we have $W = (n+m)^{\bO(1)}$.%

\begin{corollary}\label{cor:EFmc_poly_weights_algo}
    There is an algorithm running in time $(n+m)^{\bO(\tw)}$ for \EFmc\ with polynomially-bounded weights.
\end{corollary}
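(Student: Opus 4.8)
The plan is to derive the corollary as an immediate consequence of \cref{thm:ef:fpt_tw}, whose running time $W^{\bO(\tw)}(n+m)^{\bO(1)}$ depends superpolynomially only on the maximum shared weight $W$. Two points need attention: bounding $W$ under the polynomial-weight assumption, and supplying the nice tree decomposition that \cref{thm:ef:fpt_tw} takes as input. For the first, recall that the maximum shared weight is the largest value $\sum_{e\in E_{ij}}v_i(e)$ taken over all pairs $i,j$ and both endpoints. Since each individual edge weight is bounded by $(n+m)^{\bO(1)}$ and at most $m$ parallel edges join any two vertices, we get $W \le m\cdot (n+m)^{\bO(1)} = (n+m)^{\bO(1)}$. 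Substituting this into the bound of \cref{thm:ef:fpt_tw} yields $\bigl((n+m)^{\bO(1)}\bigr)^{\bO(\tw)}(n+m)^{\bO(1)} = (n+m)^{\bO(\tw)}$, as desired.

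For the second point, I would first compute a tree decomposition of width $\bO(\tw)$ using a known algorithm --- either the exact FPT algorithm parameterized by treewidth~\cite{siamcomp/Bodlaender96,stoc/KorhonenL23} or a constant-factor approximation~\cite{focs/Korhonen21} --- and then convert it to a nice tree decomposition in linear time, as recalled in the preliminaries. A constant-factor blow-up in the width is harmless here, since $W^{\bO(c\cdot\tw)}=W^{\bO(\tw)}$ for any fixed $c$, so the substitution above is unaffected. The running time of this preprocessing step is of the form $2^{\bO(\tw)}(n+m)^{\bO(1)}$, which is dominated by the target bound $(n+m)^{\bO(\tw)}$ whenever the instance is non-trivial, i.e.\ $n+m\ge 2$, as any $2^{\bO(\tw)}$ factor is absorbed into $(n+m)^{\bO(\tw)}$.

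I expect no genuine obstacle: the statement is a direct specialization of \cref{thm:ef:fpt_tw} to the polynomial-weight regime. The only mildly technical points are verifying the bound $W = (n+m)^{\bO(1)}$ (which must account for the accumulation of weight over parallel edges, hence the extra factor of $m$) and confirming that obtaining the tree decomposition fits within the stated budget. Both are routine, and combining them with \cref{thm:ef:fpt_tw} completes the argument.
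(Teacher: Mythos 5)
Your proposal is correct and takes essentially the same route as the paper, which presents the corollary as an immediate specialization of \cref{thm:ef:fpt_tw} once one observes that $W=(n+m)^{\bO(1)}$ for polynomially-bounded weights. The only imprecision is attributing a $2^{\bO(\tw)}(n+m)^{\bO(1)}$ preprocessing time to the exact treewidth algorithms you cite (their parametric dependence is $2^{\bO(\tw^3)}$ resp.\ $2^{\bO(\tw^2)}$, which is not absorbed by $(n+m)^{\bO(\tw)}$); your argument goes through verbatim via the constant-factor approximation of~\cite{focs/Korhonen21}, which does run in $2^{\bO(\tw)} n^{\bO(1)}$ time.
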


\begin{remark}\label{remark:EF:tw}
    The algorithm of \cref{cor:EFmc_poly_weights_algo} matches the lower bound of $n^{o(\vc)}$ in \cref{thm:EF:w1h}.
    Recall that the treewidth of a graph is upper-bounded by its vertex cover number and that for simple graphs, $m = \bO(n^2)$ holds.
\end{remark}

\section{EFX Orientations}

The following observations will be useful throughout this section. Recall the definition of \emph{strong} envy from \cref{def:envy}.

\begin{observation}\label{obs:EFX_one_good}
    An agent that has only received one good is not strongly envied by any agent.
\end{observation}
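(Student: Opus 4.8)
The plan is to unpack the definition of strong envy (\cref{def:envy}) and combine it with the non-negativity of the additive valuation functions. Suppose some agent $j$ has received exactly one good, i.e.\ $X_j = \{e\}$ for a single good $e$. According to the definition, an agent $i$ strongly envies $j$ only if there exists some good $e' \in X_j$ such that $v_i(X_i) < v_i(X_j \setminus e')$. Since $X_j$ is a singleton, the only candidate for $e'$ is $e$ itself, in which case $X_j \setminus e' = \varnothing$.

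The key step is then to observe that, by the standing assumption $v_i(\varnothing)=0$, the strong-envy condition would amount to $v_i(X_i) < v_i(\varnothing) = 0$. But every valuation function maps into $\mathbb{N}$, so $v_i(X_i) \ge 0$ for all $i$, which contradicts $v_i(X_i) < 0$. Hence no agent can strongly envy $j$, which is exactly the claim.

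I do not anticipate any real obstacle here: the observation is an immediate consequence of the definition of strong envy together with non-negativity of valuations. The only point worth flagging is that the argument relies on $v_i(\varnothing)=0$ (explicitly assumed in the model) rather than on additivity; removing the unique good simply empties $j$'s bundle, and no agent's own bundle can be worth strictly less than the empty bundle.
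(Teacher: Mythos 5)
Your proof is correct and matches the intended justification: the paper states this as an observation without proof precisely because it follows immediately from the definition of strong envy together with $v_i(\varnothing)=0$ and non-negativity of valuations, which is exactly the argument you spell out.
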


\begin{observation}\label{obs:EFX_zero_value}
    For agents $i,j$, if $i$ does not strongly envy~$j$ and $j$ has received a good $g$ s.t.~$v_i(g) = 0$, then $i$ does not envy $j$.
\end{observation}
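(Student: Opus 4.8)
The plan is to unfold both definitions from \cref{def:envy} and then exploit additivity together with the hypothesis $v_i(g) = 0$. First I would rewrite the assumption that $i$ does not strongly envy $j$: by \cref{def:envy}, this means precisely that $v_i(X_i) \ge v_i(X_j \setminus e)$ holds for \emph{every} good $e \in X_j$. Since the good $g$ is assumed to lie in $X_j$, I would instantiate this inequality at the particular choice $e = g$, obtaining $v_i(X_i) \ge v_i(X_j \setminus g)$.

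The second step is to relate $v_i(X_j \setminus g)$ back to $v_i(X_j)$. Because all valuation functions are additive, I have $v_i(X_j) = v_i(X_j \setminus g) + v_i(g)$, and the hypothesis $v_i(g) = 0$ collapses this to $v_i(X_j) = v_i(X_j \setminus g)$. Chaining the two relations then yields $v_i(X_i) \ge v_i(X_j)$, which is exactly the assertion that $i$ does not envy $j$ in the sense of \cref{def:envy}, completing the argument.

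I do not anticipate any genuine obstacle: the statement is essentially a one-line consequence of additivity and the two definitions. The only point warranting a little care is that the hypothesis must actually license the instantiation $e = g$; this is guaranteed because $g \in X_j$ by assumption, so the ``up to any good'' universal quantifier in the definition of strong envy indeed ranges over the removal of $g$. In particular, this observation is the $v_i(g)=0$ counterpart of \cref{obs:EFX_one_good}, and both will presumably be used to rule out envy once strong envy has already been excluded elsewhere in the section.
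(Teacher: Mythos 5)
Your proof is correct and matches the intended argument: the paper states this observation without proof precisely because it is the one-line consequence of \cref{def:envy} and additivity that you spell out, namely instantiating the no-strong-envy inequality at $e = g$ and using $v_i(X_j) = v_i(X_j \setminus g) + v_i(g) = v_i(X_j \setminus g)$. Your side remark about needing $g \in X_j$ to license the instantiation is exactly the right (and only) point of care.
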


\subsection{From EF to EFX Orientations}\label{subsec:EF->EFX}

It was first proven by Christodoulou et al.~\cite{EFXgraphs} that \EFXo\ is NP-complete even with symmetric valuations;
this hardness persists even for simple graphs of vertex cover number $8$ or for multigraphs consisting of $10$ vertices,
where the problem was proven to be weakly NP-complete (Deligkas et al.~\cite{ijcai/DeligkasEGK25}).
In a follow-up work, Afshinmehr et al.~\cite{AfshinmehrDKMR25} show NP-completeness even for symmetric instances on bipartite multigraphs with $8$ vertices.
Here we present a reduction from \EFo\ to \EFXo\ that, along with \cref{subsec:EF:hardness},
will allow us to complement and improve upon the aforementioned results. See \cref{fig:EF_to_EFX_main} for an overview of our reduction.

   \begin{figure}[ht]
        \centering
        \includegraphics[width=0.5\linewidth]{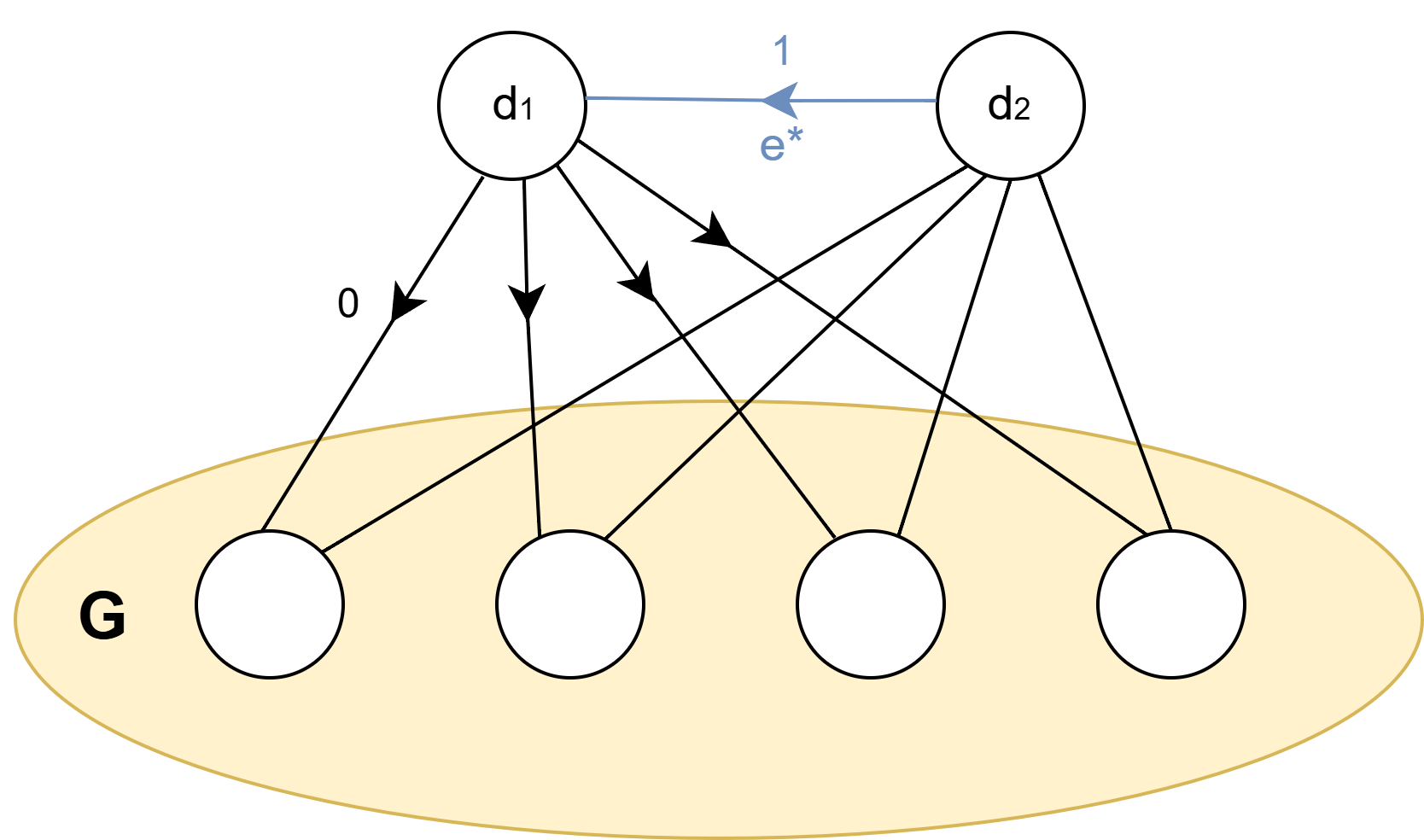}
        \caption{Our reduction from \EFo\ to \EFXo\ in \cref{thm:efo->efxo}. If $e^*$ is oriented towards~$d_1$, then all other edges incident to $d_1$ must be oriented away from $d_1$ to obtain an EFX orientation. Every $v\in G$ receives an edge of weight $0$, which forces equivalence between EF and EFX by \cref{obs:EFX_zero_value}. A symmetrical argument holds if $e^*$ is oriented towards $d_2$. For more details, see Appendix C.}
        \label{fig:EF_to_EFX_main}
    \end{figure}

\begin{theoremrep}[\appsymb]\label{thm:efo->efxo}
    There is a polynomial-time, logarithmic-space reduction from \EFo\ to \EFXo\ that only adds 2 vertices
    and edges of symmetric binary valuation.
    The reduction preserves simplicity.
\end{theoremrep}

\begin{proof}
    Let $\mathcal{I}=(G=(V,E),\{v_i\}_{i\in V})$ be an instance of \EFo.
    We build an $\EFXo$ instance $\mathcal{I'}$ by adding two vertices $d_1,d_2$ to $G$, each connected with every $v\in V$ by edges of weight $0$ (for both endpoints). Additionally, we connect $d_1$ and $d_2$ with an edge $e^*$ of weight $1$ (for both endpoints). Let $G'=(V',E')$ be the resulting graph of~$\mathcal{I}'$ (see \cref{fig:EF_to_EFX}).

    \begin{figure}[ht]
        \centering
        \includegraphics[width=0.5\linewidth]{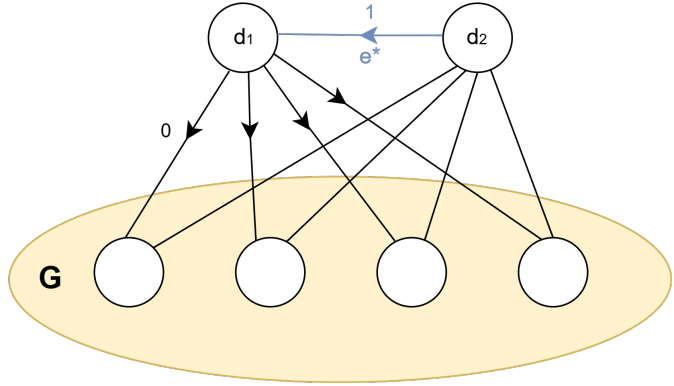}
        \caption{Our reduction from \EFo\ to \EFXo. If $e^*$ is oriented towards~$d_1$, then all other edges incident to $d_1$ must be oriented away from $d_1$ to obtain an EFX orientation. Every $v\in G$ receives an edge of weight $0$, which forces equivalence between EF and EFX by \cref{obs:EFX_zero_value}. A symmetrical argument holds if $e^*$ is oriented towards $d_2$.}
        \label{fig:EF_to_EFX}
    \end{figure}

    We prove that if $\mathcal{I}$ has an EF orientation, then $\mathcal{I}'$ has an EFX orientation. Take an EF orientation $O$ of $\mathcal{I}$ and extend it by orienting the edges in $E'\setminus E$ as follows to obtain an orientation~$O'$.
    Orient $e^*$ towards $d_1$. Then, orient all other edges incident to $d_2$ towards $d_2$ and all other edges incident to $d_1$ away from $d_1$. By \cref{obs:EFX_one_good}, $d_1$ is not strongly envied by any $v\in V'$; the same holds for $d_2$ since all of its allocated edges are of weight $0$ (for both endpoints). Additionally, $d_1$ does not envy any other vertex, since
    it has received its unique incident edge with non-zero weight. Observe that $d_2$ only envies $d_1$, since it has received all of its incident edges except $e^*$ (however, this envy is not strong, by \cref{obs:EFX_one_good}). Since $O$ is EF, it holds that $u$ does not envy $v$ for all $u,v \in V$, implying that $u$ does not strongly envy $v$. From all the above we infer that $O'$ is EFX.

    We now prove the converse: if $\mathcal{I}'$ has an EFX orientation, then $\mathcal{I}$ has an EF orientation. Take an EFX orientation $O'$ of $\mathcal{I}'$. Without loss of generality, assume $e^*$ is oriented towards $d_1$ in $O'$. Then, for $O'$ to be an EFX orientation, $d_1$ must not receive any other edge (otherwise $d_2$ would strongly envy $d_1$), implying that all other edges incident to $d_1$ are oriented away from $d_1$ in $O'$. Hence, every vertex $v\in V$ has received a zero-weight edge in $O'$. For all $u,v\in V$, $u$ does not strongly envy $v$ because $O'$ is EFX; by \cref{obs:EFX_zero_value}, this implies that $u$ does not envy $v$. Hence, orienting all edges in $E$ in accordance to $O'$ results in an EF orientation $O$ of $\mathcal{I}$.

    The reduction described uses logarithmic space,%
    \footnote{This property is essential for the XNLP- and XALP-hardness in \cref{cor:EFXo_W1}.}
    since only two vertices and $\bO(n)$ edges of polynomial weight are added to the graph.
\end{proof}

Combining \cref{thm:efo->efxo} with the hardness results in \cref{subsec:EF:hardness} leads to \cref{cor:EFXo:weak_NPh,cor:EFXo_W1}.

\begin{corollary}\label{cor:EFXo:weak_NPh}
    {\EFXo} is weakly NP-complete even restricted to
    \begin{itemize}
        \item simple symmetric instances of vertex cover number $4$,
        \item symmetric instances with $4$ vertices.
    \end{itemize}
\end{corollary}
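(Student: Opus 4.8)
The plan is to combine the weak NP-hardness of {\EFo} established in \cref{thm:ef:weak_np_hardness} with the gadget reduction from {\EFo} to {\EFXo} of \cref{thm:efo->efxo}. Since \cref{thm:efo->efxo} gives an \emph{exact} (parameter-preserving) equivalence, both items of the corollary should follow essentially by bookkeeping. First I would note that membership of {\EFXo} in NP is immediate: an orientation is a polynomial-size certificate whose envy-freeness-up-to-any-good is checkable in polynomial time. It thus remains to establish hardness.

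For the first item, I would take a simple symmetric {\EFo} instance $\mathcal{I}=(G,v)$ of vertex cover number $2$, which is weakly NP-hard by \cref{thm:ef:weak_np_hardness}, and feed it into the reduction of \cref{thm:efo->efxo} to obtain an equivalent {\EFXo} instance $\mathcal{I}'$ on graph $G'$. Because that reduction preserves simplicity and adds only edges of symmetric binary valuation, $\mathcal{I}'$ is again simple and symmetric. The one quantity to check is the vertex cover number: if $S$ is a vertex cover of $G$ with $|S|=2$, then $S\cup\{d_1,d_2\}$ covers all of $E(G')$, since every original edge is covered by $S$ while every newly introduced edge is incident to $d_1$ or $d_2$. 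Hence $\vc(G')\le 4$.

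For the second item I would instead start from a symmetric {\EFo} instance on a multigraph with $2$ vertices, again weakly NP-hard by \cref{thm:ef:weak_np_hardness}, and apply the same reduction. As \cref{thm:efo->efxo} adds exactly the two vertices $d_1,d_2$ (together with symmetric edges), the output is a symmetric {\EFXo} instance on $4$ vertices whose answer coincides with that of the original.

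Since in both cases the reduction runs in polynomial time and preserves the yes/no answer, weak NP-completeness transfers from {\EFo} to {\EFXo}; the large weights responsible for weakness come solely from the underlying \textsc{Partition}-based {\EFo} instance, while the reduction contributes only binary-valued edges. I do not anticipate any genuine obstacle here: the entire argument reduces to verifying that the structural restrictions (simplicity, symmetry) and the two relevant parameters (vertex cover number, number of vertices) behave as claimed under the transformation of \cref{thm:efo->efxo}, each of which is a direct consequence of the properties stated there.
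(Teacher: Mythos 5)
Your proposal is correct and follows exactly the paper's approach: the paper obtains this corollary by composing the weakly NP-hard {\EFo} instances of vertex cover number~$2$ (resp.\ $2$-vertex multigraphs) from \cref{thm:ef:weak_np_hardness} with the reduction of \cref{thm:efo->efxo}, which adds only $d_1,d_2$ and symmetric binary-weight edges, giving vertex cover number~$4$ (resp.\ $4$ vertices). Your bookkeeping of simplicity, symmetry, the vertex cover bound, and NP membership matches what the paper leaves implicit.
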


\begin{corollary}\label{cor:EFXo_W1}
    \EFXo\ is  W[1]-hard parameterized by vertex cover number,
    XNLP-hard parameterized by pathwidth,
    and XALP-hard parameterized by treewidth,
    even restricted to simple symmetric instances with polynomially-bounded weights.
    Furthermore, for such instances, {\EFXo} does not admit a $n^{o(\vc)}$ algorithm under the ETH,
    where $\vc$ denotes the vertex cover number of the input graph.
\end{corollary}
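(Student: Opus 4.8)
The plan is to obtain \cref{cor:EFXo_W1} as an immediate consequence of \cref{thm:EF:w1h} together with the reduction of \cref{thm:efo->efxo}. \cref{thm:EF:w1h} already establishes exactly the four claimed statements for \EFo\ on simple symmetric instances with polynomially-bounded weights, so it suffices to argue that the reduction from \EFo\ to \EFXo\ is simultaneously a valid parameterized reduction for each of the three structural parameters (vertex cover number, pathwidth, treewidth), that it preserves the ``simple symmetric, polynomially-bounded weights'' restriction, and that it is size- and parameter-preserving tightly enough to carry the ETH lower bound across.

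First I would verify that each of the three parameters blows up only by an additive constant. Recall that the reduction of \cref{thm:efo->efxo} adds just two vertices $d_1,d_2$, each adjacent to every vertex of $V$, together with the single edge $e^*=\{d_1,d_2\}$. For the vertex cover number, if $S$ is a vertex cover of $G$ then $S\cup\{d_1,d_2\}$ is a vertex cover of $G'$, since every edge of $G$ is covered by $S$ while every newly added edge is incident to $d_1$ or $d_2$; hence $\vc(G')\le\vc(G)+2$. For treewidth and pathwidth, note that $d_1$ and $d_2$ are universal vertices of $G'$, so starting from any (path-)tree decomposition of $G$ and inserting $d_1,d_2$ into every bag yields a valid (path-)tree decomposition of $G'$ (all original edges remain covered, every edge $\{d_i,v\}$ is covered by any bag containing $v$, and $e^*$ is covered everywhere), giving $\tw(G')\le\tw(G)+2$ and the analogous bound for pathwidth. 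Since each parameter increases by at most $2$, the reduction is a valid parameterized reduction for all three; combined with the yes/no equivalence and the fact that the reduction runs in logarithmic space (as stated in \cref{thm:efo->efxo}), the W[1]-, XNLP-, and XALP-hardness transfer directly. Moreover, the reduction only introduces edges of symmetric binary valuation and preserves simplicity, so the resulting \EFXo\ instance remains simple, symmetric, and of polynomially-bounded weights.

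It remains to transfer the ETH lower bound. The reduction produces a graph $G'$ with $n'=n+2$ vertices and $m'=m+\bO(n)$ edges, while $\vc(G')\le\vc(G)+2$. Thus for large $\vc$ we have $\vc(G')/\log\vc(G')=\Theta(\vc(G)/\log\vc(G))$ and $n'=\Theta(n)$, so an algorithm for \EFXo\ running in time $n'^{o(\vc(G')/\log\vc(G'))}$ would, composed with the polynomial-time reduction, solve \EFo\ in time $n^{o(\vc(G)/\log\vc(G))}$, contradicting the lower bound of \cref{thm:EF:w1h}.

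The argument is essentially bookkeeping, so there is no serious obstacle; the only point requiring care is the behaviour of the structural parameters under the reduction. Here the crucial—and slightly counterintuitive—observation is that although $d_1,d_2$ are adjacent to \emph{all} of $V$, being universal vertices they raise treewidth and pathwidth by at most one each (rather than unboundedly), which is exactly what keeps the reduction a valid parameterized (and logspace) reduction for the pathwidth and treewidth parameterizations.
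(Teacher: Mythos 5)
Your proposal is correct and matches the paper's approach exactly: the paper obtains \cref{cor:EFXo_W1} by combining the hardness results of \cref{thm:EF:w1h} with the reduction of \cref{thm:efo->efxo}, which adds only two vertices, preserves simplicity, symmetry, and polynomially-bounded weights, and runs in logarithmic space. Your explicit bookkeeping (additive parameter increase via the universal vertices, and the $\Theta(n)$-size, $\vc+2$ transfer of the ETH bound) is precisely the verification the paper leaves implicit.
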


We now compare \cref{cor:EFXo:weak_NPh,cor:EFXo_W1} with existing literature.
\cref{cor:EFXo:weak_NPh} improves upon the respective constants of 8 and 10 from Deligkas et al.~\cite{ijcai/DeligkasEGK25}, as well as the result for multigraphs with~8 vertices by Afshinmehr et al.~\cite{AfshinmehrDKMR25}.
As for \cref{cor:EFXo_W1}, it presents the first hardness results for instances with polynomially-bounded weights and small vertex cover,
and along with the algorithm of \cref{subsec:EFXo:tw}, it answers an open question of Deligkas et al.~\cite{ijcai/DeligkasEGK25} regarding the hardness of {\EFXo} in this setting.



\subsection{Parameterization by Treewidth plus Maximum Shared Weight}\label{subsec:EFXo:tw}

Moving on, we consider the {\EFXo} problem on graphs of small treewidth.
As previously mentioned, if one allows binary-encoded weights, then the problem remains
NP-hard even for simple graphs of constant vertex cover, or multigraphs with constant number
of vertices. Consequently, in the following we consider the case where all weights are polynomially-bounded;
this is exactly the regime for which Deligkas et al.~\cite{ijcai/DeligkasEGK25} have explicitly asked whether
the structure of the input graph may prove algorithmically useful.
For this case, \cref{cor:EFXo_W1} already gives a $n^{o(\vc)}$ lower bound under the ETH even on
simple graphs. Our main contribution in this section is to complement this lower bound with a DP algorithm parameterized by treewidth
plus maximum shared weight, in a similar fashion as in \cref{thm:ef:fpt_tw};
for polynomially-bounded weights where $W = (n+m)^{\bO(1)}$ this results in an algorithm of running time $(n+m)^{\bO(\tw)}$.
Again, our algorithm works even for the more general {\EFXmc} problem, and for multigraphs with non-identical valuations.

We first state the following observation.

\begin{observation}\label{obs:efx_zero_value_goods}
    Let an instance of {\EFXo} where vertices~$i,j$ share more than one edges of weight~$0$ for both endpoints.
    Then, removing all but one of those edges results in an equivalent instance.
\end{observation}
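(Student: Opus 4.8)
The plan is to exploit that every edge in the set $F \subseteq E_{ij}$ of parallel edges under consideration has value $0$ for \emph{all} agents: it is worth $0$ to its two endpoints by hypothesis, and $0$ to every other vertex by the graph model. Hence no matter how the edges of $F$ are oriented, every agent's bundle retains exactly the same value, and the only way an edge of $F$ can influence envy-freeness-up-to-any-good is through \cref{obs:EFX_zero_value}: if a bundle $X_w$ contains such a universally-zero good, then nobody can \emph{strongly} envy $w$ without already (weakly) envying $w$. First I would isolate two elementary facts about universally-zero goods. Fact A (deletion): if $O$ is an EFX orientation and $g$ is a good with $v_u(g)=0$ for every $u$, then discarding $g$ from the instance yields an EFX orientation of the smaller instance, since the valuation of every bundle is unchanged, and because the holder $w$ of $g$ was not even weakly envied (by \cref{obs:EFX_zero_value}), it remains non-envied afterward. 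Fact B (duplication): if $O$ is EFX and the bundle $X_w$ \emph{already} contains such a good, then adding a parallel universally-zero copy to $X_w$ again yields an EFX orientation, because the maximum value obtainable from $X_w$ after removing one good stays equal to $v_u(X_w)$ for every potential envier $u$, so strong envy towards $w$ is unaffected.

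Armed with Facts A and B the equivalence follows in two directions. For the easy direction, given an EFX orientation of the reduced instance, the single retained edge $f^\star$ is oriented towards one endpoint, say $i$; since $X_i$ then contains a universally-zero good I would apply Fact B repeatedly to re-introduce the $k-1$ deleted parallel edges, all oriented towards $i$, obtaining an EFX orientation of the original multigraph. For the converse, given an EFX orientation $O$ of the original instance, at least one endpoint receives an edge of $F$ (every edge is oriented and $|F|\ge 2$); picking such an endpoint, say $i$, I would use Fact A to delete all edges of $F$ oriented towards $j$ together with all but one of those oriented towards $i$, leaving exactly the single edge retained in the reduced instance and an EFX orientation of it.

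The only real subtlety, and the step I expect to require the most care, is the asymmetry between the two facts: deleting a universally-zero good is always safe, whereas adding one is safe only when the target bundle already holds a universally-zero good (otherwise a previously non-strong envy towards that bundle could be promoted to strong envy). This is precisely why the converse direction must \emph{choose} the endpoint that keeps the edge to be one that actually received an edge of $F$ under $O$, rather than an arbitrary endpoint, and why the forward direction piles all re-introduced copies onto the endpoint already holding $f^\star$. Once this choice is made correctly, both directions reduce to iterated applications of Facts A and B, with no case analysis beyond identifying a non-empty side of $F$. I would also remark that, since none of these operations changes the number of unoriented edges, the same argument carries over verbatim to the {\EFXmc} problem.
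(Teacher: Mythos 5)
Your proof is correct and follows essentially the same route as the paper: the paper's justification is exactly the consolidation move that your Facts A and B formalize — any EFX orientation can be modified so that all parallel zero-weight edges sit on a single endpoint, with \cref{obs:EFX_zero_value} supplying the key point that a bundle containing a universally-zero good cannot be weakly envied in an EFX orientation, so adding further zero copies to it is harmless while deleting zero goods from a bundle never creates strong envy. Your explicit identification of the asymmetry (deletion is unconditionally safe, duplication is safe only onto a bundle that already holds a universally-zero good) is precisely the content the paper leaves implicit. One caveat on your closing remark: the extension to {\EFXmc} is not quite verbatim, because in a \emph{partial} orientation the parallel zero edges may themselves be left unoriented (donated), a case Facts A and B do not cover — if the retained edge is donated in an optimal solution of the reduced instance, re-introducing the $k-1$ deleted copies either increases the charity by $k-1$ or requires orienting them towards an endpoint that may then become strongly envied, so preserving the optimal charity value needs an additional argument.
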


To see why \cref{obs:efx_zero_value_goods} holds, notice that any EFX orientation that does not orient all such edges to a \emph{single} endpoint,
can be modified to an EFX orientation that does.

\begin{theoremrep}[\appsymb]\label{thm:EFX:tw}
    Given an instance $\mathcal{I}$ of \EFXmc\ along with a nice tree decomposition of the input multigraph of width $\tw$,
    one can decide $\mathcal{I}$ in time $W^{\bO(\tw)} (n + m)^{\bO(1)}$,
    where $W$ denotes the maximum shared weight of $\mathcal{I}$ and $n$ the number of agents.
\end{theoremrep}



\begin{proof}
    Let $\mathcal{I} = (G,\{v_i\}_{i\in V})$ denote the instance of {\EFmc}
    and $(T,\{B_t\}_{t\in V(T)})$ the given nice tree decomposition of $G$ of width $\tw$.

    We will use \cref{obs:efx_zero_value_goods} as well as the following one for the correctness of our algorithm:

    \begin{observation}\label{obs:only_one_envious}
        Let $O$ be an orientation.
        If vertex $i$ envies, but not strongly envies, vertex $j$, then
        all edges oriented towards $j$ in $O$ are in $E_{ij}$. This additionally implies that no vertex $v\neq i$ envies $j$ in $O$.
    \end{observation}

    \textbf{Notations and Definitions.}
    Before describing our algorithm we first fix some notations. For $t\in V(T)$, we define $V_t$, $E_t$, $G_t$ and $A_t$ in the same way as in the proof of \cref{thm:ef:fpt_tw}.

    We define $A_t^{ef}$ to be the set of all partial orientations of $G_t$ where, first, all vertices of $V_t$ are  non strongly envious, and second, all vertices of $B_t$ which are envied (but not strongly) by some vertex of $V_t$, are also non strongly envious. Intuitively, those are the partial orientations of $G_t$ that can be completed into an envy-free orientation up to any good of $G$.

    \textbf{Signature definition.} Let $O\in A_t$. We define the \emph{signature} of $O$ with respect to $t$ as a triple $\sigma=(r,d,k)$ with $r\colon B_t\to [0,W]\cup\{None\}$, $d\colon B_t\to [0,W]\cup\{Some\}$ and $k\in [0,|E|]$ defined by:
    \begin{itemize}
        \item for all $i\in B_t$, $r(i)=\begin{cases}
            None & if\ X_i=\varnothing \\
            \min\{v_i(X_i),W\} & otherwise \\
        \end{cases}$

        \item for all $i\in B_t$, $d(i)=\begin{cases}
            Some & if\ there\ is\ \ell\in V_t\ envious\ of\ i\\
            \max_{\ell\in V_t,\ e\in X_\ell}\{v_i(X_\ell\setminus\{e\})\} & otherwise \\
        \end{cases}$

        \item $k=|E_t|-\Sigma_{i\in B_t\cup V_t}|X_i|$
    \end{itemize}

    Let $\Sigma_t$ denote the set of all possible signatures for node $t$, and notice that $|\Sigma_t|\leq |E|(W+2)^{2(tw+1)}$. We define the $\emph{signature function}$ $\sgn_t\colon A_t\to \Sigma_t$ that attributes to each partial orientation of $G_t$ its signature with respect to $t$.

    Take $O\in A_t^{ef}$. As in the algorithm of \cref{thm:ef:fpt_tw}, $sgn_t(O)=(r,d,k)$ has two types of information: $r$ and $d$ are the minimum amount of information needed to complete $O$ into a partial envy-free orientation up to any good: $r(i)$ represents what $i$ already \emph{received} from $O$ up to any good, and if $r(i)=None$, it means that $i$ did not receive any edge yet, and therefore from \cref{obs:EFX_one_good}, it means that in an extension of $O$, one vertex of $V\setminus V_t$ can be envious of $i$ without being strongly envious (note that we can have $r(i)=0$, as edges can have a value of $0$). $d(i)$ represents the amount that $i$ needs to receive to not be strongly envious of the vertices of $V_t$, and if $d(i)=Some$, it means that some vertex $\ell\in V_t$ is envious of $i$ (but not strongly envious), and therefore $i$ cannot receive any other edge, as otherwise $\ell$ would become strongly envious of $i$.

    In the following, we will allow ourselves to use the value $r(i)$ in additions and inequalities, even if it is possible that $r(i)=None$. In that case, we consider that $None$ has value $0$.

    \textbf{Signature valuation definition.} We define the \emph{valuation function} $val_t\colon \Sigma_t\to \{0,1\}$ defined by: for all $\sigma\in \Sigma_t$, $val_t(\sigma)=1$ if and only if there is $O\in A_t^{ef}$ such that $\sgn_t(O)=\sigma$.

    Take $t$ to be the root of $T$; by definition of a nice tree decomposition, $B_t$ is an empty set, $V_t=V$, $E_t=E$ and $G_t=G$. Therefore, $A_t^{ef}$ is exactly the set of all partial envy-free up to any good orientations of $G$. So, if we have an efficient way to compute the values of $val_t$, we can efficiently solve \EFXmc. We present an algorithm that computes the values of $val_t$ for all $t\in V(T)$ dynamically over the nice tree decomposition.

    \textbf{In the following, we describe the algorithm.}

    Take $t\in V(T)$:

    \textbf{Leaf node.} If $t$ is a leaf node, then $B_t=\varnothing$, $V_t=\varnothing$ and $E_t=\varnothing$. Therefore, $A_t=A_t^{ef}=\{\varnothing\}$. Take $epy$ the only function from $\varnothing$ to $[0,W]$, $val_t((epy,epy,0))=1$ and for all $k\in [1,m]$, $val_t((epy,epy,k))=0$.

    \textbf{Introduce node.} If $t$ is an introduce node with a child $t'$ such that $B_t=B_{t'}\cup \{i\}$ for some $i\notin B_{t'}$, the following lemma allows us to compute the values of $val_t$ over $\Sigma_t$ dynamically:

    \begin{lemma}
        For all $\sigma = (r,d,k)\in \Sigma_t$,

        \[val_t(\sigma)=\begin{cases}
            0 & if\ r(i)\neq None\ or\ d(i)\neq 0 \\
            val_{t'}((r|_{B_{t'}},d|_{B_{t'}},k)) & otherwise\\
        \end{cases}\]
    \end{lemma}

    Intuitively, we extend $val_{t'}$ to $B_t$ to define $val_t$.

    \begin{nestedproof}
        We note that $V_t=V_{t'}$. Furthermore, by definition of a tree decomposition, $i$ has no edge in common with $V_t$, and so no edge incident with $i$ is included in $E_t$: $E_t=E_{t'}$ and $i$ is an isolated vertex in $G_t$.

        Therefore, let $\sigma=(r,d,k)\in \Sigma_t$. If $d(i)\neq 0$ or $r(i)\neq None$, then there is no $O\in A_t$ where $sgn_t(O)=\sigma$.

        Otherwise, $d(i)=0$, $r(i)=None$ and as $E_t=E_t'$, let $O=\{X_j\}_{j\in B_t\cup V_t}\in A_t$, $X_i=\varnothing$. Let $O'=O\setminus\{X_i\}$, $O'\in A_{t'}$, and $\sgn_t(O)=\sigma$ if and only if $sgn_{t'}(O')=(r|_{B_{t'}},d|_{B_{t'}},k)$.

        Finally, $O\in A_t^{ef}$ if and only if it makes all the vertices of $V_t$ and all the vertices of $B_t$ envied by vertices of $V_t$ non strongly envious; but, as $i$ is an isolated vertex of $G_t$, and $V_t=V_{t'}$, it is true if and only if $O'$ makes all the vertices of $V_{t'}$ and all the vertices of $B_{t'}$ envied by vertices of $V_{t'}$ non strongly envious, if and only if $O'\in A_t^{ef}$. Therefore, the lemma holds.
    \end{nestedproof}

    Concerning the complexity, supposing the values of $val_{t'}$ have already been computed, the calculation of each value of $val_t$ can be done in constant time; therefore the calculation of all the values can be done in time $|\Sigma_t|\leq m(W+2)^{2(tw+1)}$.

    \textbf{Forget node.}

    Let $i,j\in V$. From \cref{obs:efx_zero_value_goods}, we can suppose that $|E_{ij}|\leq 2W +1$. We define the function $L_{ij}^i\colon [0,W]^2\times [0,W]^2\times[0,W]\cup\{\infty\}\times[0,2W+1]\to \{0,1\} $ in the following way:
    $L((a_1,a_2),(b_1,b_2),c,k)=1$ if and only if there is a partial orientation  $O=\{X_i,X_j\}$ of the subgraph $(\{i,j\},E_{ij})$ such that $v_i(X_i)=a_1$, $v_i(X_j)=a_2$, $v_j(X_j)=b_1$, $v_j(X_i)=b_2$, $|E_{ij}\setminus (X_i\cup X_j)|=k$, and finally $\min_{e\in X_j}\{v_i(e)\}=c$. We can compute the values of $L_{ij}^i$ via simple dynamic programming.

    If $t$ is a forget node with a child $t'$ such that $B_{t'}=B_t\cup \{i\}$ for some $i\notin B_t$, the following lemma allows us to compute the values of $val_t$ over $\Sigma_t$ dynamically:

    \begin{lemma}
        For all $\sigma=(r,d,k)\in \Sigma_t$: $val_t((r,d,k))=1$ if and only if the following holds:

        There is $\sigma'=(r',d',k')\in \Sigma_{t'}$, for all $j\in B_t$ there are $(a_j^1,a_j^2,b_j^1,b_j^2,c_j,k_j)\in [0,W]^4\times [0,W]\cup\{\infty\}\times [0,2W+1]$ verifying one of the following three points:

        \begin{enumerate}
            \item \begin{itemize}
                \item $d'(i)=Some$.
                \item for all $j\in B_t$, $a_j^1=b_j^2=0$.
                \item for all $j\in B_t$, if $d'(j)=Some$ then $k_j=|E_{ij}|$.
                \end{itemize}

            \item There is $j'\in B_t$, $c\in [0,W]$ such that:
                \begin{itemize}
                \item $r'(i)=None$.
                \item $d'(i)\neq Some$.
                \item $d'(j')\neq Some$.
                \item $L_{ij'}^{j'}((b_{j'}^1,b_{j'}^2),(a_{j'}^1,a_{j'}^2),c,k_{j'})=1$.
                \item for all $j\in B_t\setminus \{j'\}$, $a_j^1=b_j^2=0$.
                \item for all $j\in B_t\setminus \{j'\}$, if $d'(j)=Some$ then $k_j=|E_{ij}|$.
                \item  $d(j')=\max\{d'(j'),b_{j'}^2-c\}$.
                \end{itemize}
            \item \begin{itemize}
                \item $d'(i)\neq Some$.
            \end{itemize}
        \end{enumerate}

        and the following:
        \begin{itemize}
            \item $val_{t'}(\sigma')=1$.
            \item $k=k'+\sum_{j\in B_t}k_j$
            \item for all $j\in B_t$, $L_{ij}^i((a_j^1,a_j^2),(b_j^1,b_j^2),c_j,k_j)=1$
            \item for all $j\in B_t$, \begin{itemize}
                \item if $d(j)\neq Some$: $d'(j)\neq Some$, $d(j)=\max\{d'(j),b_j^2\}$ (except if this value has already been defined in case 2.), and $a_j^2\leq r'(i)+\sum_{j\in B_t}a_{j}^1$, and if $c_j\neq \infty$, $r(j)=\min\{r'(j)+b_j^1,W\}$, otherwise $r(j)=r'(j)$.
                \item if $d(j)=Some$ and $d'(j)\neq Some$: then $r(j)\leq b_j^1$, $r'(j)=None$, $r(j)=b_j^1$ and $a_j^2-c_j\leq r'(i)+\sum_{j\in B_t}a_{j}^1$.
                \item otherwise $d(j)=d'(j)=Some$, $c_j=\infty$, $r(j)=r'(j)$ and $r(j)\geq b_j^2$.
                \end{itemize}
            \item if $d'(i)\neq Some$, then $d'(i)\leq r'(i)+\sum_{j\in B_t}a_{j}^1$
        \end{itemize}

    \end{lemma}

    While the details of this lemma seem complex, the intuition behind it is fairly straightforward: when we have to treat vertex $i$, we have to choose the orientation of the edges it shares with $B_t$.

    In case \textbf{1.}, $i$ is envied by one vertex of $V_t$. Therefore, it cannot receive anything from any vertex of $B_t$, and all the edges between $i$ and $B_t$ must be oriented towards $B_t$.

    In case \textbf{2.}, we choose that $i$ is envied by one vertex $j$ of $B_t$. Therefore, it must not have received anything from any other vertex, we must orient the edges not shared with $j$ towards $B_t$.

    In case \textbf{3.}, $i$ is not envied by any vertex of $V_t$, and we choose that it won't be envied by any vertex of $B_t$ either. Therefore, we choose an orientation of the edges between $i$ and $B_t$ where $i$ will be non strongly envious of some vertices of $B_t$, and verify that it received enough to be non-strongly envious of the vertices of $V_t$.

    Finally, we check that the values of the two signatures and our chosen orientation correspond.

    \begin{nestedproof}
        We note that $V_t=V_t'\cup \{i\}$, and $E_t=E_{t'}\cup \bigcup_{j\in B_t} E_{ij}$. Also, $V_t\cup B_t=V_{t'}\cup B_{t'}$.

        In that case, a partial orientation $O\in A_t$ can be seen as a partial orientation $O'\in A_{t'}$, together with a partial orientation $O''$ of $(B_t',\bigcup_{j\in B_t} E_{i,j})$. In the following, we show that given a partial orientation $O\in A_t$, its signature $\sgn_t(O)$ can be computed using only the following information: the signature of $O'$ the restriction of $O$ to $G_{t'}$, and the orientation of $E_{ij}$ fo all $j\in V_t$ of $O$.

        Let $O=\{X_j\}_{j\in V_t\cup B_t}\in A_t$.

        For all $j\in B_t$, let $\alpha_j:=E_{ij}\cap X_i$, $\beta_j:=E_{ij}\cap X_j$, and $X'_j:=X_j\setminus E_{ij}$: $\alpha_j$ is the set edges of $E_{ij}$ received by $i$ while $\beta_j$ is the set of the ones received by $j$. $X'_j$ is the set of edges received by $j$ from $G_{t'}$.

        Let $X'_i:=X_i\setminus (\bigcup_{j\in B_t}E_{ij})$ and for all $\ell\in V_{t'}$, let $X'_\ell:=X_\ell$.  $O':= \{X'_j\}_{j\in B_{t'}\cup V_{t'}}$ is a partial orientation of $G_{t'}$, the restriction of $O$ to $G_{t'}$, so $O'\in A_{t'}$. Let $\sigma=(r,d,k)=\sgn_t(O)$ and $\sigma'=(r',d',k')=\sgn_{t'}(O')$.

        We note that $k=k'+\sum_{j\in B_t}|E_{ij}\setminus(\alpha_j\cup\beta_j)|$.

        For all $j\in B_t$, let $a^j_1:= v_i(\alpha_j)$, $a^j_2:= v_i(\beta_j)$, $b^j_1:= v_j(\alpha_j)$, $b^j_2:= v_j(\beta_j)$, and $c_j=\min_{e\in \beta_j}\{v_i(e)\}$. By linearity of the value functions, we have: $v_i(X_i)=v_i(X'_i)+\sum_{j\in B_t}a_j^1$, and for all $j\in B_t$, $v_j(X_j)=v_j(X'_j)+b^j_1$, $v_i(X_j)=v_i(X'_j)+a_2^j=a_2^j$, $v_j(X_i)=v_j(X'_i)+b_2^j=b_2^j$ and $k_j=|E_{ij}\setminus(\alpha_j\cup\beta_j)|$. By definition, $L_{ij}^i((a_j^1,a_j^2),(b_j^1,b_j^2),c_j,k_j)=1$ and $k=k'+\sum_{j\in B_t}k_j$.

        If $O\in A_t^{ef}$, then it follows immediately that $O'\in A_{t'}^{ef}$, and therefore, $val_{t'}(\sigma')=1$. Furthermore, $O$ makes $i$ non-strongly envious of any other vertex. Finally, from \cref{obs:only_one_envious}, either $i$ is envied (but not strongly) by a unique vertex of $V_t$ and therefore only receives edges shared with that vertex, or no vertex of $V_{t'}$ envies $i$, and either it receives edges from a unique vertex of $B_t$, or it doesn't. We will treat those three cases separately, which will result in the three cases of the lemma.

        First, suppose that $i$ is envied (but not strongly) by a vertex of $V_t'$ in $O$. Then, it also holds in $O'$. Therefore, $d'(i)=Some$. We show that it implies point 1.

        In that case, $i$ does not receive any edges shared with vertices of $B_t$, so, for all $j\in B_t$, $a_j^1=b_j^2=0$. Let $j\in B_t$, if $k_j<|E_{ij}|$, $j$ receives at least one edge from $i$ in $O$, and therefore, cannot be envied by a vertex of $V_{t'}$, so $d'(j)\neq Some$.

        Second, suppose that $i$ only receives edges shared with a vertex $j'\in B_t$. Then, it does not receive any edges shared with vertices of $V_{t'}$. Therefore, $r'(i)=None$ and $d'(i)\neq Some$. We show that it corresponds to case 2. Let $c=\min_{e\in \alpha_{j'}}\{v_{j'}(e)\}$. By definition, $L_{ij'}^{j'}((b_{j'}^1,b_{j'}^2)(a_{j'}^1,a_{j'}^2),c,k_{j'})=1$.

        For all $j\in B_t\setminus\{j'\}$, $i$ does not receive any edge shared with $j$, and therefore $a_{j}^1=b_{j}^2=0$. Also, if $k_j<|E_{ij}|$, $j$ receives at least one edge from $i$ in $O$, and therefore, cannot be envied by a vertex of $V_{t'}$, so $d'(j)\neq Some$. $i$ did not receive any edge from any vertex other than $j'$, and therefore $d(j')=\max\{d'(j'),b_{j'}^2-c\}$. Finally, $j'$ receives at least one edge from $i$, and therefore $d'(j)\neq Some$.

        Third, suppose that we are not in the two previous cases. Then, $d'(i)\neq Some$. It implies case 3 trivially.

        We prove the remaining points: let $j\in B_t$:

        If $d(j)\neq Some$: then $d'(j)\neq Some$ by definition, and if $i$ received edges from vertices other than $j$ in $O$, then $d(j)=\max\{d'(j),b_j^2\}$. Furthermore, $i$ is non-envious of $j$, so $a_j^2\leq r'(i)+\sum_{j\in B_t}a_{j}^1$; finally, if $c_j\neq\infty$, then $j$ receives at least one edge from $E_{ij}$ and therefore $r(j)=\min\{r'(j)+b_j^1,W\}$; otherwise, $r(j)=r'(j)$.

        If $d(j)=Some$ and $d'(j)\neq Some$: then $i$ is envious of $j$ but not strongly envious; therefore, $j$ does not receive any edge shared with any vertex of $V_{t'}$, so $r'(j)=None$ and $r(j)=b_j^1$. Furthermore, $j$ is non-envious of $i$, so $r(j)\leq b_j^1$. Finally, $i$ is non strongly envious of $j$, so $a_j^2-c_j\leq r'(i)+\sum_{j\in B_t}a_{j}^1$.

        Otherwise, if $d(j)=d'(j)=Some$, then $j$ is envied (but not strongly) by a vertex of $V_{t'}$. Therefore, $j$ does not receive any edge of $E_{ij}$, so $c_j=\infty$ and $r(j)=r'(j)$. Also, $j$ does not envy $i$, and therefore $r(j)\geq b_j^2$.

        Finally, $i$ is not strongly envious of the vertices of $V_{t'}$, and therefore, if $d'(i)\neq Some$, $d'(i)\leq r'(i)+\sum_{j\in B_t}a_{j}^1$.

    \end{nestedproof}

    \textbf{Join node.} If $t$ is a join node with two children $t_1,t_2$ such that $B_t=B_{t_1}=B_{t_2}$, the following lemma allows us to compute the values of $val_t$ over $\Sigma_t$ dynamically:

    \begin{lemma}
        For all $\sigma=(r,d,k)\in \Sigma_t$: $val_t((r,d,k))=1$ if and only if there are $\sigma_1=(r_1,d_1,k_1)\in \Sigma_{t_1}$ and $\sigma_2=(r_2,d_2,k_2)\in \Sigma_{t_2}$ such that:

        \begin{itemize}
            \item $val_{t_1}(\sigma_1)=1$ and $val_{t_2}(\sigma_2)=1$
            \item $k=k_1+k_2$
            \item for all $j\in B_t$,
            \begin{itemize}
                \item if $d_1(j)=Some$ (respectively $d_2(j)=Some$) then $r_2(j)=None$ (respectively $r_1(j)=None$), $d(j)=Some$, $d_2(j)\leq r_1(j)$ (respectively $d_1(j)\leq r_2(j)$) and $r(j)=r_1(j)$ (respectively $r(j)=r_2(j)$).
                \item otherwise, if $r_1(j)=r_2(j)=None$ then $r(j)=None$ and $d(j)=\max\{d_1(j),d_2(j)\}$
                \item otherwise, $r(j)=\min\{r_1(j)+r_2(j),W\}$ and $d(j)=\max\{d_1(j),d_2(j)\}$
            \end{itemize}
        \end{itemize}

    \end{lemma}
    Intuitively, a valid partial orientation of $G_t$ is a combination of two valid partial orientations of $G_{t_1}$ and $G_{t_2}$; therefore, we can combine two possible valid partial orientations of $G_{t_1}$ and $G_{t_2}$ and see if they are compatible, that is, if some vertex of $B_t$ is envied by one vertex of $V_{t_1}$, it must not receive any edge from $G_{t_2}$, and conversely. In that case, we can combine the two orientations into an orientation of $G_t$.

    \begin{nestedproof}
        Note that $V_t=V_{t_1}\cup V_{t_2}$, with $V_{t_1}\cap V_{t_2}=\varnothing$. Similarly, $E_t=E_{t_1}\cup E_{t_2}$, with $E_{t_1}\cap E_{t_2}=\varnothing$.

    Let $O=\{X_j\}_{j\in\{B_t\cup V_t\}}\in A_t$; for all $j\in B_{t_1}\cup V_{t_1}$, let $X_j^1:=X_j\cap E_{t_1}$ and
    for all $j\in B_{t_2}\cup V_{t_2}$, let $X_j^2:=X_j\cap E_{t_2}$.

    Let $O_1:=\{X_j\}_{j\in\{B_{t_1}\cup V_{t_1}\}} $ and $O_2:=\{X_j\}_{j\in\{B_{t_2}\cup V_{t_2}\}}$. They are the restrictions of $O$ to $G_{t_1}$ and $G_{t_2}$ respectively: $O_1\in A_{t_1}$ and $O_2\in A_{t_2}$.

    Let $\sigma=(r,d,k)=\sgn_t(O)$, $\sigma_1=(r_1,d_1,k_1)=\sgn_{t_1}(O_1)$ and $\sigma_2=(r_2,d_2,k_2)=\sgn_{t_2}(O_2)$. By definition, $k=k_1+k_2$, $val_1(\sigma_1)=1$ and $val_2(\sigma_2)=1$.

    Let $j\in B_t$. If $d_1(j)=Some$, some vertex of $V_{t_1}$ envies $j$, so $d(j)=Some$. Furthermore, in that case, $j$ does not receive any edge shared with any other vertex, and so $r_2(j)=None$, and $r(j)=r_1(j)$. Finally, $j$ does not envy any vertex from $V_{t_2}$, so $d_2(j)\leq r_1(j)$. The same holds symmetrically if $d_2(j)=Some$.

    Otherwise, if $r_1(j)=r_2(j)=None$, then $j$ does not receive any edge shared with vertices of $V_t$, and therefore $r(j)=None$. Furthermore, no vertex of $V_t$ is envious of $j$, so $d(j)=\max\{d_1(j),d_2(j)\}$.

    Finally, otherwise, $j$ receives at least one edge shared with vertices of $V_t$, so $r(j)=\min\{r_1(j)+r_2(j),W\}$, and no vertex of $V_t$ is envious of $j$, so $d(j)=\max\{d_1(j),d_2(j)\}$.
    \end{nestedproof}

    Concerning the complexity, supposing the values of $val_{t_1}$ and $val_{t_2}$ have already been computed, there are $m\times W^{\bO(\tw)}$ signatures of $t_1$ and $t_2$ to consider. Therefore, this step can be done in time $mW^{\bO(tw)}\times mW^{\bO(tw)}=m^2W^{\bO(tw)}$.
    \medskip

    This concludes the description and correctness of our algorithm. We now argue about its complexity: as the nice tree decomposition can be supposed to be of polynomial size and each step can be done in time at most $m^2W^{\bO(\tw)}$, the algorithm has complexity $W^{\bO(\tw)}(nm)^{\bO(1)}$.
\end{proof}

\begin{proofsketch}
    The algorithm closely follows the one of \cref{thm:ef:fpt_tw}.
    For a node $t$ of the given nice tree decomposition, let $G_t$ denote the corresponding subgraph,
    while $B_t$ denotes the set of vertices appearing in the bag of node $t$.
    Once again, for every node $t$ of the tree decomposition, we keep track of whether there exists a partial orientation of $G_t$
    (excluding the edges among vertices of $B_t$)
    that guarantees that the vertices of $V(G_t) \setminus B_t$ do not strongly envy the vertices of $V(G_t)$,
    while also keeping track of the number of unoriented edges of the partial orientation
    as well as the following information for each of the at most $\tw+1$ vertices in $B_t$:
    \begin{itemize}
        \item its \emph{revenue} due to edges shared with vertices in $V(G_t) \setminus B_t$,
        separately handling the case where a vertex has no received goods,

        \item its \emph{demand} so that it does not strongly envy any vertex of $V(G_t) \setminus B_t$,
        separately handling the case where a vertex is non-strongly envied by a vertex of $V(G_t) \setminus B_t$.
    \end{itemize}
    Notice that both the revenue and the demand of a vertex in $B_t$ can be upper-bounded by $W$,
    as a vertex with revenue $W$ cannot envy any other vertex, while a vertex may envy another up to a value of at most $W$.
    Lastly, by \Cref{obs:efx_zero_value_goods} we can assume that for every pair of vertices there is at most one edge of valuation $0$
    for both endpoints,
    thus there are at most $2W + 1$ edges shared between them, implying that
    the number of unoriented edges is at most $(2W+1) \cdot \binom{n}{2}$.
    Consequently, the total running time is $W^{\bO(\tw)} (n+m)^{\bO(1)}$.
\end{proofsketch}

Just like in \cref{subsec:EFo_shared_weight}, we obtain the following.

\begin{corollary}\label{cor:EFXmc_poly_weights_algo}
    There is an algorithm running in time $(n+m)^{\bO(\tw)}$ for \EFXmc\ with polynomially-bounded weights.
\end{corollary}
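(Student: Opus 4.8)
The plan is to adapt the dynamic program of \cref{thm:ef:fpt_tw} to \EFXmc, keeping the same scaffolding: a bottom-up DP over the given nice tree decomposition, where for each node $t$ we record the set of feasible \emph{signatures}, each consisting of a revenue function $r \colon B_t \to [0,W]$, a demand function $d \colon B_t \to [0,W]$, and a charity count $k \in [0,m]$. As before, $r(i)$ (capped at $W$) records what $i$ has already received from $V_t$, the count $k$ records the number of unoriented edges of $E_t$, and feasibility of a signature means it is realised by a partial orientation of $G_t$ under which every vertex of $V_t$ is \emph{safe}. The single conceptual change is that \emph{safe} now means \emph{not strongly envious} rather than not envious, which forces us to enrich the demand bookkeeping.

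The difficulty is that whether $i$ strongly envies $j$, namely whether $v_i(X_i) < v_i(X_j) - \min_{e \in X_j} v_i(e)$, depends on the \emph{minimum}-valued good in the envied bundle, and, via \cref{obs:EFX_one_good}, on whether the envied vertex holds at most one good. I would therefore augment each bag vertex $i$'s state with a flag $c(i) \in \{0,1,{\ge}2\}$ recording how many goods $i$ has received so far; together with \cref{obs:efx_zero_value_goods}, which lets us assume at most one zero-valued edge per pair (so at most $2W+1$ parallel edges exist between any two vertices, keeping $k$ polynomial), this suffices to evaluate the relaxation term when it is needed. The \textbf{forget node} is where the real work happens. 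When $i$ is forgotten, all edges $E_{ij}$ with $j \in B_t$ are oriented and $i$'s entire bundle $X_i$ becomes final; hence for every bag neighbour $j$ I can compute $j$'s strong-envy demand towards $i$ exactly and fold it into $d(j)$ by a maximum. Concretely this demand is $0$ if $c(i) \le 1$ (\cref{obs:EFX_one_good}); it equals the plain envy value $v_j(X_i \cap E_{ij})$ whenever $X_i$ contains some good valued $0$ by $j$ (equivalently, $i$ received a good outside $E_{ij}$, which is detectable from $c(i)$ and the chosen orientations); and otherwise it is $v_j(X_i \cap E_{ij})$ minus the smallest $v_j$-value among the $E_{ij}$-edges oriented to $i$, read off directly from the orientation being fixed.

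The main obstacle is the opposite direction of envy: I must also guarantee that the \emph{forgotten} vertex $i$ does not strongly envy a bag neighbour $j$ whose final bundle is not yet determined. As in the EF DP the value $v_i(X_j) = v_i(X_j \cap E_{ij})$ is already fixed at this step (call it $a_j$), but the relaxation term $\min_{e \in X_j} v_i(e)$ can only shrink, and collapses to $0$ as soon as $j$ receives any good valued $0$ by $i$, which is every future good of $j$ since all of them lie outside $E_{ij}$. The plan is to split on $R_i := v_i(X_i)$: if $R_i \ge a_j$ then plain envy already holds and $i$ is safe no matter what $j$ does later; if $R_i < a_j$ then $i$ can avoid strong envy only if $j$'s \emph{entire} final bundle stays inside $E_{ij}$, has at least two edges, and has minimum $v_i$-value at least $a_j - R_i$. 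I would encode this by pruning immediately whenever $j$ already holds a good outside $E_{ij}$, and otherwise attaching a \emph{frozen} marker to $j$ that forbids it from receiving any further good and records the required lower bound on its minimum good-value; these markers are then honoured at subsequent introduce, join, and forget nodes. Reconciling such freeze constraints with $j$'s own demand (and with the charity count) is the delicate part, and I expect the bulk of the correctness argument to consist of verifying that this forward propagation captures strong-envy freeness exactly.

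The remaining node types are routine adaptations of \cref{thm:ef:fpt_tw}. Leaf and introduce nodes are immediate (a freshly introduced vertex has no incident edges, so $c = 0$ and $r = d = 0$); join nodes combine two children by adding revenues (capped at $W$), combining the good-counts with saturation at ${\ge}2$, taking the pointwise maximum of demands, summing charity counts, and reconciling the freeze markers. Since the augmented per-vertex state still ranges over $\bO(W)$ values and $|B_t| \le \tw + 1$, each signature set has size $W^{\bO(\tw)} (n+m)^{\bO(1)}$, and every node is processed within the same bound; at the root the minimum charity over feasible signatures answers \EFXmc. This yields the claimed running time $W^{\bO(\tw)} (n+m)^{\bO(1)}$, matching \cref{thm:ef:fpt_tw}.
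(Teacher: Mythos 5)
Your plan coincides with the paper's own route: the corollary is obtained from \cref{thm:EFX:tw} by noting that $W=(n+m)^{\bO(1)}$ for polynomially-bounded weights, and the paper's (sketched) proof of \cref{thm:EFX:tw} is exactly the DP you describe. Your good-count flag $c(i)$ corresponds to the paper's ``separately handling the case where a vertex has no received goods'', and your freeze marker is precisely the paper's ``separately handling the case where a vertex is non-strongly envied by a vertex of $V(G_t)\setminus B_t$''; the signature space and the running-time accounting agree as well.

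However, there is a genuine error in the one place where strong envy is actually characterized: the condition under which the forgotten vertex $i$ avoids strongly envying a bag neighbour $j$ when $R_i < a_j$. You require $j$'s final bundle to lie inside $E_{ij}$, to have \emph{at least two} edges, and to have minimum $v_i$-value at least $a_j - R_i$. The correct condition is that it lies inside $E_{ij}$ and has at most one edge \emph{or} minimum $v_i$-value at least $a_j-R_i$; since a singleton bundle has $v_i$-value $a_j \geq a_j - R_i$, this amounts to simply dropping your ``at least two edges'' conjunct. As written, your condition inverts \cref{obs:EFX_one_good}: it declares singleton envied bundles automatically \emph{unsafe} instead of automatically safe. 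This is not cosmetic. In a simple graph $|E_{ij}|\leq 1$, so your condition is never satisfiable, and your DP accepts exactly those orientations in which no vertex envies a vertex forgotten after it; this changes the optimum. Concretely, take the triangle on $x_1,x_2,x_3$ with $v_{x_1}(e_1)=v_{x_2}(e_1)=2$ for $e_1=\{x_1,x_2\}$, $v_{x_2}(e_2)=1$ and $v_{x_3}(e_2)=2$ for $e_2=\{x_2,x_3\}$, and $v_{x_1}(e_3)=v_{x_3}(e_3)=1$ for $e_3=\{x_1,x_3\}$. Its only two EFX orientations (both full, hence charity $0$; all bundles are singletons) are $e_1\to x_1$, $e_2\to x_2$, $e_3\to x_3$, with envies $x_2\to x_1$ and $x_3\to x_2$, and $e_1\to x_2$, $e_2\to x_3$, $e_3\to x_1$, with envy $x_1\to x_2$; every other orientation makes an empty-handed vertex strongly envy a two-good bundle. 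Given the single-bag nice tree decomposition that forgets $x_1$, then $x_3$, then $x_2$, each of these two orientations contains an envy towards a later-forgotten vertex, so your algorithm rejects both and outputs minimum charity $1$ instead of $0$. With the conjunct replaced by the disjunction above (and the join-node reconciliation of freeze markers, which you defer, spelled out), your argument matches the paper's intended proof.
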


\begin{remark}
    The algorithm of \cref{cor:EFXmc_poly_weights_algo} matches the lower bound of $n^{o(\vc)}$ in \cref{cor:EFXo_W1}.
\end{remark}

\section{Conclusion}\label{sec:conclusion}

In this work we initiated the study of EF orientations in multigraphs and presented a variety of both positive and negative results regarding
their computational complexity, including a complete characterization for the case of binary valuations. We additionally studied them under the perspective of parameterized complexity,
and showed how some of the results can even be transferred to the related {\EFXo} problem, improving upon previous work on the latter in the process and settling an open question by Deligkas et al.~\cite{ijcai/DeligkasEGK25}. Our work demonstrates that EF orientations are both interesting in their own right, and useful for transferring results to EFX orientations.

As a direction for future work, we comment that \cref{thm:EFo:Heavy} generalizes \cref{thm:EFo:linear_algo}, albeit only for simple graphs.
A natural question is thus whether we can develop such an algorithm that works for multigraphs as well, that is, whether {\EFo}
admits an FPT algorithm parameterized by the number $k$ of heavy edges running in time $2^k n^{\bO(1)}$.
Another interesting direction concerns our parameterized algorithms in \cref{subsec:EFo_shared_weight,subsec:EFXo:tw}, specifically, whether our parameterization by treewidth plus maximum shared weight can be improved to one by treewidth plus \emph{maximum weight}.

One last future research direction we would like to highlight is the search for regimes in which the complexity of EF and EFX \textsc{Orientation} differs.
We showed that one such case regards instances with binary valuations, while the two problems have similar behavior in the case of
polynomially-bounded weights and small vertex cover/treewidth.
In which other (valuation or structural) regime does the complexity of the two problems differ?

\subsubsection*{Acknowledgements}

We are grateful to our PhD advisors, Michael Lampis, Aris Pagourtzis and Alkmini Sgouritsa for valuable discussions and guidance regarding the problems studied in this work.

S.K., C.P., and M.S. were partially supported by project MIS 5154714 of the National Recovery and Resilience Plan Greece 2.0 funded by the European Union under the NextGenerationEU Program.

\medskip

\noindent
E.N. and M.V. were partially supported by the ANR project ANR-21-CE48-0022 (S-EX-AP-PE-AL).

\medskip

\noindent
S.K. and M.V. were partially supported by the project COAL-GAS,
financed under the Global Seed Fund program of Université Paris Sciences et Lettres (PSL) and by the CNRS.

\medskip

\noindent
S.K., C.P., and M.V. were partially supported by the Partenariat Hubert Curien France – Grèce 2025 programme ``Aristote'', funded by the Ministry of Europe and Foreign Affairs (MEAE) and the Ministry of Higher Education, Research, and Innovation (MESRI) of France, and the State Scholarships Foundation (IKY) of Greece, via the Cooperation and Cultural Action Service of the French Embassy, and the French Institute of Greece.






\bibliographystyle{plainurl}
\bibliography{bibliography}

\end{document}